\newtheorem{openproblem}{Open Problem}
\newtheorem{conj}{Conjecture}
\newcommand{\PMG}{\textbf{PMG}}
\newcommand\ERIC[1]{{\color{black} #1}}
\tikzset{
 treenode/.style = {shape=rectangle, rounded corners,
      draw, align=center},
 root/.style  = {treenode, font=\Large, bottom color=red!30},
 op/.style  = {treenode, font=\ttfamily\normalsize},
 vertex/.style = {circle,draw}
}
\author{
  Michel Habib\inst{1}
  \and
  Lalla Mouatadid \inst{2}
  \and
  Eric Sopena \inst{3}
  \and 
  Mengchuan Zou \inst{1}
}
\institute{
  IRIF, UMR 8243 CNRS \& Paris University, Paris, France 
  \and
  Department of Computer Science, University of Toronto, Toronto, ON, Canada 
 \and
 Univ. Bordeaux, CNRS, Bordeaux INP, LaBRI, UMR5800, F-33400 Talence, France
 }
\title{$(\alpha, \beta)$-Modules in Graphs\thanks{This work is supported by the ANR-France Project Hosigra (ANR-17-CE40-0022). Preliminary results of this work were presented in \cite{habib2020approximating}.} }
\begin{document}
\maketitle 
\begin{abstract}
	Modular Decomposition focuses on repeatedly identifying a module $M$ (a collection of vertices that shares \textbf{exactly} the same neighbourhood outside of $M$) and collapsing it into a single vertex. 
    This notion of \emph{exactitude of neighbourhood} is very strict, especially when dealing with real world graphs. 
    
    We study new ways to relax this exactitude condition.  
    However, generalizing modular decomposition is far from obvious. 
    Most of the previous proposals lose algebraic properties of modules and thus most of the nice algorithmic consequences. 
    
    We introduce the notion of an \textbf{\boldmath$(\alpha, \beta)$-module}, a relaxation that allows a bounded number of errors in each node and maintains some of the algebraic structure. 
    It leads to a new combinatorial decomposition with interesting properties. 
    Among the main results in this work, we show that minimal $(\alpha, \beta)$-modules can be computed in polynomial time, and that every graph admits an     $(\alpha, \beta)$-modular decomposition tree, thus generalizing Gallai's Theorem (which corresponds to the case for $\alpha=\beta=0$). 
    Unfortunately we give evidence that computing such a decomposition tree can be difficult. 
    
   \end{abstract}
\section{Introduction}
\label{sec:introduction}
First introduced for undirected graphs by Gallai in \cite{Gal67} to analyze the structure of comparability graphs, modular decomposition has been used and defined in many areas of discrete mathematics, including 2-structures, automata, partial orders, set systems, hypergraphs, clutters, matroids, boolean and submodular functions \cite{EhrenfeuchtHR95,ER90,F91,iwocahypergraph}.
For a survey on modular decomposition, see~\cite{MR84} and for its algorithmic aspects \cite{HabibP10}. 
Since they have been rediscovered in many fields, modules appear under various names in the literature, they have been called intervals, externally related sets, autonomous sets, partitive sets, homogeneous sets, and clans. In most of the above examples the family of modules of a given graph yields a kind of partitive family \cite{BHLM09,Bui-XuanHR12,CHM81}, and therefore leads to a unique modular decomposition tree that can be computed efficiently. 

Roughly speaking, elements of a module $M$ behave exactly the same with respect to elements outside of $M$. 
Thus a module can be contracted to a single element without losing neighbourhood and connectivity information. 
This technique has been used to solve many optimization problems and has led to a number of elegant graph algorithms, see for instance~\cite{Moh85}. 
Other direct applications of modular decomposition appear in areas such as computational protein-protein interaction networks and graph drawing~\cite{GagneurKBC04,PV06}. 
Recently, new applications have appeared in the study of networks in social sciences~\cite{Serafino13}, where a module is considered as a regularity or a community that has to be detected and understood. 

Although it is well known that almost all graphs have no non-trivial modules~\cite{mohring1984almost}, some graphs that arise from real data seem to have many non-trivial modules~\cite{NABTI2017}. 
How can we explain such a phenomenon? 
It could be that the context in which this real data is generated has a clustering structure; but it could also be because we reach some known regularities as predicted by Szemer\'edi's Regularity Lemma~\cite{Szemeredi75}. 
In fact for every $\epsilon > 0$ Szemer\'edi's lemma asserts the existence of an $n_0$ such that all undirected graphs with more than $n_0$ vertices admit an $\epsilon$-regular partition of its vertices. 
Such a partition is a kind of an \emph{approximate} modular decomposition, and linear time algorithms for \emph{exact} modular decomposition are known~\cite{HabibP10}.

\medskip
\noindent
\emph{Our results.}
In this paper we introduce and study a new generalization of modular decomposition by relaxing the strict neighbourhood condition of modules with a tolerance of some errors (missing or extra edges). 
In particular, we define an $(\alpha, \beta)$-module to be a set $M$ 
whose elements behave exactly the same with respect to elements outside of $M$, except that each outside element can have either
at most $\alpha$ missing edges or at most $\beta$ extra edges connecting it to $M$.
In other words, an $(\alpha, \beta)$-module $M$ can be turned into a module by adding at most $\alpha$ edges, or deleting at most $\beta$ edges, at each element outside $M$.
In particular, we recover the standard modular decomposition when $\alpha = \beta = 0$. 

This new combinatorial decomposition is not only theoretically interesting but also can lead to practical applications. We first prove that every graph admits an $(\alpha, \beta)$-modular decomposition tree which is a kind of generalization of Gallai's Theorem. But by no means such a tree is unique and we also give evidence that finding such a tree could be NP-hard. On the algorithmic side we propose a polynomial algorithm to compute a covering of the vertex set by minimal $(\alpha, \beta)$-modules with a bounded overlap, in $O(m \cdot n^{\alpha +\beta +1})$ time.
For the bipartite case, when we restrict $(\alpha, \beta)$-modules on one side of the bipartition, we completely compute all these $(\alpha, \beta)$-modules. In particular, we give an algorithm that computes a covering of the vertices of a bipartite graph in $O(n^{\alpha +\beta}(n + m))$ time, using maximal $(\alpha, \beta)$-modules. This can be of great help for community detection in bipartite graphs.
  
\medskip
\noindent
\emph{Organization of the paper.} 
Section~\ref{sec:background} covers the necessary background on standard modular decomposition, introduces $(\alpha, \beta)$-modules and illustrates various applications of $(\alpha, \beta)$-modular decomposition. 
Sections~\ref{sec:structure} covers structural properties of $(\alpha,\beta)$-modules and the NP-hardness results. 
Section~\ref{sec:algorithms} contains all the algorithmic results, in particular the computation of minimal $(\alpha, \beta)$-modules as well as $(\alpha, \beta)$-primality testing. 
Section~\ref{sec:applications} covers the complete determination of $(\alpha, \beta)$-modules that lay one side of a bipartite graph.
We conclude in Section~\ref{sec:conclusion} with an alternate relaxation of modular decomposition.

\section{Modular Decomposition: A Primer}
\label{sec:background}

Let $G = (V(G), E(G))$ be a graph on $|V(G)| = n$ vertices and $|E(G)| = m$ edges. 
For two adjacent vertices $u, v \in V(G)$, $uv$ denotes the edge in $E(G)$ with endpoints $u$ and $v$. 
All the graphs considered here are simple (no loops, no multiple edges), finite and undirected. 
The complement of a graph $G = (V,E)$ is the graph $\overline{G} = (V(G), \overline{E(G)})$ where $uv \in \overline{E(G)}$ if and only $uv \notin E(G)$. 
We often refer to the sets of vertices and edges of $G$ as $V$ and $E$ respectively, if $G$ is clear from the context. 

For a set of vertices $X \subseteq V$, we denote by $G(X)$ the induced subgraph of $G$ generated by $X$.
The set $N(v)= \{u : uv \in E\}$ is the \emph{neighbourhood} of $v$ and the set $\overline{N(v)}= \{u : uv \notin E\}$ the \emph{non-neighbourhood} of $v$. 
This notation can also be extended to sets of vertices: 
for a set $X\subseteq V$, we let
$$N(X) = \{x \in V\setminus X : \exists y \in X \text{ and } xy \in E(G) \},$$
and 
$$\overline{N}(X) = \{x \in V\setminus X : \forall y \in X,\ xy \notin E(G) \}.$$ 
Note here that $N(X)$ is not the union of the sets $N(x)$ for all $x\in X$, but the set of vertices outside from $X$ that have a neighbour in $X$.

Two vertices $u$ and $v$ are called \emph{false twins} if $N(u) = N(v)$, and \emph{true twins} if $N(u)\cup\{u\} = N(v)\cup\{v\}$. 

A \emph{Moore family} on a set $X$ is a collection of subsets of $X$ that contains $X$ itself and is closed under intersection.

\begin{definition}
\label{def:moduleClassique}
    A \textbf{module}  of a graph $G = (V, E)$ is a set of vertices $M \subseteq V$ that satisfies 
$$\forall x, y \in M,\ N(x) \setminus M = N(y) \setminus M. $$
\end{definition}

In other words, $V \setminus M$ is partitioned into two parts $A, B$ such that there is a complete bipartite subgraph between $M$ and $A$, and no edges between $M$ and $B$. 
Observe that we have $A=N(M)$, and $B=\overline{N}(M)$. 
It is to easy to see that every two vertices within a module are either false twins or true twins. 

A single vertex $\{v : v \in V\}$ is always a module, and so is the set $V$.
Such modules are called \emph{trivial modules}. 
A graph with only trivial modules is called a \emph{prime} graph. 
A module is \emph{maximal} if it is not contained in any other non-trivial module. 

A \emph{modular decomposition tree} of a graph $G$ is a tree $T(G)$ that captures the decomposition of $G$ into modules. 
The leaves of $T(G)$ represent the vertices of $G$, the internal nodes of $T(G)$ capture operations on modules, and are labelled \emph{parallel, series,} or \emph{prime}. 
A parallel node captures the disjoint union of its children, whereas a series node captures the full 
connection of its children (i.e., connect every vertex of its left child to every vertex of its right child).
Parallel and series nodes are often referred to as \emph{complete} nodes. 
Fig.~\ref{fig:first_example} illustrates a graph with its modular decomposition tree. 

\begin{figure}[t]
    \centering
    \begin{tikzpicture}[scale=0.80]
        \coordinate(a) at (-7,0);
        \coordinate(b) at (-6,1);
        \coordinate(c) at (-5.2,0);
        \coordinate(d) at (-6,-1);
        \coordinate(e) at (-4.5,0);
        \coordinate(f) at (-3.7,0.8);
        \coordinate(g) at (-3.7,-0.8);
        \coordinate(h) at (-2.8, 0);
        
        \draw(a)node[left]{$a$} node{$\bullet$};
        \draw(b)node[left]{$b$} node{$\bullet$};
        \draw(c)node[below]{$c$} node{$\bullet$};
        \draw(d)node[left]{$d$} node{$\bullet$};
        \draw(e)node[below]{$e$} node{$\bullet$};
        \draw(f)node[right]{$f$} node{$\bullet$};
        \draw(g)node[right]{$g$} node{$\bullet$};
        \draw(h)node[below]{$h$} node{$\bullet$};
        
        \draw (a) -- (b);
        \draw (a) -- (c);
        \draw (a) -- (d);
        \draw (b) -- (c);
        \draw (c) -- (d);
        \draw (e) -- (b);
        \draw (e) -- (c);
        \draw (e) -- (d);
        \draw (e) -- (f);
        \draw (e) -- (g);
        \draw (f) -- (h);
        \draw (g) -- (h);

        \filldraw[fill=red!30, draw=black, opacity=0.2] (-5.8,0) ellipse (0.7cm and 1.5cm);
        \filldraw[fill=red!30, draw=black, opacity=0.2] (-3.75,0) ellipse (0.5cm and 1.2cm);

        \coordinate(v) at (-1.5,-1);
        \coordinate(x) at (1,-1);
        \coordinate(z) at (3,-1);
        \coordinate(m) at (0.5,-1);
        \coordinate(n) at (1.5,-1);
        \coordinate(o) at (2,-1);
        \coordinate(p) at (-0.75,-1);
        \coordinate(q) at (-0.25,-1);

        \node[style={rectangle, rounded corners, draw, fill=yellow!10}] (u) at (1,2.5) {\small{\textbf{$G$ Prime}}};
        \draw(v)node[below]{$a$} node{$\bullet$};
        \node[style={rectangle, rounded corners, draw, fill=blue!10}] (w) at (0.3,0.8) {\small{\textbf{S}}};
        \draw(x)node[below]{$e$} node{$\bullet$};
        \node[style={rectangle, rounded corners, draw, fill=blue!10}] (y) at (1.7,-0.2) {\small{\textbf{P}}};
        \draw(z)node[below]{$h$} node{$\bullet$};
        \node[style={rectangle, rounded corners, draw, fill=blue!10}] (l) at (-0.5,-0.2) {\small{\textbf{P}}};
        \draw(m)node[below]{$c$} node{$\bullet$};
        \draw(n)node[below]{$f$} node{$\bullet$};
        \draw(o)node[below]{$g$} node{$\bullet$};
        \draw(p)node[below]{$b$} node{$\bullet$};
        \draw(q)node[below]{$d$} node{$\bullet$};

        \draw (u) -- (v);
        \draw (u) -- (w);
        \draw (u) -- (x);
        \draw (u) -- (y);
        \draw (u) -- (z);
        \draw (w) -- (l);
        \draw (w) -- (m);
        \draw (y) -- (n);
        \draw (y) -- (o);
        \draw (l) -- (p);
        \draw (l) -- (q);
        
    \end{tikzpicture}
    \caption{A graph $G$ (left) and its modular decomposition tree (right). Maximal modules are red, series and parallel nodes are labelled in the tree as $S$ and $P$ respectively.}
    \label{fig:first_example}
\end{figure}
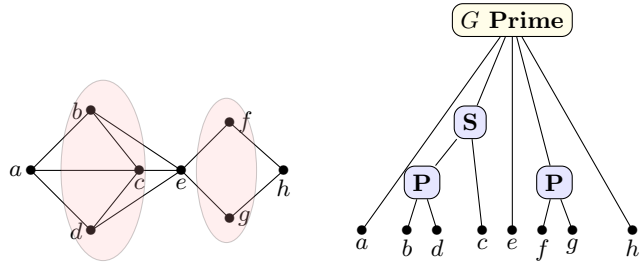

By the Modular Decomposition Theorem~\cite{CHM81,Gal67}, every graph admits a \emph{unique} modular decomposition tree. 
Other combinatorial objects also admit unique decomposition trees, partitive families in particular. 

Two sets $A$ and $B$ \emph{overlap} if $A \cup B \neq \emptyset$, $A \setminus B \neq \emptyset$, and $B \setminus A \neq \emptyset$. 
In a family of subsets $\cal F$ of a ground set $V$, a set $S \in \mathcal{F}$ is \emph{strong} if $S$ does not overlap with any other set in $\cal F$. 
We denote by $\Delta$  the \emph{symmetric difference} of two sets: 
$$A \Delta B = \{ a : a \in A \setminus B \} \cup \{ b : b \in B \setminus A \}.$$

\begin{definition}[\cite{CHM81}]
    \label{def:partitive_family}
    A family of subsets $\cal F$ over a ground set $V$ is \textbf{partitive} if
    \begin{itemize}
        \item [(i)] $\emptyset$, $V$, and all singletons $\{ x : x \in V \}$ belong to $\cal F$, and
        \item [(ii)] $\forall A, B \in \mathcal{F}$,
        if $A\cap B\neq\emptyset$ then 
        $A \cup B \in \mathcal{F}$, $A \cap B \in \mathcal{F}$, $A \setminus B \in \mathcal{F}$, and $A \Delta B \in \mathcal{F}$.
    \end{itemize}
\end{definition}

Partitive families play a fundamental role in combinatorial decomposition~\cite{Bui-XuanHR12,CHM81}. 
Every partitive family admits a unique decomposition tree with only complete and prime nodes. 
The strong elements of $\cal F$ form a tree ordered by the inclusion relation~\cite{CHM81}.

A \emph{complement reducible} graph is a graph whose decomposition tree has no prime nodes, that is, the graph is totally decomposable into parallel and series nodes only. 
Complement reducible graphs are also known as \emph{cographs}, and are exactly the $P_4$-free graphs~\cite{Seinsche74}. 
A modular decomposition tree of a cograph is often referred to as a \emph{cotree}. 
Cographs have been widely studied, and many typical $NP$-hard problems (colouring, independent set, etc.) become tractable on cographs~\cite{corneil1981complement}.

\subsection{Generalizations of Modular Decomposition / Motivation}

Finding a non-trivial tractable generalization of modules is not an easy task. 
Indeed, when trying to do so, we are faced with two main difficulties.

The first one is to obtain a pseudo-generalization.
Suppose for example that we change the definition of a module into:  $\forall x, y \in M$, $N^*(x) \setminus M=N^*(y) \setminus M$, where $N^*(x)$ can mean something like ``vertices at distance at most $k$'' or ``vertices joined by an odd path'', etc. 
In many of these scenarios, it turns out that the problem transforms itself into the computation of precisely the modules of some auxiliary graph built from the original one. 
Some work in this direction avoiding this drawback can be found in \cite{Bui-XuanHLM09}.

The second difficulty is $NP$-hardness. Consider the notion of \emph{roles} defined in sociology, where two vertices play the same role in a social network if they have the same set of colours in their neighbourhood. 
In this scenario, if a colouring of the vertices is given, then one can compute these \emph{roles} in polynomial time. Otherwise, the problem is indeed a colouring problem which is $NP$-hard to compute~\cite{FialaP05}.

In this work, we consider two variations of the notion of modules, both of which trying to avoid these two difficulties. Some of these new modules are polynomial to compute, and we believe they are worth studying further. We focus on the most promising relaxation, namely what we call $(\alpha, \beta)$-modules.

Our initial idea was to allow some ``errors'' by saying that at most $k$ edges (for some fixed integer $k$) could be missing in the complete bipartite subgraph between $M$ and $N(M)$, denoted $(M, N(M))$, and, symmetrically, that at most $k$ extra edges can exist between $M$ and $\overline{N}(M)$. 
But by doing so, we loose most of the nice algebraic properties of modules which yield an underlying partitive family. 
Furthermore, most modular decomposition algorithms are based on these algebraic properties~\cite{CHM81}. 

A second natural idea is to relax the condition on the complete bipartite subgraph $(M, N(M))$, for example by asking for a graph that does not contain any $2K_2$ (two disjoint edges). Unfortunately, as shown in~\cite{RS02}, to test whether a given graph admits such a decomposition is $NP$-complete. 
In fact, in the same work, the authors studied a generalized join decomposition solving a question raised in~\cite{Hsu87} about perfection. 

For all the above reasons and obstacles, we focus on $(\alpha, \beta)$-modules which maintain some algebraic properties and thus allow to obtain nice algorithms.

Intuitively, we want the reader to think of an $(\alpha, \beta)$-module as a subset of vertices that \emph{almost} looks the same from the outside. 
So, if $M$ is an $(\alpha, \beta)$-module, then for all $x, y \in M$, $N(x) \setminus M$ and $N(y) \setminus M$ are the same, with the exception of at most $\alpha + \beta$ ``errors'', where an error is either a missing edge or an extra edge. 
We use the integers $\alpha$ and $\beta$ to bound the number of errors in the adjacency, according to their type.

Formally, we define an $(\alpha, \beta)$-module as follows.

\begin{definition}
\label{def:alpha_beta_module}
    An \textbf{\boldmath$(\alpha, \beta)$-module} of a graph $G = (V, E)$ is a set of vertices $M \subseteq V$ that satisfies 
    \begin{align*}
            \forall x \in V \setminus M,\ |M \cap N(x)| \geq |M| - \alpha \text{ or }  |M \cap N(x)| \leq \beta.
    \end{align*}
\end{definition}

In other words, $M$ can be turned into a (standard) module by adding at most $\alpha$ edges or deleting at most $\beta$ edges at each vertex outside $M$.

This notion of missing or extra edges, that we call  \emph{$(\alpha,\beta)$-errors}, finds application naturally in various fields, from data compression and exact encodings to approximation algorithms.

Indeed, modular decomposition is often presented as an efficient way to encode a graph. 
This encoding property is preserved under the $(\alpha, \beta)$-modules. 

We want to be able to contract a non-trivial $(\alpha, \beta)$-module (to be precisely defined later, see Definition~\ref{def:trivial_alpha_beta_module}) into a single vertex while keeping \emph{almost} the entirety of the original graph, and then apply induction on the decomposition. 
To this end, for a graph $G = (V, E)$, let $M$ be a non-trivial $(\alpha, \beta)$-module with $X = N(M)$ and $Y = \overline{N}(M)$. 
If we want an exact encoding of $G$, we can contract $M$ into a unique vertex $m$ adjacent to every vertex in $X$, and non-adjacent to any vertex in $Y$. 
We then keep track of the subgraph $G(M)$ and the errors that potentially arose from the missing edges in $(M, X)$ and the extra edges in $(M, Y)$. 
This new encoding has at least $|X| \cdot (|M| - \alpha -1)$ edges  
less than the original encoding in the worst case and $|X| \cdot (|M| -1)$ when $M$ is a module.

A second natural and useful application of $(\alpha, \beta)$-modules concerns approximation algorithms. 
Similarly to how cographs are the totally decomposable graphs with respect to standard modular decomposition, we can define $(\alpha, \beta)$-cographs as the totally decomposable graphs with no $(\alpha, \beta)$-prime graphs (see Definition~\ref{def:alpha_beta_prime}). 
Now consider the classical colouring and independent set programs on cographs. 
The linear time algorithms for these problems both use modular decomposition. 
Roughly speaking, the algorithms compute a modular decomposition tree, and keep track of the series and parallel internal nodes of the cotree by scanning the tree from the leaves to the root. 
Now for $\alpha + \beta \leq 1$, we can get a simple 2-approximation algorithm for $(\alpha, \beta)$-cographs for both colouring and independent set, just by summing over all the $(\alpha,\beta)$- errors.

\section{Structural Properties of $(\alpha, \beta)$-Modules} 
\label{sec:structure}
In order to maintain some of the algebraic properties of modules, and avoid running into the $NP$-complete scenarios previously mentioned, the $(\alpha, \beta)$ generalization of modules seems to be a good compromise. 

We emphasize a few points concerning $(\alpha, \beta)$-modules. 
Note first that we tolerate $\alpha$ or $\beta$ ``error-edges'' per vertex outside the module, depending on how this vertex is connected to the $(\alpha, \beta)$-module, and not $\alpha + \beta)$ error-edges \emph{per module}.
Secondly, observe that when  $\alpha=\beta=0$, we recover the standard definition of modules (see Definition~\ref{def:moduleClassique}), which can be rephrased as follows.

\begin{definition}
\label{def:classic_2}
    A module of a graph $G = (V, E)$ is a set of vertices $M \subseteq V$ that satisfies 
    \begin{align*}
            \forall x \in V \setminus M,\ M \cap N(x) = \emptyset \text{ or } M \cap N(x) = M.
    \end{align*}    
\end{definition}

Of course we only consider cases for which $\max(\alpha, \beta) < |V| - 1$. 
Fig.~\ref{fig:second_example} illustrates an example of a graph with a $(1, 1)$-module. 

\begin{figure}[ht]
    \begin{center}
    \begin{tikzpicture}
        \coordinate(a) at (1,1);
        \coordinate(b) at (2,1);
        \coordinate(c) at (3,1);
        \coordinate(d) at (1,0);
        \coordinate(e) at (2,0);
        \coordinate(f) at (3,0);
        \coordinate(g) at (4,1);
        
        \draw(a)node[above]{$a$} node{$\bullet$};
        \draw(b)node[above]{$b$} node{$\bullet$};
        \draw(c)node[above]{$c$} node{$\bullet$};
        \draw(d)node[below]{$d$} node{$\bullet$};
        \draw(e)node[below]{$e$} node{$\bullet$};
        \draw(f)node[below]{$f$} node{$\bullet$};
        \draw(g)node[below]{$g$} node{$\bullet$};
        
        \draw (a)--(b)--(c);
        \draw (d)--(a);
        \draw (d)--(b);
        \draw (e)--(a);
        \draw (e)--(b);
        \draw (e)--(c);
        \draw (e)--(g);
        \draw (f)--(a);
        \draw (f)--(b);
        \draw (f)--(c);
    \end{tikzpicture}
    \end{center}
    \caption{The set $\{d, e, f\}$ is not a standard module, nor a $(1,0)$ or a $(0,1)$-module, only a $(1, 1)$-module.}
    \label{fig:second_example}
\end{figure}
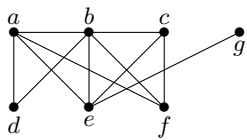

Let us begin with some simple properties that directly follow from Definition~\ref{def:alpha_beta_module}.

\begin{proposition} 
\label{prop:basic}
    If $M$ is an $(\alpha, \beta)$-module of $G$, then the following holds. 
    \begin{enumerate}
        \item\label{prop:basic_1} $M$ is an $(\alpha', \beta')$-module of $G$, for every $\alpha \leq \alpha'$ and $\beta \leq \beta'$. 
        \item\label{prop:basic_2} $M$ is a $(\beta, \alpha)$-module of $\overline{G}$. 
        \item\label{prop:basic_3} $M$ is an $(\alpha, \beta)$-module of every induced subgraph $G(N)$ of $G$ with $M \subseteq N$.
        \item\label{prop:basic_4}\label{recursive} Every $(\alpha, \beta)$-module of $G(M)$ is an $(\alpha, \beta)$-module of $G$. 
    \end{enumerate}
\end{proposition}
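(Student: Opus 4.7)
The plan is to prove all four items by direct verification from Definition~\ref{def:alpha_beta_module}; no machinery beyond counting is needed. I would treat the four items in the order stated, each as a short independent argument.

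For item~\ref{prop:basic_1}, the plan is pure monotonicity. For every $x \in V \setminus M$, the hypothesis gives $|M \cap N(x)| \geq |M| - \alpha$ or $|M \cap N(x)| \leq \beta$. Since $\alpha \leq \alpha'$, the first inequality only weakens to $|M \cap N(x)| \geq |M| - \alpha'$, and since $\beta \leq \beta'$, the second weakens to $|M \cap N(x)| \leq \beta'$. Thus the $(\alpha',\beta')$-condition holds verbatim.

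For item~\ref{prop:basic_2}, the key identity is that for $x \in V \setminus M$ the sets $M \cap N_G(x)$ and $M \cap N_{\overline{G}}(x)$ partition $M$, so $|M \cap N_{\overline{G}}(x)| = |M| - |M \cap N_G(x)|$. The two disjuncts of the definition therefore swap: if $|M \cap N_G(x)| \geq |M| - \alpha$ then $|M \cap N_{\overline{G}}(x)| \leq \alpha$, and if $|M \cap N_G(x)| \leq \beta$ then $|M \cap N_{\overline{G}}(x)| \geq |M| - \beta$, which is precisely the $(\beta,\alpha)$-condition in $\overline{G}$.

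Item~\ref{prop:basic_3} is even simpler: for $x \in N \setminus M$, we have $N_{G(N)}(x) \cap M = N_G(x) \cap M$ because $M \subseteq N$, so the inequality holding in $G$ holds unchanged in $G(N)$. For item~\ref{prop:basic_4}, let $M'$ be an $(\alpha,\beta)$-module of $G(M)$ and consider any $x \in V \setminus M'$. If $x \in M \setminus M'$, the condition is given by hypothesis inside $G(M)$. If $x \in V \setminus M$, I would apply the $(\alpha,\beta)$-condition to $M$ and use $M' \subseteq M$ to transfer it: in the first case, $|M \setminus N(x)| \leq \alpha$ forces $|M' \setminus N(x)| \leq \alpha$, i.e.\ $|M' \cap N(x)| \geq |M'| - \alpha$; in the second case, $|M \cap N(x)| \leq \beta$ forces $|M' \cap N(x)| \leq \beta$. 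Item~\ref{prop:basic_4} is the only one where the argument uses more than a one-line rewrite, and it is also the place where one must be mildly careful not to confuse $|M| - \alpha$ with $|M'| - \alpha$; that bookkeeping is essentially the only possible source of error and is not really an obstacle.
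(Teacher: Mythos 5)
Your proposal is correct and follows essentially the same route as the paper: a direct, case-by-case verification of Definition~\ref{def:alpha_beta_module} for each item. The only difference is that you spell out the bookkeeping that the paper leaves implicit, in particular rewriting $|M\cap N(x)|\geq |M|-\alpha$ as $|M\setminus N(x)|\leq\alpha$ so that the condition transfers to the subset $M'$ in item~\ref{prop:basic_4}, which is a welcome clarification rather than a deviation.
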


\begin{proof}\mbox{}
\begin{enumerate}
\item Taking $\alpha'$ and $\beta'$ such that $\alpha \leq \alpha'$ and $\beta \leq \beta'$ can only relax the module conditions.
\item Moving to the complement just interchanges the roles of $\alpha$ and $\beta$ in the definition.
\item If the $(\alpha, \beta)$-module conditions are satisfied for all vertices in $V(G) \setminus M$, then they are satisfied for all vertices in $V(G) \setminus N$ for $M \subseteq N$. Therefore $M$ is an $(\alpha, \beta)$-module of the induced subgraph $G(N)$.
\item
Let $N$ be an $(\alpha, \beta)$-module of the subgraph $G(M)$. So every vertex in $M \setminus N$ satisfies the $(\alpha, \beta)$-module conditions. Since $M$ is supposed to be an $(\alpha, \beta)$-module,  every vertex in $V(G) \setminus M$ satisfies the $(\alpha, \beta)$-module conditions for $M$ and therefore also for $N \subseteq M$.
\end{enumerate}
\end{proof}

\begin{definition}
\label{def:alpha_beta_voisinage}

    Let $G = (V, E)$ be a graph and $A \subseteq V$ be a set of vertices.
    The \textbf{\boldmath$\alpha$-neighbourhood} and \ERIC{\textbf{\boldmath$\beta$-non-neighbourhood}} of $A$ are, respectively,
    \begin{align*}
        N_{\alpha}(A) &= \{ x \notin A : |N(x) \cap A| \geq |A|-\alpha \}, \text{ and}
        \\
        \overline{N}_{\beta}(A) &= \{ x \notin A : |N(x) \cap A| \leq \beta \}.
    \end{align*}
    \ERIC{Moreover, if $x\in N_\alpha(A)$ (resp. $x\in \overline{N}_{\beta}(A)$), we say that $x$ is an \textbf{\boldmath$\alpha$-neighbour} of $A$ (resp. a \textbf{\boldmath$\beta$-non-neighbour} of $A$)
    and that $x$ is \textbf{\boldmath$\alpha$-adjacent} (resp. \textbf{\boldmath$\beta$-non-adjacent} to every vertex of $A$.
   }
    \end{definition}

\begin{definition}
\label{def:alpha_beta_splitter}
    Let $G = (V, E)$ be a graph and $A \subseteq V$ be a set of vertices. 
    A vertex $z \notin A$ is an \textbf{\boldmath$(\alpha, \beta)$-splitter} for $A$ if
    \begin{align*}
        \beta <|N(z)\cap A|< |A|-\alpha.
    \end{align*}
    We denote by $S_{\alpha, \beta}(A)$ the set of $(\alpha, \beta)$-splitters of $A$.
\end{definition}

Hence, a set $A$ is an $(\alpha, \beta)$-module if and only if $S_{\alpha, \beta}(A)= \emptyset$. 
As an immediate consequence we have the following easy facts.

\begin{lemma}
\label{lem:comptages}
    For every graph $G=(V,E)$ and every set of vertices $A \subseteq V$, the following holds.
        \begin{enumerate} 
        \item\label{lem:comptages.1} $N_{\alpha}(A) \cup \overline{N}{_{\beta}(A)} \cup S_{\alpha, \beta}(A) = V\setminus A$. 
        \vspace{0.2cm}
        \item\label{lem:comptages.2} If $|A| \geq \alpha + \beta +1$, 
            then  $N_{\alpha}(A) \cap \overline{N}_{\beta}(A)=\emptyset$.
        \vspace{0.2cm}
        \item\label{lem:comptages.3} If $|A| \leq \alpha + \beta +1$, 
            then $S_{\alpha, \beta}(A)=\emptyset$.
        \vspace{0.2cm}
        \item\label{lem:comptages.4} If $|A| = \alpha + \beta +1$,
           then $N_{\alpha}(A)$ and $\overline{N}_{\beta}(A)$ partition $V\setminus A$.
        \vspace{0.2cm}        
        \item\label{lem:comptages.5} If $A$ is an $(\alpha, \beta)$-module of $G$ and $|A| \geq \alpha + \beta +1$, then $N_{\alpha}(A)$ and  $\overline{N}_{\beta}(A)$ partition $V\setminus A$.
    \end{enumerate}

\end{lemma}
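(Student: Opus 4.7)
The plan is to prove the five parts directly from Definitions~\ref{def:alpha_beta_voisinage} and~\ref{def:alpha_beta_splitter}, observing that for any vertex $x\in V\setminus A$ the single integer parameter $k(x):=|N(x)\cap A|$ controls membership in each of the three sets, and the hypotheses on $|A|$ force arithmetic inequalities between $k(x)$, $\alpha$, and $\beta$.

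For part~\ref{lem:comptages.1}, I would fix $x\in V\setminus A$ and do a trichotomy on $k(x)$: either $k(x)\ge |A|-\alpha$ (so $x\in N_\alpha(A)$), or $k(x)\le \beta$ (so $x\in \overline{N}_\beta(A)$), or $\beta<k(x)<|A|-\alpha$ (so $x\in S_{\alpha,\beta}(A)$). These three cases exhaust all integer values of $k(x)$, yielding the union. For part~\ref{lem:comptages.2}, if $x\in N_\alpha(A)\cap \overline{N}_\beta(A)$, then $|A|-\alpha\le k(x)\le \beta$ gives $|A|\le \alpha+\beta$, contradicting $|A|\ge \alpha+\beta+1$. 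For part~\ref{lem:comptages.3}, if $x\in S_{\alpha,\beta}(A)$, then $\beta+1\le k(x)\le |A|-\alpha-1$, so $|A|\ge \alpha+\beta+2$, again contradicting the hypothesis.

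Part~\ref{lem:comptages.4} is then the combination of parts~\ref{lem:comptages.1}, \ref{lem:comptages.2}, and~\ref{lem:comptages.3}: the hypothesis $|A|=\alpha+\beta+1$ simultaneously satisfies both $|A|\ge \alpha+\beta+1$ and $|A|\le \alpha+\beta+1$, so the three sets of part~\ref{lem:comptages.1} reduce to a partition of $V\setminus A$ into $N_\alpha(A)$ and $\overline{N}_\beta(A)$. Finally, for part~\ref{lem:comptages.5}, I would use the observation made right before the lemma that $A$ is an $(\alpha,\beta)$-module if and only if $S_{\alpha,\beta}(A)=\emptyset$; together with part~\ref{lem:comptages.1} this already gives $V\setminus A = N_\alpha(A)\cup \overline{N}_\beta(A)$, and part~\ref{lem:comptages.2} under the hypothesis $|A|\ge \alpha+\beta+1$ makes this union disjoint.

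There is no real obstacle here: the whole lemma is a set of bookkeeping facts reducing to elementary integer inequalities between $k(x)$, $\alpha$, $\beta$, and $|A|$. The only minor care needed is in part~\ref{lem:comptages.3}, where one must remember that $k(x)$ is an integer, so the strict inequalities $\beta<k(x)<|A|-\alpha$ translate to $\beta+1\le k(x)\le |A|-\alpha-1$ rather than anything involving non-integer bounds.
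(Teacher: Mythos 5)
Your proposal is correct and follows essentially the same route as the paper: a trichotomy on $k(x)=|N(x)\cap A|$ for part~1, the arithmetic contradictions $|A|\le\alpha+\beta$ and $|A|\ge\alpha+\beta+2$ for parts~2 and~3, and the combination of these (plus the characterization of $(\alpha,\beta)$-modules via empty splitter sets) for parts~4 and~5. Nothing is missing.
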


\begin{proof}\mbox{}
    \begin{enumerate}
    \item This directly follows from the definitions of these sets. 
    \item If $x \in N_{\alpha}(A)$, then $|N(x) \cap A| \geq |A|-\alpha \geq \beta +1$ and thus $x \notin \overline{N}_{\beta}(A)$.
    \item If $x \in S_{\alpha, \beta}(A)$, then $|N(z)\cap A|< |A|-\alpha <\beta+1$, a contradiction.
    \item We have $N_{\alpha}(A) \cap \overline{N}_{\beta}(A)=\emptyset$ by Item~2, and $S_{\alpha, \beta}(A) =\emptyset$ by Item~3. The result then follows from Item~1. 

    \item This follows from Items 1 and~2.
    \qed
    \end{enumerate}
\end{proof}

\begin{lemma}
\label{lem:splitter}
For every graph $G=(V,E)$ and every set of vertices $A \subseteq V$, if
    $|A| \leq \alpha+\beta+1$, then $A$ is an $(\alpha, \beta)$-module of $G$.
\end{lemma}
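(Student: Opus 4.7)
The plan is to observe that this lemma is essentially already contained in the material preceding it, and in particular it is an immediate corollary of Item~\ref{lem:comptages.3} of Lemma~\ref{lem:comptages}. Indeed, just after Definition~\ref{def:alpha_beta_splitter} the text records the equivalence that $A$ is an $(\alpha,\beta)$-module if and only if $S_{\alpha,\beta}(A)=\emptyset$, and Item~\ref{lem:comptages.3} asserts precisely that $S_{\alpha,\beta}(A)=\emptyset$ whenever $|A|\leq \alpha+\beta+1$. So the first step is simply to invoke these two facts and chain them.

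For completeness, I would also give the short direct verification, since it is only a one-line integer argument. Fix an arbitrary $x \in V\setminus A$ and set $k=|N(x)\cap A|$. The goal is to show that at least one of the two inequalities $k\geq |A|-\alpha$ and $k\leq \beta$ of Definition~\ref{def:alpha_beta_module} holds. Suppose, toward a contradiction, that both fail; then $k>|A|-\alpha -1$ simultaneously with $k<\beta+1$, which yields the integer squeeze
\[
\beta \;<\; k \;<\; |A|-\alpha.
\]
This forces $|A|-\alpha > \beta$, i.e.\ $|A|\geq \alpha+\beta+2$, contradicting the hypothesis $|A|\leq \alpha+\beta+1$. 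Hence one of the two conditions must hold, and since $x$ was arbitrary, $A$ is an $(\alpha,\beta)$-module of $G$.

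There is really no obstacle here: the statement is a basic sanity check certifying that the $(\alpha,\beta)$-module notion becomes vacuously satisfied for small sets, and the only thing to be careful about is the strictness of the inequalities in Definition~\ref{def:alpha_beta_splitter} (both are strict, which is what makes the integer interval empty when $|A|-\alpha \leq \beta+1$). Therefore I would keep the write-up to two or three lines, either by citing Lemma~\ref{lem:comptages}(\ref{lem:comptages.3}) together with the splitter characterization, or by the direct pigeonhole argument above.
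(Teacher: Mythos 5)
Your proof is correct and follows the same route as the paper, which likewise disposes of the lemma in one line by invoking Lemma~\ref{lem:comptages}.\ref{lem:comptages.3} together with the characterization that $A$ is an $(\alpha,\beta)$-module iff $S_{\alpha,\beta}(A)=\emptyset$. One cosmetic remark: the step ``this forces $|A|-\alpha>\beta$, i.e.\ $|A|\geq\alpha+\beta+2$'' is phrased loosely (the inequality $|A|-\alpha>\beta$ alone only gives $|A|\geq\alpha+\beta+1$; what you actually need, and what the integer squeeze $\beta<k<|A|-\alpha$ does give, is $|A|-\alpha\geq\beta+2$), but you state the correct condition at the end, so the argument stands.
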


\begin{proof}
    Using Lemma~\ref{lem:comptages}.\ref{lem:comptages.3}, $A$ admits no $(\alpha, \beta)$-splitter and  is thus an $(\alpha, \beta)$-module of $G$. 
\qed
\end{proof}

It thus seems that the subsets of size $\alpha + \beta +1$ are crucial to the study of this new decomposition. 
In fact, if $A$ is such a set, then for every vertex $z \notin A$, we have either $z \in N_{\alpha}(A)$ or $z \in \overline{N}_{\beta}(A)$, but not both (Lemma~\ref{lem:comptages}.\ref{lem:comptages.3}). 

\begin{lemma}
\label{lem:splitf}
    If a vertex $s$ is an $(\alpha,\beta)$-splitter for a set $A$, then $s$ is also an $(\alpha,\beta)$-splitter for every set $B \supseteq A$ with $s \notin B$.
\end{lemma}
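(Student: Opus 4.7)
The plan is to prove the statement by directly verifying both inequalities in the definition of an $(\alpha,\beta)$-splitter (Definition~\ref{def:alpha_beta_splitter}) for the set $B$, using the monotonicity of $|N(s)\cap\cdot|$ under set inclusion and writing $B$ as the disjoint union $A \cup (B\setminus A)$.

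First, I would unpack the hypothesis that $s$ is an $(\alpha,\beta)$-splitter for $A$: this says $\beta < |N(s)\cap A| < |A|-\alpha$. To show $s$ is an $(\alpha,\beta)$-splitter for $B$, I need to establish $\beta < |N(s)\cap B| < |B|-\alpha$. The lower bound is immediate: since $A \subseteq B$, we have $N(s)\cap A \subseteq N(s)\cap B$, so $|N(s)\cap B| \geq |N(s)\cap A| > \beta$.

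For the upper bound, I would set $C = B \setminus A$ (note $s \notin B$ so $s \notin C$), and decompose $|N(s)\cap B| = |N(s)\cap A| + |N(s)\cap C|$ and $|B| = |A| + |C|$. The hypothesis $|N(s)\cap A| < |A|-\alpha$ gives $|N(s)\cap A| \leq |A|-\alpha-1$ (integer inequality), while trivially $|N(s)\cap C| \leq |C|$. Adding these, $|N(s)\cap B| \leq |A|-\alpha-1 + |C| = |B|-\alpha-1 < |B|-\alpha$, which is exactly the upper bound required.

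There is no real obstacle here; the lemma is essentially a monotonicity observation — the two inequalities defining a splitter are preserved under enlarging $A$ because a splitter already ``wastes'' the slack $\alpha$ and $\beta$ strictly, and any additional vertices added to form $B$ contribute at most themselves to $|N(s)\cap B|$ and exactly themselves to $|B|$, preserving the strict upper bound; the lower bound is free from inclusion.
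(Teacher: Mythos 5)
Your proof is correct and follows essentially the same route as the paper: the lower bound $\beta < |N(s)\cap B|$ comes from $N(s)\cap A \subseteq N(s)\cap B$, and the upper bound comes from the fact that the non-neighbours of $s$ inside $A$ remain non-neighbours inside $B$, which is exactly what your explicit decomposition $B = A \cup (B\setminus A)$ with the integer inequality $|N(s)\cap A| \le |A|-\alpha-1$ encodes. No gap; your write-up is if anything slightly more careful than the paper's, which states the non-neighbourhood containment with the inclusion reversed while using it correctly.
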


\begin{proof}
Let $s$ be an $(\alpha, \beta)$-splitter of $A$.
We thus have $\beta < |N(s) \cap A| < |A| - \alpha$. 
Now, if $A \subseteq B$ and $s \notin B$, then 
we have $ \beta < |N(s)\cap A| \leq |N(s)\cap B|$.
Similarly, since $N(s)\setminus B \supseteq N(s)\setminus A$, we have $|N(s)\cap B| < |B| - \alpha$.
Therefore, we get $\beta < |N(s)\cap B| < |B| - \alpha$ and $s$ is an $(\alpha, \beta)$-splitter for $B$. 
\qed 
\end{proof}

\begin{theorem}
\label{thm:prepar}
For every graph $G = (V, E)$, the family of $(\alpha, \beta)$-modules of $G$ satisfies the following.
   \begin{enumerate}
     \item
        The set $V$ is an $(\alpha, \beta)$-module of $G$, and 
        every set $A \subseteq V$ with $|A| \leq  \alpha+\beta+1$ is an $(\alpha, \beta)$-module of $G$.         
     \item
        If $A$ and $B$ are two $(\alpha, \beta)$-modules of $G$, then $A \cap B$ is an $(\alpha, \beta)$-module of $G$. Moreover, the $(\alpha, \beta)$-splitters of $A \setminus B$ and $B \setminus A$ can only belong to $A \cap B$. 
    \end{enumerate}

\end{theorem}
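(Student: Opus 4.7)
\bigskip
\noindent\emph{Proof plan.}
The first item is essentially a book-keeping remark. For $V$, the set of outside vertices is empty, so there is nothing to check and $V$ is an $(\alpha,\beta)$-module trivially; the claim for sets of size at most $\alpha+\beta+1$ is exactly Lemma~\ref{lem:splitter}. So the real content lies in the second item, which I would prove by a direct case analysis on the location of an outside vertex $x$.

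Set $C = A \cap B$ and take any $x \in V \setminus C$. The plan is to show that $x$ satisfies one of the two defining inequalities for $C$ by leveraging the corresponding inequality either for $A$ or for $B$. The key (trivial) monotonicity observation is that since $C \subseteq A$ and $C \subseteq B$, one has
\begin{align*}
|N(x) \cap C| &\leq |N(x) \cap A|, \quad |N(x) \cap C| \leq |N(x) \cap B|, \\
|C \setminus N(x)| &\leq |A \setminus N(x)|, \quad |C \setminus N(x)| \leq |B \setminus N(x)|.
\end{align*}
I would then split on $x \in A \setminus B$, $x \in B \setminus A$, or $x \notin A \cup B$. In the first case, $x \notin B$ and the $(\alpha,\beta)$-module property of $B$ gives either $|N(x)\cap B|\le \beta$ (whence $|N(x)\cap C|\le \beta$ by monotonicity) or $|B\setminus N(x)|\le \alpha$ (whence $|C\setminus N(x)|\le \alpha$, i.e.\ $|N(x)\cap C|\ge |C|-\alpha$); the second case is symmetric via $A$. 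For $x \notin A \cup B$, I apply the module property of whichever of $A$ or $B$ forces a useful bound: if either $|N(x)\cap A|\le \beta$ or $|N(x)\cap B|\le \beta$, monotonicity gives $|N(x)\cap C|\le\beta$; otherwise both $A$ and $B$ leave at most $\alpha$ non-neighbours of $x$ and, again by monotonicity applied through either one, $|C\setminus N(x)|\le\alpha$. This exhausts all cases and proves $C$ is an $(\alpha,\beta)$-module.

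For the second part of item~2, I would show that no vertex of $V\setminus A$ can be an $(\alpha,\beta)$-splitter of $D := A\setminus B$; by symmetry the same argument shows no vertex of $V\setminus B$ splits $B\setminus A$. Since $V\setminus A$ is precisely the union of $(B\setminus A)$ and $V\setminus(A\cup B)$, and a splitter of $D$ is by definition outside $D$, this forces splitters of $D$ to live in $A\cap B$, which is what is claimed. The verification for $s\in V\setminus A$ is just the contrapositive of the module property of $A$: if $|N(s)\cap A|\le\beta$ then $|N(s)\cap D|\le\beta$, while if $|A\setminus N(s)|\le\alpha$ then $|D\setminus N(s)|\le\alpha$, i.e.\ $|N(s)\cap D|\ge |D|-\alpha$; in either case the splitter inequalities $\beta<|N(s)\cap D|<|D|-\alpha$ fail.

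I do not expect a serious obstacle here: the whole argument is bookkeeping with two inequalities and the monotonicity of intersection. The only subtlety to watch is keeping the directions right (upper bounds transfer ``downward'' to subsets, lower bounds on complements also transfer ``downward''), which is why I would phrase the monotonicity step explicitly before entering the case analysis rather than redoing it inside each case.
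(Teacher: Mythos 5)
Your proof is correct and follows essentially the same route as the paper: your ``monotonicity observation'' (that each of the two module inequalities transfers from a set to any subset) is exactly the contrapositive of the paper's Lemma~\ref{lem:splitf}, and your case split on whether $x$ lies in $A\setminus B$, $B\setminus A$, or outside $A\cup B$ matches the paper's case analysis on the location of a putative splitter. The only cosmetic difference is that the paper argues by contradiction via splitters and separately dispatches the case where $A$ or $B$ is a trivial module, a distinction your direct argument does not need.
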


\begin{proof}\mbox{}
    \begin{enumerate}
        \item This directly follows from Definition~\ref{def:alpha_beta_module} and Lemma~\ref{lem:splitter}. 
        \item
        First notice that if both $A$ and $B$ 
        are $(\alpha,\beta)$-modules considered in Item~1,
        then  so do $A \cap B$, $A \setminus B$ and $B \setminus A$. 
        Suppose then that the cardinality of both $A$ and $B$ is at least $\alpha+\beta+2$ and at most $|V|-1$.

        If $A \cap B$ has an $(\alpha,\beta)$-splitter outside of $A \cup B$ then, by Lemma~\ref{lem:splitf}, $A$ and $B$ would also have an $(\alpha,\beta)$-splitter, a contradiction. 
        If $A \cap B$ has an $(\alpha,\beta)$-splitter in $B\setminus A$ (resp. in $A\setminus B$) then, by Lemma~\ref{lem:splitf}, again $A$ (resp. $B$) would have an $(\alpha, \beta)$-splitter. Therefore, $A\cap B$ is an $(\alpha, \beta)$-module of $G$.

        Let us now consider $A \setminus B$.
        If $A \setminus B$ has an $(\alpha,\beta)$-splitter in $B \setminus A$ then, by Lemma~\ref{lem:splitf}, $A$ would have a $(\alpha,\beta)$-splitter as well. 
        The same conclusion arises for splitters outside of $A \cup B$.
        Hence, the only possible $(\alpha,\beta)$-splitters for $A \setminus B$ and, similarly, for $B \setminus A$, are in $A \cap B$.
        \qed 
    \end{enumerate}   
\end{proof}

Since the family of $(\alpha, \beta)$-modules is closed under intersection, it yields a notion of \textbf{graph convexity}. 
Given a set $A$, we can compute the minimal (under inclusion) $(\alpha, \beta)$-module $M(A)$ that contains $A$, with strictly more than $\alpha + \beta + 1$ elements, thus computing a \textbf{modular closure} via $(\alpha, \beta)$-splitters. 
Furthermore, the dual cases of $(1,0)$-modules and $(1,0)$-modules seem very interesting.

\begin{definition}
\label{def:trivial_alpha_beta_module}
    An $(\alpha,\beta)$-module $M$ of a graph $G=(V,E)$ is  a \textbf{trivial \boldmath$(\alpha,\beta)$-module} if either $M=V$ or $|M|\leq \alpha+\beta+1$.
\end{definition}

\begin{definition}
\label{def:alpha_beta_prime}
    A graph is an \textbf{\boldmath$(\alpha,\beta)$-prime graph} if it has only trivial $(\alpha,\beta)$-modules.
\end{definition}

Observe here that when $\alpha=\beta=0$, trivial $(\alpha,\beta)$-modules are exactly trivial (standard) modules, and $(\alpha,\beta)$-prime graphs are exactly prime graphs.

From Lemma~\ref{lem:splitter}, we directly get the following result.

\begin{corollary}
\label{cor:degenerate}
    A graph $G = (V, E)$ with $|V| \leq \alpha +\beta +1 $ has only trivial $(\alpha,\beta)$-modules.
\end{corollary}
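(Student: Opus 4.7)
The plan is to observe that Corollary~\ref{cor:degenerate} follows almost immediately from the definition of a trivial $(\alpha,\beta)$-module combined with Lemma~\ref{lem:splitter}. The argument is essentially a size comparison: every subset of $V$ automatically falls within the size range that makes it trivial, so there is nothing more to check.

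Concretely, I would proceed as follows. Let $M \subseteq V$ be any $(\alpha,\beta)$-module of $G$. Since $M \subseteq V$ and by hypothesis $|V| \leq \alpha+\beta+1$, we get $|M| \leq \alpha+\beta+1$. By Definition~\ref{def:trivial_alpha_beta_module}, this means $M$ is a trivial $(\alpha,\beta)$-module. Hence every $(\alpha,\beta)$-module of $G$ is trivial, which is the statement of the corollary.

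As a complementary remark (not needed for the proof but clarifying the situation), Lemma~\ref{lem:splitter} guarantees that in such a small graph, \emph{every} subset $M \subseteq V$ is in fact an $(\alpha,\beta)$-module, since the size bound $|M| \leq \alpha+\beta+1$ already forces $S_{\alpha,\beta}(M)=\emptyset$. Thus the family of $(\alpha,\beta)$-modules here is as large as possible, but consists entirely of trivial ones.

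There is no real obstacle in this proof: the corollary is a direct bookkeeping consequence of the definitions, and the only thing to be careful about is to quote the correct clause of Definition~\ref{def:trivial_alpha_beta_module} (the clause $|M| \leq \alpha+\beta+1$, rather than $M=V$), so the reader sees exactly why triviality is forced.
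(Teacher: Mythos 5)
Your proof is correct and matches the paper's (unstated, one-line) justification: the corollary is indeed immediate from the size clause $|M|\leq\alpha+\beta+1$ of Definition~\ref{def:trivial_alpha_beta_module}. Your remark that Lemma~\ref{lem:splitter} is only needed for the complementary fact (that every subset is actually a module) is an accurate clarification of why the paper cites that lemma.
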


However, we want to distinguish ``truly'' $(\alpha, \beta)$-prime graphs and ``degenerate''
$(\alpha, \beta)$-prime graphs.

\begin{definition}
\label{def:alpha_beta_minimal}
    A graph $G = (V, E)$ is \textbf{\boldmath$(\alpha, \beta)$-degenerate} if $|V| \leq \alpha + \beta + 2$. 
\end{definition}

Let us say that a non-trivial $(\alpha, \beta)$-module $A$ is a \textbf{minimal non-trivial \boldmath$(\alpha, \beta)$-module} if every $(\alpha, \beta)$-module strictly contained in $A$ is trivial.
The following result directly follows from this definition.

\begin{proposition}
\label{prop:minimal}
     If $A$ and $B$ are overlapping minimal non-trivial $(\alpha,\beta)$-modules of a graph $G$,
     then $A \cap B$ is a trivial $(\alpha,\beta)$-module of $G$.
\end{proposition}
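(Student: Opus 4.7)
The proof is almost immediate from the pieces already assembled in the excerpt, so my plan is a short chain of three steps with very little to work out.

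First, I would invoke Theorem~\ref{thm:prepar}.2, which says the intersection of two $(\alpha,\beta)$-modules is again an $(\alpha,\beta)$-module. Applied to $A$ and $B$, this yields that $A\cap B$ is an $(\alpha,\beta)$-module of $G$. This is the only non-trivial ingredient needed, and it has already been established.

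Next, I would use the overlap hypothesis. By definition, $A\setminus B\neq\emptyset$ and $B\setminus A\neq\emptyset$, so $A\cap B$ is a proper subset of both $A$ and of $B$; in particular $A\cap B\subsetneq A$. Since $A$ is a \emph{minimal} non-trivial $(\alpha,\beta)$-module, every $(\alpha,\beta)$-module strictly contained in $A$ must be trivial. Combining this with step one forces $A\cap B$ to be trivial, i.e. either equal to $V$ (impossible, since $A\cap B\subsetneq A\subseteq V$ and $A$ is non-trivial, hence $A\neq V$ only fails if $A=V$, but then $A\cap B=B$ which contradicts $A\setminus B\neq\emptyset$) or of cardinality at most $\alpha+\beta+1$.

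There is essentially no obstacle here; the only thing worth double-checking is that the ``trivial'' alternative $A\cap B=V$ is genuinely ruled out by the overlap assumption, which it is since $A\cap B\subseteq A$ and $A\setminus B$ is nonempty forces $A\cap B\subsetneq A\subseteq V$. Thus the conclusion reduces to $|A\cap B|\le \alpha+\beta+1$, matching the definition of a trivial $(\alpha,\beta)$-module in Definition~\ref{def:trivial_alpha_beta_module}. The whole argument is a one-paragraph consequence of the closure-under-intersection property and the minimality hypothesis, with no case analysis on splitters required.
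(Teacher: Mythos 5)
Your argument is correct and is exactly the one the paper intends: it states that the proposition ``directly follows'' from the definition of a minimal non-trivial $(\alpha,\beta)$-module, the underlying ingredients being precisely closure under intersection (Theorem~\ref{thm:prepar}.2) and the minimality of $A$ applied to the proper sub-module $A\cap B$. Your extra remark ruling out the alternative $A\cap B=V$ is a harmless (and valid) bit of bookkeeping; nothing is missing.
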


From Theorem~\ref{thm:prepar}, we get the following result.

\begin{corollary}
\label{cor:moore_family}
    For every graph $G$, the family of $(\alpha, \beta)$-modules of $G$ is a Moore family. 
\end{corollary}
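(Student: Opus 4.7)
The plan is to unpack the definition of a Moore family and check the two required properties directly from Theorem~\ref{thm:prepar}. Recall that a Moore family on $V$ must (i) contain $V$ itself and (ii) be closed under intersection. So the task reduces to exhibiting these two closure conditions for the family $\mathcal{F}_{\alpha,\beta}(G)$ of $(\alpha,\beta)$-modules of $G$.

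First, I would cite Item~1 of Theorem~\ref{thm:prepar}, which explicitly asserts that $V$ is an $(\alpha,\beta)$-module; this handles condition (i) immediately. Next, I would cite Item~2 of the same theorem, which shows that for any two $(\alpha,\beta)$-modules $A$ and $B$, the intersection $A \cap B$ is again an $(\alpha,\beta)$-module. This is binary closure.

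To lift pairwise closure to closure under arbitrary (finite) intersection, which is what ``closed under intersection'' means in the Moore family definition recalled earlier in the paper, I would argue by a trivial induction: given a finite subfamily $A_1, \dots, A_k \in \mathcal{F}_{\alpha,\beta}(G)$, the intersection $A_1 \cap \dots \cap A_k$ can be written as $(A_1 \cap \dots \cap A_{k-1}) \cap A_k$, where the first factor is in $\mathcal{F}_{\alpha,\beta}(G)$ by induction and the full expression is in $\mathcal{F}_{\alpha,\beta}(G)$ by the binary case. Since $V$ plays the role of the intersection of the empty family, both conditions of the Moore family definition are met.

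There is no real obstacle here: the corollary is a direct repackaging of Theorem~\ref{thm:prepar} in Moore-family language, and the only step beyond citing the theorem is the routine extension from binary to arbitrary intersection. The value of stating it separately is notational, since viewing $\mathcal{F}_{\alpha,\beta}(G)$ as a Moore family legitimizes the notion of ``modular closure'' alluded to just before Definition~\ref{def:trivial_alpha_beta_module}, namely the map sending a set $A \subseteq V$ to the smallest $(\alpha,\beta)$-module containing it.
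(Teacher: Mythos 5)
Your proof is correct and matches the paper's approach: the corollary is stated there as an immediate consequence of Theorem~\ref{thm:prepar}, with Item~1 supplying $V\in\mathcal{F}$ and Item~2 supplying closure under (binary, hence by induction finite) intersection, exactly as you argue. The only addition you make is spelling out the routine induction from binary to finite intersections, which the paper leaves implicit.
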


In the standard setting, if $X$ and $Y$ are modules, then $X \cup Y$, $X \setminus Y$, and $Y \setminus X$ are also modules~\cite{CHM81,HabibP10}. 
Unfortunately, this does not always hold in the $(\alpha, \beta)$ setting. 
But we can prove a weaker result, namely that we still have an ``almost partitive'' family. 

\begin{theorem}
\label{thm:presquepar}
    Let $A$ and $B$ be two non-trivial overlapping $(\alpha,\beta)$-modules of a graph $G$.    
    If $|A \cap B| \geq \alpha +\beta +1 $, then $A \cup B$ and $A \Delta B$ are $(2\alpha, 2\beta)$-modules of~$G$.
\end{theorem}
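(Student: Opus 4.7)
The plan is to verify the $(2\alpha, 2\beta)$-module condition at each vertex outside the relevant set: for $A \cup B$ this is every $x \in V \setminus (A \cup B)$, while for $A \Delta B$ one must additionally handle every $x \in A \cap B$ (since $A \cap B \cap (A \Delta B) = \emptyset$).

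For $x \in V \setminus (A \cup B)$, I would uniformly handle both $A \cup B$ and $A \Delta B$. Applying the $(\alpha, \beta)$-module definitions of $A$ and $B$ independently places $x$ in the ``high'' class $N_{\alpha}(\cdot)$ or the ``low'' class $\overline{N}_{\beta}(\cdot)$ relative to each set. The hypothesis $|A \cap B| \geq \alpha + \beta + 1$ rules out the two mixed configurations directly: if $x$ were high for $A$ and low for $B$, then restricting both bounds to $A \cap B$ would yield $|A \cap B| - \alpha \leq |N(x) \cap (A \cap B)| \leq \beta$, i.e.\ $|A \cap B| \leq \alpha + \beta$, contradicting the assumption. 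So $x$ is either high for both, giving $|(A \cup B) \setminus N(x)| \leq |A \setminus N(x)| + |B \setminus N(x)| \leq 2\alpha$, or low for both, giving $|N(x) \cap (A \cup B)| \leq |N(x) \cap A| + |N(x) \cap B| \leq 2\beta$. Since $A \Delta B \subseteq A \cup B$, both bounds transfer to $A \Delta B$. This completes the argument for $A \cup B$.

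For $A \Delta B$ the remaining case is $x \in A \cap B$, which is the main obstacle. Here the module conditions of $A$, $B$, and of their intersection $A \cap B$ (which is an $(\alpha, \beta)$-module by Theorem~\ref{thm:prepar}) all fail to constrain $x$ directly, since $x$ lies \emph{inside} each of these sets. My plan is to exploit the other side of the bipartite adjacency: by Lemma~\ref{lem:comptages}.\ref{lem:comptages.5} applied to $B$, the set $A \setminus B$ partitions into $A_H \cup A_L$, according to whether each $y \in A \setminus B$ is high or low for $B$; symmetrically $B \setminus A = B_H \cup B_L$. Each $y \in A_H$ has at most $\alpha$ non-neighbors in $A \cap B$, each $y \in A_L$ at most $\beta$ neighbors in $A \cap B$, and likewise for $B_H, B_L$ relative to $A$. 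I would then attempt a double-counting argument on these error edges, leveraging the $(\alpha, \beta)$-module structure of $A \cap B$ to argue that errors cannot concentrate at a single $x \in A \cap B$.

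The principal difficulty lies precisely in this transfer step. The $(\alpha, \beta)$-module conditions yield \emph{row-sum} control on the bipartite adjacency between $A \cap B$ and $A \Delta B$ (one bound per $y \in A \Delta B$), whereas the desired conclusion is a \emph{column-sum} bound at the particular vertex $x \in A \cap B$. Bridging this gap within the $(2\alpha, 2\beta)$ error budget is the crux; I would pursue it by a case analysis on the four block types $A_H, A_L, B_H, B_L$, combined with the fact that $A \cap B$ itself admits no $(\alpha, \beta)$-splitter, to preclude worst-case concentrations of errors at $x$ and to produce the required pointwise bound.
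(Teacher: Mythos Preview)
Your treatment of $A\cup B$, and of vertices $x\in V\setminus(A\cup B)$ for $A\Delta B$, is correct and is exactly the paper's argument: the size hypothesis on $A\cap B$ forces such an $x$ to be on the same side (high or low) relative to both $A$ and $B$, and then the error counts simply add.

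For the remaining case $x\in A\cap B$ you are right that no module hypothesis constrains $x$ directly, and the double-counting you propose cannot close the gap: the available bounds are one per vertex of $A\Delta B$, and nothing prevents the errors from piling up at a single $x\in A\cap B$. The paper does not solve this either; it merely asserts that ``in the worst case $z\in A\cap B$ has $\alpha$ (resp.\ $\beta$) errors in $A\setminus B$ and $\alpha$ (resp.\ $\beta$) errors in $B\setminus A$'', with no justification. In fact the $A\Delta B$ claim is false as stated. For $\alpha=\beta=1$, take $V=A\cup B$ with $A\cap B=\{x,c_1,c_2\}$, $A\setminus B=\{a_1,\dots,a_5\}$, $B\setminus A=\{b_1,\dots,b_5\}$; join every $a_i$ to every $b_j$; make $b_1,b_2,b_3$ adjacent to all of $\{x,c_1,c_2\}$ and $b_4,b_5$ only to $c_1,c_2$; make $a_1,a_2,a_3$ adjacent only to $x,c_1$ and $a_4,a_5$ only to $c_1,c_2$. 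Then each $b_j$ has at least $7=|A|-1$ neighbours in $A$ and each $a_i$ has exactly $7=|B|-1$ neighbours in $B$, so $A$ and $B$ are non-trivial $(1,1)$-modules with $|A\cap B|=3=\alpha+\beta+1$. But $x$ has precisely six neighbours in $A\Delta B$ (namely $a_1,a_2,a_3,b_1,b_2,b_3$) out of ten, so $2<|N(x)\cap(A\Delta B)|<|A\Delta B|-2$ and $x$ is a $(2,2)$-splitter for $A\Delta B$. The ``crux'' you isolated is thus not merely the hard step---it is unprovable, and any argument along the lines you sketch (or along the paper's) must fail there.
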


\begin{proof}
    Let $z \in V\setminus B$. 
We have $S_ {\alpha, \beta}(B)=\emptyset$
since $B$ is an $(\alpha,\beta)$-module, and $|B| \geq \alpha + \beta+1$ since $B$ is non-trivial.
    Therefore, by Lemma~\ref{lem:comptages}.\ref{lem:comptages.5}, $N_{\alpha}(B)$ and $\overline{N}_{\beta}(B)$ partition $V\setminus B$. 
    Suppose $z \in N_{\alpha}(B)$. Then, $z$ has at most $\alpha$ non-neighbours in $A\cap B$
    and thus at least $\beta + 1$ neighbours in $A \cap B$. 
    This gives $z \in N_{\alpha}(A)$ since $A$ is an $(\alpha,\beta)$-module.

Consider first $A \cup B$. 
In the worst case, $z$ has at most $\alpha$ non-neighbours in $A\setminus B$ and at most $\alpha$ non-neighbours in $B\setminus A$.
Using the same reasoning on $\overline{N}_{\beta}(B)$, we get that $A \cup B$ is a $(2\alpha, 2\beta)$-module.
        
Consider now $A \Delta B$. We just have to further consider the case of vertices in $A\cap B$. Also in the worst case, errors arise when a given vertex $z \in A\cap B$ has $\alpha$ (resp. $\beta$) errors in $A \setminus B$ and $\alpha$ (resp. $\beta$) errors in $B\setminus A$. Therefore $A \Delta B$ is a $(2\alpha, 2\beta)$-module. \qed

\end{proof}

\subsection{$(\alpha,\beta)$-Modular Decomposition Trees}

\ERIC{

\begin{definition}
Let $G=(V,E)$ be a graph. Two disjoint sets of vertices $A,B\subseteq V$,
with $|A|,|B| \geq \alpha + \beta + 1$,
 are said to be \textbf{\boldmath$\alpha$-connected} if $A\subseteq N_\alpha(B)$ and $B\subseteq N_\alpha(A)$.
 Similarly, they are said to be \textbf{\boldmath$\beta$-non-connected} if $A\subseteq \overline{N}_\beta(B)$ and $B\subseteq \overline{N}_\beta(A)$.
\label{def:alpha-beta-connected}
\end{definition}

In other words, $A$ and $B$ are $\alpha$-connected if every vertex in $A$ is an $\alpha$-neighbour of $B$ and every vertex in $B$ is an $\alpha$-neighbour of $A$.
They are $\beta$-non-connected if every vertex in $A$ is a $\beta$-non-neighbour of $B$ and every vertex in $B$ is a $\beta$-non-neighbour of $A$.

}

\begin{definition}
\label{def:SP}
Let $G = (V, E)$ be a graph. 
\begin{itemize}
    \item An \textbf{\boldmath$(\alpha, \beta)$-modular partition} of $G$ is a partition of $V$ into $(\alpha, \beta)$-modules.
    
        \vspace{0.2cm}    
        \item For $|V| \geq \alpha + \beta + 3$, we say that $G$ admits an \textbf{\boldmath$\alpha$-series} (resp. a \textbf{\boldmath$\beta$-parallel}) \textbf{decomposition} if there exists an $(\alpha, \beta)$-modular partition of $V$, $\mathcal{P} = \{V_1, \ldots, V_k\}$, such that 
        \ERIC{
        \begin{enumerate}
           \vspace{0.2cm}
           \item 
           $\exists j \in [k]$ such that $|V_j| \geq \alpha + \beta + 1$, and
           \vspace{0.2cm}
            \item $\forall i,j \in [k]$,  $i \neq j$, $V_i$ and $V_j$ are $\alpha$-connected (resp. $\beta$-non-connected).

        \end{enumerate} 
        }
    
        \vspace{0.2cm}
        \item For $|V| \geq \alpha + \beta + 3$, we say that $G$ admits an \textbf{\boldmath$(\alpha, \beta)$-prime decomposition} if there exists an $(\alpha, \beta)$-modular partition of $V$, $\mathcal{P} = \{V_1, \ldots, V_k\}$, with $k \geq 2$ such that 
        \ERIC{
    \begin{enumerate}  
    \vspace{0.2cm}
    \item  $\forall i \in [k]$, $V_i$ is maximal (under inclusion), 
    
    \vspace{0.2cm}
    \item $\exists j \in [k]$ such that $|V_j| \geq \alpha + \beta + 1$, and
    \vspace{0.2cm}
    \item  there exist two pairs $(i,j)$ and $(p,q)$, $i\neq j$, $p\neq q$, such that $V_i$ and $V_j$ are $\alpha$-connected while $V_p$ and $V_q$ are $\beta$-non-connected. 
    \end{enumerate}
    }
\end{itemize}
\end{definition}

Using Proposition~\ref{prop:basic}.\ref{prop:basic_2} we have the following obvious property.

\begin{property}\label{SP}
\label{Series-Parallel}
    A graph $G$ admits an $\alpha$-series decomposition if and only if $\overline{G}$ admits an $\alpha$-parallel decomposition. 
\end{property}

\begin{proposition}
\label{prop:exclusive}
    If $A$ and $B$ are two disjoint $(\alpha, \beta)$-modules of a graph $G$ with $|A|, |B| \geq \alpha+\beta +1$, then
        $N_{\alpha}(A) \supseteq B$ and  $\overline{N}_{\beta}(B) \supseteq A$ are mutually exclusive.
\end{proposition}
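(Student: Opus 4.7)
The plan is to argue by contradiction through a double-counting of edges between $A$ and $B$. Assume that both conditions hold simultaneously: $B \subseteq N_{\alpha}(A)$ and $A \subseteq \overline{N}_{\beta}(B)$. Let $e(A,B)$ denote the number of edges of $G$ with one endpoint in $A$ and the other in $B$.

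From $B \subseteq N_{\alpha}(A)$, every vertex $b \in B$ satisfies $|N(b) \cap A| \geq |A| - \alpha$, so summing over $B$ gives the lower bound $e(A,B) \geq |B|\,(|A| - \alpha)$. From $A \subseteq \overline{N}_{\beta}(B)$, every vertex $a \in A$ satisfies $|N(a) \cap B| \leq \beta$, so summing over $A$ gives the upper bound $e(A,B) \leq |A|\,\beta$. Combining the two bounds yields
\[
|B|\,(|A| - \alpha) \;\leq\; |A|\,\beta,
\]
which, after rearrangement using the identity $|A||B| - |A|\beta - |B|\alpha = (|A|-\alpha)(|B|-\beta) - \alpha\beta$, is equivalent to
\[
(|A| - \alpha)(|B| - \beta) \;\leq\; \alpha\beta.
\]

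The final step exploits the size hypothesis. Since $|A| \geq \alpha + \beta + 1$, we have $|A| - \alpha \geq \beta + 1$; symmetrically, $|B| - \beta \geq \alpha + 1$. Therefore
\[
(|A| - \alpha)(|B| - \beta) \;\geq\; (\beta + 1)(\alpha + 1) \;=\; \alpha\beta + \alpha + \beta + 1 \;>\; \alpha\beta,
\]
which contradicts the inequality above. Hence the two inclusions cannot hold together, proving mutual exclusivity.

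There is no real obstacle here: the argument is a clean double-counting plus the algebraic identity that makes the size hypothesis $|A|, |B| \geq \alpha + \beta + 1$ visibly sharp (equality would be achieved if we were allowed $|A| - \alpha = \beta$ or $|B| - \beta = \alpha$). The only thing to be careful about is the direction of the inequalities when passing between the ``at least $|A| - \alpha$ neighbours'' and ``at most $\beta$ neighbours'' statements, and to read the proposition as asserting that the two containment conditions cannot hold simultaneously rather than as a statement about set-theoretic disjointness of $N_\alpha(A)$ and $\overline{N}_\beta(B)$ (the latter being already given by Lemma~\ref{lem:comptages}.\ref{lem:comptages.2}).
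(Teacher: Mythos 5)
Your proof is correct and follows essentially the same route as the paper: both count the edges between $A$ and $B$ in two ways to obtain $|B|(|A|-\alpha)\leq |A|\beta$ and then derive a contradiction from the size hypothesis. The only cosmetic difference is the final algebraic step — you factor the inequality as $(|A|-\alpha)(|B|-\beta)\leq\alpha\beta$, whereas the paper substitutes $|A|=\alpha+\beta+a$, $|B|=\alpha+\beta+b$ to reach $\alpha a+\beta b+ab\leq 0$; both are fine.
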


\begin{proof}
    Suppose to the contrary that $N_{\alpha}(A) \supseteq B$ and 
   $\overline{N}_{\beta}(B) \supseteq A$.
   
    Let $m_{A,B}$ be the number of edges in $G$ joining $A$ and $B$.
    \ERIC{
    We thus have
        $$|B| \cdot (|A|- \alpha) \leq m_{A,B} \leq |A| \cdot \beta \implies |A|\cdot |B| < \beta \cdot |A|  +\alpha \cdot |B|.$$
    Now, let $|A|=\alpha +\beta +a$ and $|B|=\alpha +\beta +b$, with $a, b \in \mathbb{N}^*$. We then get
    $$\alpha \cdot a  + \beta \cdot b + ab \leq 0,$$
    a contradiction.
    }
\qed 
\end{proof}

Furthermore it could be the case that $A$ is $\alpha$-connected to $B$ and that $B$ is neither $\alpha$-connected to $A$ nor $\beta$-non-connected to $A$.

\begin{corollary}\label{ouf}
If $A$ and $B$ are two disjoint $(\alpha, \beta)$-modules of a graph $G$ with $|A| \geq \alpha+\beta +1$ and $\alpha \geq 1$, then the inclusions
        $N_{\alpha}(A) \supseteq B$ and  $\overline{N}_{\beta}(B) \supseteq A$ are mutually exclusive.
        
         If $A$ and $B$ are two disjoint $(\alpha, \beta)$-modules of a graph $G$ with $|B| \geq \alpha+\beta +1$ and $\beta \geq 1$, then the inclusions
        $N_{\alpha}(A) \supseteq B$ and  $\overline{N}_{\beta}(B) \supseteq A$ are mutually exclusive.

\end{corollary}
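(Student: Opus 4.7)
The plan is to extend the edge-counting argument from the proof of Proposition~\ref{prop:exclusive} to the asymmetric size hypothesis. I would assume, towards a contradiction, that both inclusions $B\subseteq N_\alpha(A)$ and $A\subseteq\overline{N}_\beta(B)$ hold simultaneously, and let $m_{A,B}$ denote the number of edges of $G$ with one endpoint in $A$ and the other in $B$. Double-counting then gives
\[
|B|\bigl(|A|-\alpha\bigr)\;\leq\; m_{A,B}\;\leq\; |A|\cdot\beta ,
\]
since $B\subseteq N_\alpha(A)$ forces each vertex of $B$ to have at least $|A|-\alpha$ neighbours in $A$, while $A\subseteq\overline{N}_\beta(B)$ forces each vertex of $A$ to have at most $\beta$ neighbours in $B$. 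This is exactly the same starting inequality as in Proposition~\ref{prop:exclusive}, only obtained from weaker hypotheses on $|A|$ or $|B|$.

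For the first statement, I would write $|A|=\alpha+\beta+a$ with $a\geq 1$ (using $|A|\geq\alpha+\beta+1$) and $|B|=b\geq 1$, and expand the inequality into $b(\beta+a)\leq(\alpha+\beta+a)\beta$. Rearranging gives a linear relation in the parameters $a,b,\alpha,\beta$, into which one plugs $\alpha\geq 1$ to drive a contradiction with $b\geq 1$. The second statement then follows by a completely symmetric argument: passing to the complement graph via Proposition~\ref{prop:basic}.\ref{prop:basic_2}, the sets $A$ and $B$ become $(\beta,\alpha)$-modules of $\overline{G}$, so the hypotheses $|B|\geq\alpha+\beta+1$ and $\beta\geq 1$ translate into those of the first statement, applied in $\overline{G}$ with the roles of $\alpha$ and $\beta$ exchanged. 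Proposition~\ref{prop:exclusive} already illustrates that this complement trick is the natural way to convert a statement about $\alpha$-neighbourhoods into one about $\beta$-non-neighbourhoods.

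The main obstacle lies in the arithmetic step itself. In Proposition~\ref{prop:exclusive} the symmetric size bounds $|A|,|B|\geq\alpha+\beta+1$ produced the clean contradiction $\alpha a+\beta b+ab\leq 0$, which is manifestly impossible as soon as $a,b\geq 1$ and $\alpha,\beta\geq 0$. Here only one of the two sizes is assumed large, so the corresponding rearrangement no longer collapses to a non-negative expression bounded above by $0$, and the extra hypothesis $\alpha\geq 1$ (resp.\ $\beta\geq 1$) must be used in a tight way to compensate. Pinpointing precisely how the positivity of $\alpha$ forces $b(\beta+a)\leq(\alpha+\beta+a)\beta$ to fail — in particular checking the boundary case of very small $|B|$ — is the delicate point I expect to be the hardest to get right.
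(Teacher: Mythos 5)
Your setup is exactly the paper's: assume both inclusions, double-count the edges between $A$ and $B$ to get $|B|\,(|A|-\alpha)\leq m_{A,B}\leq |A|\cdot\beta$, and try to derive an arithmetic contradiction; the second statement is then handled by complementation, which is fine. But the step you yourself single out as "the delicate point" is a genuine gap, not just a computation left to the reader: the inequality $b(\beta+a)\leq(\alpha+\beta+a)\beta$ is \emph{satisfiable} under your hypotheses. Take $\alpha=\beta=a=b=1$, i.e.\ $|A|=3$ and $|B|=1$: the inequality reads $2\leq 3$, so no amount of "plugging in $\alpha\geq 1$" will produce a contradiction. Worse, this numerical instance is realizable by an actual graph: let $A$ be any three vertices (automatically a $(1,1)$-module by Lemma~\ref{lem:splitter}) and $B=\{v\}$ a vertex adjacent to exactly two vertices of $A$. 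Then $|N(v)\cap A|=2\geq|A|-\alpha$, so $B\subseteq N_{1}(A)$, while every vertex of $A$ trivially has at most $\beta=1$ neighbours in the singleton $B$, so $A\subseteq\overline{N}_{1}(B)$. Both inclusions hold simultaneously, so the edge-counting route cannot close in the boundary case of small $|B|$; a correct proof would need either additional hypotheses (e.g.\ that both sets are large, which is what Proposition~\ref{prop:exclusive} assumes, or that the small set is not a trivial module) or a different argument entirely.

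For what it is worth, the paper's own proof is no more explicit than yours here: it reduces to the same inequality $\alpha a+\beta b+ab\leq 0$ (with $|B|=\alpha+\beta+b$, so $b\leq 0$ in the asymmetric case) and asserts that "considering the extreme cases" finishes the job, but with $a\geq 1$, $\alpha\geq 1$ and $b=-1$, $\beta=1$ that expression equals $-1$ and the claimed contradiction does not appear. So you have correctly located the weak joint of the argument; the problem is that it is not merely delicate but, as the statement is literally phrased, unfixable by this method.
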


\begin{proof}
\ERIC{If $|A| \geq \alpha+\beta +1$ and $|B| \geq \alpha+\beta +1$, this directly follows from Proposition~\ref{prop:exclusive}.
Otherwise, the proof is similar to the proof of Proposition~\ref{prop:exclusive}, by simply
considering the two extreme cases, i.e., $b \leq 0$, $\alpha \geq 1$ and  $a \geq 1$, or
$a \leq 0$, $\beta \geq 1$ and $b \geq 1$.}
\end{proof}

\begin{proposition}
    The three cases of Definition~\ref{def:SP} are mutually exclusive for a given partition into $(\alpha, \beta)$-modules.
\end{proposition}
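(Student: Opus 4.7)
The plan is to show pairwise that no two of the three cases can hold simultaneously for the same partition $\mathcal{P}=\{V_1,\dots,V_k\}$, by exhibiting a pair of parts that would have to be both $\alpha$-connected and $\beta$-non-connected, and then invoking Proposition~\ref{prop:exclusive} to derive a contradiction.

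First I would handle the series-versus-parallel case. Suppose $\mathcal{P}$ is simultaneously an $\alpha$-series decomposition and a $\beta$-parallel decomposition. Since Definition~\ref{def:alpha-beta-connected} of $\alpha$-connectedness (resp. $\beta$-non-connectedness) already requires both involved parts to have cardinality at least $\alpha+\beta+1$, every pair $V_i,V_j$ satisfies $|V_i|,|V_j|\geq \alpha+\beta+1$, and simultaneously $V_j\subseteq N_\alpha(V_i)$ and $V_i\subseteq \overline{N}_\beta(V_j)$. This is exactly the situation forbidden by Proposition~\ref{prop:exclusive}, yielding the desired contradiction.

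Next I would treat the remaining two mixed cases, namely series-versus-prime and parallel-versus-prime, together. If $\mathcal{P}$ is $(\alpha,\beta)$-prime, there exist two pairs $(V_i,V_j)$ and $(V_p,V_q)$ with $V_i,V_j$ $\alpha$-connected and $V_p,V_q$ $\beta$-non-connected. Now if $\mathcal{P}$ were also an $\alpha$-series decomposition, then in particular the pair $(V_p,V_q)$ would have to be $\alpha$-connected, and the argument of the previous paragraph applied to $(V_p,V_q)$ gives a contradiction via Proposition~\ref{prop:exclusive}. Symmetrically, if $\mathcal{P}$ were also a $\beta$-parallel decomposition, the pair $(V_i,V_j)$ would be both $\alpha$-connected and $\beta$-non-connected, and Proposition~\ref{prop:exclusive} again applies since $\alpha$-connectedness guarantees the size hypothesis. (Alternatively, in both sub-cases one can pass directly through Property~\ref{SP} by complementation, but the direct route is shorter.)

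The only subtle point — and the one I would verify most carefully — is that the size hypothesis $|A|,|B|\geq\alpha+\beta+1$ of Proposition~\ref{prop:exclusive} is always met, which is precisely why Definition~\ref{def:alpha-beta-connected} builds this condition into the notions of $\alpha$-connectedness and $\beta$-non-connectedness. Consequently no extra work with Corollary~\ref{ouf} is required, and there is no genuine obstacle: the three conditions of Definition~\ref{def:SP} are mutually exclusive as direct consequences of Proposition~\ref{prop:exclusive}. \qed
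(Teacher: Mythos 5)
Your overall strategy matches the paper's: reduce every clash between two of the three cases to a single pair of parts that would have to be both $\alpha$-connected and $\beta$-non-connected, then kill that configuration with the counting argument. Your treatment of the prime-versus-series and prime-versus-parallel cases is in fact more explicit than the paper's (which dismisses them with ``clearly''), and it is correct.

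The one place where you diverge — and where a gap may open — is the claim that the size hypothesis $|A|,|B|\geq\alpha+\beta+1$ of Proposition~\ref{prop:exclusive} is automatically satisfied for \emph{every} pair of parts because Definition~\ref{def:alpha-beta-connected} builds it into $\alpha$-connectedness. Read literally the definition supports you, but this reading makes condition~1 of Definition~\ref{def:SP} (``$\exists j$ such that $|V_j|\geq\alpha+\beta+1$'') redundant, and the rest of the paper (e.g.\ the construction in Theorem~\ref{thm:modulartree}, where a part $R$ with $|R|\leq\alpha+\beta$ appears) makes clear that the authors intend partitions with small parts. That is precisely why the paper's proof only assumes \emph{one} part $V_1$ is large and then invokes Corollary~\ref{ouf} rather than Proposition~\ref{prop:exclusive}; Corollary~\ref{ouf} needs $\alpha\geq1$ or $\beta\geq1$, so the paper disposes of the case $\alpha=\beta=0$ separately via Gallai's theorem, a case split your argument does not need but also does not address. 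So: if the size condition really is part of what it means for two parts to be $\alpha$-connected, your proof is complete and slightly cleaner (one uniform appeal to Proposition~\ref{prop:exclusive}, no case on $(\alpha,\beta)$); if small parts are admitted and the containments $V_j\subseteq N_\alpha(V_i)$, $V_i\subseteq\overline{N}_\beta(V_j)$ are still required of them, then for a pair with one small part you cannot invoke Proposition~\ref{prop:exclusive} and you must fall back on Corollary~\ref{ouf} exactly as the paper does. You should make explicit which convention you are using, and in the second case add the missing appeal to Corollary~\ref{ouf} together with the $\alpha=\beta=0$ subcase.
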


\begin{proof}
    Clearly the $\alpha$-series and $\beta$-parallel decompositions are each exclusive with the $(\alpha, \beta)$-prime case. 
    
    Now let us prove  that $\alpha$-series and $\beta$-parallel cases are exclusive. If $\alpha=\beta=0$ it is just Gallai's theorem. 
    So we suppose that there exists an $(\alpha, \beta)$-modular partition ${\cal P}=\{V_1, \dots, V_k\}$ which is both an $\alpha$-series and a $\beta$-parallel decomposition.
Suppose, without loss of generality, that $|V_1| \geq \alpha +\beta +1$ (we know that such a set exists).
Now we have two cases to consider.

If $\alpha \geq 1$ then, using Corollary \ref{ouf},  if $V_1$ is $\alpha$-connected to $V_2$   
then $V_2$ cannot be $\beta$-non-connected to $V_1$.

Similarly, if $\beta \geq 1$,   again using Corollary \ref{ouf},  if $V_1$ is $\beta$-non-connected to $V_2$ then $V_2$ cannot be $\alpha$-connected to $V_1$.
\qed    
\end{proof}

\begin{mdframed}[style=MyFrame]
\begin{openproblem} 
Is this result also true for $\alpha$-series and $\beta$-parallel decompositions based on different partitions?
(The interesting case is when all intersections between the two partitions have size bounded by $\alpha +\beta$.)
\end{openproblem} 
\end{mdframed}

It should be noticed that in the case of an $\alpha$-series  (resp.  a $\beta$-parallel) decomposition, a union of $V_i$'s is not necessarily an $(\alpha, 0)$-module  (resp. a $(0, \beta)$-module). 
Such a property is always true only for standard modules.

\begin{definition}
\label{def:alpha_beta_brittle}
    Using the terminology of~\cite{CE80} for combinatorial decompositions, we will say that
    a graph $G=(V,E)$ is \textbf{\boldmath$(\alpha, \beta)$-brittle}
    if every subset of $V$ is an $(\alpha, \beta)$-module.
\end{definition}

Of course, $(\alpha, 0)$-complete graphs (i.e., complete graphs missing at most $\alpha$ edges), and $(0, \beta)$-independent graphs (i.e., independent sets with at most $\beta$ edges) are $(\alpha, \beta)$-brittle, but they are not the only obvious ones; any path $P_k$ is also $(1,1)$-brittle.

As already seen previously, all graphs $G$ with $|V|\leq \alpha +\beta +2$ are $(\alpha, \beta)$-brittle,
and we called them $(\alpha, \beta)$-degenerate to distinguish them from the ``truly'' $(\alpha, \beta)$-prime graphs. 
All these remarks raise the question of the characterization of $(\alpha, \beta)$-brittle graphs. 
Clearly, any graph $G$ \ERIC{with minimum degree at least $|V| - \alpha$ or maximum degree at most $\beta$}
is $(\alpha, \beta)$-brittle.

\begin{mdframed}[style=MyFrame]
\begin{openproblem} 
Can we characterize the $(\alpha, \beta)$-brittle graphs?
\end{openproblem} 
\end{mdframed}

Using Definition~\ref{def:SP} and Proposition~\ref{prop:basic}.\ref{prop:basic_4}, and mimicking the case of standard modular decomposition, we may define an \ERIC{$(\alpha, \beta)$-modular decomposition tree} as follows.

\begin{definition}
\label{def:alpha_beta_decomp_tree}
    An \textbf{\boldmath$(\alpha, \beta)$-modular decomposition tree} of a graph $G=(V,E)$ is a tree whose nodes are labelled with $(\alpha, \beta)$-modules ordered by inclusion with four types of nodes:
    \begin{enumerate}
        \item $\alpha$-series,
        \item $\beta$-parallel,
        \item $(\alpha, \beta)$-prime, and
        \item$(\alpha, \beta)$-degenerate.
    \end{enumerate} 
    
    Each level of the tree corresponds to a partition of $V$, starting with $\{V\}$ at the root, and such that the leaves correspond to a partition of $V$ into $(\alpha, \beta)$-degenerate nodes.
\end{definition}

\ERIC{
Such an $(\alpha, \beta)$-modular decomposition tree completely describes what we call an \textbf{\boldmath$(\alpha,\beta)$-modular decomposition} of a graph $G$.
}

In the standard modular decomposition setting, the notion of strong modules, i.e., modules that do not overlap any other module, is quite central. 
In the $(\alpha,\beta)$-modular decomposition setting, observe that there are no strong $(\alpha, \beta)$-modules other than $\{V\}$ and the singletons $\{ v : v \in V\}$. 
This comes from the fact that when $\max\{\alpha, \beta\} \geq 1$, every subset of vertices of size $2$ is a trivial $(\alpha, \beta)$-module. 
Now, assume there is a standard strong module $A \neq V$ with $|A| > 1$.
By taking  any vertex $v\in A$ and any vertex $u\in V \setminus A$, we get an $(\alpha, \beta)$-module of size~2 which overlaps~$A$. 

Recall that the Modular Decomposition Theorem~\cite{Gal67} (Gallai's Theorem) says that every graph admits a unique modular decomposition tree. 
In the $(\alpha, \beta)$ setting, we have the following weaker form of Gallai's Theorem.

\begin{theorem}
    \label{thm:modulartree}
    Every graph $G = (V, E)$ with $|V| \geq \alpha + \beta +3$ admits an $(\alpha, \beta)$-modular decomposition tree, for every $(\alpha, \beta)$ with $0 \leq \alpha, \beta \leq |V| - 1$.
\end{theorem}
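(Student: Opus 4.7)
My plan is to proceed by induction on $|V|$ and build the tree top-down. Proposition~\ref{prop:basic}.\ref{prop:basic_4} guarantees that $(\alpha,\beta)$-modules computed inside a subset are automatically $(\alpha,\beta)$-modules of the ambient graph, which is exactly what makes such a recursion well-defined. Blocks of size at most $\alpha+\beta+2$ are declared $(\alpha,\beta)$-degenerate leaves by Definition~\ref{def:alpha_beta_minimal}; for each larger block, one must produce an $(\alpha,\beta)$-modular partition of its vertex set that fits one of the three non-degenerate types of Definition~\ref{def:SP}, and the tree is obtained by gluing together the subtrees delivered by the induction hypothesis on each block.

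At a node $G$ with $|V| \geq \alpha+\beta+3$, I would construct an $(\alpha,\beta)$-modular partition $\mathcal{P}=\{V_1,\dots,V_k\}$ with $k\geq 2$ by seeding from Lemma~\ref{lem:splitter}: any subset of size $\alpha+\beta+1$ is automatically an $(\alpha,\beta)$-module, and Theorem~\ref{thm:prepar} provides the closure-under-intersection that justifies iteratively enlarging a seed into a maximal proper $(\alpha,\beta)$-module $M$ of $G$. The complement $V\setminus M$ is then partitioned by repeating the seeding procedure on the remaining vertices, yielding a partition into $(\alpha,\beta)$-modules. Once $\mathcal{P}$ is in hand, the root is labelled by inspecting it: $\alpha$-series if every pair of blocks of size at least $\alpha+\beta+1$ is $\alpha$-connected, $\beta$-parallel if every such pair is $\beta$-non-connected, and $(\alpha,\beta)$-prime otherwise. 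Proposition~\ref{prop:exclusive} together with Corollary~\ref{ouf} ensures that these three types are mutually exclusive, so the label is unambiguous, and the induction hypothesis is then applied to each block $V_i$ with $|V_i|\geq \alpha+\beta+3$.

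The step I expect to be the main obstacle is showing that the partition produced genuinely fits one of the three non-degenerate types, rather than being merely some $(\alpha,\beta)$-modular partition. The delicate case is when $G$ has very few non-trivial $(\alpha,\beta)$-modules: when every maximal proper $(\alpha,\beta)$-module has size exactly $\alpha+\beta+1$, the seeding-and-closure procedure still delivers a legitimate partition, but the maximality clause of the $(\alpha,\beta)$-prime case and the simultaneous existence of an $\alpha$-connected and a $\beta$-non-connected pair of blocks must be verified by a careful choice of the seeds, exploiting the small-overlap behaviour of minimal non-trivial $(\alpha,\beta)$-modules guaranteed by Proposition~\ref{prop:minimal}. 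Absent such care, one could produce an $(\alpha,\beta)$-modular partition that is none of $\alpha$-series, $\beta$-parallel or $(\alpha,\beta)$-prime, which would not fulfil the requirements of Definition~\ref{def:alpha_beta_decomp_tree}; handling this possibility, possibly by re-selecting seeds or by refining $\mathcal{P}$ using Theorem~\ref{thm:prepar}, is the technical heart of the argument.
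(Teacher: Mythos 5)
Your plan follows essentially the same route as the paper's proof: recurse via Proposition~\ref{prop:basic}.\ref{prop:basic_4}, peel off a maximal non-trivial $(\alpha,\beta)$-module at each stage (falling back, when the current graph is $(\alpha,\beta)$-prime, on the trivial modules of size $\alpha+\beta+1$ guaranteed by Lemma~\ref{lem:splitter}), assemble the extracted parts into an $(\alpha,\beta)$-modular partition, label the root by inspection, and recurse on each part. Two remarks. First, a small correction to your mechanism: Theorem~\ref{thm:prepar} (closure under intersection) yields the \emph{minimal} $(\alpha,\beta)$-module containing a seed, not a maximal one --- you cannot ``iteratively enlarge'' a seed this way, since unions of $(\alpha,\beta)$-modules are in general only $(2\alpha,2\beta)$-modules. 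A maximal non-trivial $(\alpha,\beta)$-module is obtained simply by finiteness of the family (pick any inclusion-maximal member); the paper itself points out that no efficient procedure for finding one is known, but for a pure existence statement this is harmless. Second, the obstacle you single out as the technical heart --- that the partition produced might satisfy none of the three non-degenerate node types --- is dispatched in the paper essentially by fiat: the root is labelled $\alpha$-series or $\beta$-parallel when the partition happens to be of that form, and is otherwise simply \emph{declared} $(\alpha,\beta)$-prime, treating prime as the default catch-all rather than verifying all the clauses of the $(\alpha,\beta)$-prime case of Definition~\ref{def:SP} (maximality of every part, and the simultaneous existence of an $\alpha$-connected pair and a $\beta$-non-connected pair). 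So you are not missing an idea that the paper supplies; you are holding the construction to a stricter standard than the paper does, and the careful seed-selection argument you anticipate is not carried out there either.
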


\begin{proof}
Let $G = (V, E)$ be an arbitrary graph. 
If $G$ is an $(\alpha,\beta)$-prime graph then $G$ admits a trivial $(\alpha,\beta)$-modular decomposition tree $T(G)$, with a root labelled $(\alpha, \beta)$-prime. The leaves of $T(G)$ are the children of the root, where every leaf is associated with a partition of $V$ into sets of $\alpha + \beta +1$ vertices.

Suppose now that $G$ is not an $(\alpha,\beta)$-prime graph.
Then $G$ admits at least one non-trivial $(\alpha,\beta)$-module. 
Let $M_1$ be any maximal (under inclusion) non-trivial $(\alpha,\beta)$-module, and let $R$ be the set of remaining vertices: $R = V \setminus M_1$.
We consider two cases, depending on the size of $R$.

\begin{enumerate}
\item $|R| \leq \alpha+\beta$. \\
If $R=N_{\alpha}(M_1)$, then we build a tree with a root $r$ labelled $(\alpha,\beta)$-series, and two children, $T(G(M1))$ and $R$ (labelled $(\alpha,\beta)$-degenerate).
    
If $R= \overline{N}_{\beta}(M_1)$, then we build a tree with a root $r$ labelled $(\alpha,\beta)$-parallel and two children, $T(G(M1))$ and $R$ (labelled $(\alpha,\beta)$-degenerate).
    
Finally, in the remaining case, both $N_{\alpha}(M_1)$ and $\overline{N}_{\beta}(M_1)$ are non-empty. 
We then build a tree with a root $r$ labelled $(\alpha, \beta)$-prime and three children, $T(G(M1))$,  $N_{\alpha}(M_1)$ and $\overline{N}_{\beta}(M_1)$, 
the last two nodes being both labelled $(\alpha,\beta)$-degenerate.

We then apply induction on $G(M_1)$ to obtain $T(G(M1))$, using Proposition~\ref{prop:basic}.4.

\item $|R| \geq \alpha + \beta +1$.\\
We simply apply the same reasoning on $G(V \setminus M_1)$. 
\end{enumerate}

In both cases, we eventually obtain a partition ${\cal P}=\{M_1, \dots, M_k\}$ of $V$ with $k \geq 3$. 
The partition $\cal P$ is an $(\alpha,\beta)$-modular partition in which only $M_k$ could have less than $\alpha + \beta$ vertices.

We then check whether $\cal P$ induces an $\alpha$-series decomposition  or a $\beta$-parallel decomposition. If not, we say that $\cal P $ induces an $(\alpha,\beta)$-prime decomposition.

We then 
build a tree with a root labelled $(\alpha, \beta)$-prime (resp.  $\alpha$-series,  $\beta$-parallel) and whose children are the $(\alpha,\beta)$-modular decomposition trees of the subgraphs $G(M_i)$, $1 \leq i \leq k$, and apply induction on the $G(M_i)$'s, using Proposition~\ref{prop:basic}.4. 
\qed
\end{proof}

Maximal $(\alpha, \beta)$-modules may overlap, which unfortunately means, as it is also illustrated in the examples of the next sections, that this $(\alpha, \beta)$-modular decomposition tree is not unique. 
Although such a decomposition tree can provide an exact encoding of the graph (if the $(\alpha,\beta)$-errors are traced), it does not provide an encoding of all existing $(\alpha,\beta)$-modules, see Figure~\ref{1MD-exp} for instance.
Furthermore, although the above proof is constructive, Theorem~\ref{thm:modulartree} does not lead to an efficient algorithm for computing an $(\alpha, \beta)$-modular decomposition tree. 
In fact, we do not know any polynomial algorithm to compute $M_1$, a maximal non-trivial $(\alpha, \beta)$-module, to start with.
In the next section we will see that it could be the case that no such algorithm exists.

\subsection{$\alpha$-Series and $\beta$-Parallel Operations} 
\label{sec:cograph}
Let us focus on the $\alpha$-series and $\beta$-parallel decompositions.

\begin{definition}
\label{def:alpha_beta_cograph}
    An \textbf{\boldmath$(\alpha, \beta)$-cograph} is a graph that is totally decomposable with respect to $\alpha$-series and $\beta$-parallel decompositions until we reach $(\alpha, \beta)$-degenerate subgraphs only. 
\end{definition}

Using Definition~\ref{def:alpha_beta_cograph} above, it is clear that standard cographs are precisely the $(0,0)$-cographs.
Let us call \textbf{\boldmath$(\alpha, \beta)$-cotree} the tree corresponding to an $(\alpha, \beta)$-modular decomposition  of an $(\alpha, \beta)$-cograph. 
\ERIC{Although several different $(\alpha, \beta)$-cotrees may be associated with an $(\alpha, \beta)$-cograph, the following proposition shows that we can always find such an $(\alpha, \beta)$-cotree having only $\alpha$-series and $\beta$-parallel internal nodes.

\begin{proposition}
    A graph is an $(\alpha, \beta)$-cograph if and only if it admits an $(\alpha, \beta)$-cotree having only $\alpha$-series and $\beta$-parallel internal nodes.
\end{proposition}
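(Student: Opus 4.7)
The plan is to prove both directions by induction on $|V(G)|$, unwinding Definition~\ref{def:alpha_beta_cograph} of an $(\alpha,\beta)$-cograph and Definition~\ref{def:alpha_beta_decomp_tree} of an $(\alpha,\beta)$-modular decomposition tree. The base case in both directions is when $G$ is $(\alpha,\beta)$-degenerate: the cotree is then a single degenerate leaf, which contains no internal nodes at all, so there is nothing to check.

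For the forward direction (the graph being a cograph implies the existence of a cotree of the desired form), I would take a non-degenerate $(\alpha,\beta)$-cograph $G$. By Definition~\ref{def:alpha_beta_cograph}, $G$ admits either an $\alpha$-series or a $\beta$-parallel decomposition $\mathcal{P}=\{V_1,\dots,V_k\}$, and every $G(V_i)$ is again either $(\alpha,\beta)$-degenerate or an $(\alpha,\beta)$-cograph (since total decomposability is inherited by the blocks of the decomposition). The inductive hypothesis supplies, for each $V_i$, an $(\alpha,\beta)$-cotree $T_i$ for $G(V_i)$ whose internal nodes are only $\alpha$-series or $\beta$-parallel. I then assemble $T$ by attaching the $T_i$'s as children of a fresh root whose label matches the type of $\mathcal{P}$. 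The key verification is that $T$ satisfies Definition~\ref{def:alpha_beta_decomp_tree}, i.e., that every node of $T$ really represents an $(\alpha,\beta)$-module of $G$; this is exactly the content of Proposition~\ref{prop:basic}.\ref{prop:basic_4}, which lifts $(\alpha,\beta)$-modules of the induced subgraphs $G(V_i)$ back into $(\alpha,\beta)$-modules of $G$.

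For the backward direction (a cotree of the desired form implies the cograph property), I would take such a cotree $T$ of $G$ and examine its root. If $G$ is degenerate, the claim is immediate; otherwise the root is either $\alpha$-series or $\beta$-parallel by the hypothesis on the internal nodes. Reading the children of the root off $T$ yields a partition $\{V_1,\dots,V_k\}$ of $V(G)$ that realises an $\alpha$-series or $\beta$-parallel decomposition in the sense of Definition~\ref{def:SP}. The subtree of $T$ rooted at the child corresponding to $V_i$ is itself an $(\alpha,\beta)$-cotree of $G(V_i)$ with only $\alpha$-series and $\beta$-parallel internal nodes, so by induction each $G(V_i)$ is an $(\alpha,\beta)$-cograph. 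Together these decompositions witness that $G$ is totally decomposable via $\alpha$-series and $\beta$-parallel operations down to degenerate subgraphs, so $G$ is an $(\alpha,\beta)$-cograph.

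I expect no serious obstacle here; the proof is essentially a translation between two equivalent recursive descriptions. The one point that deserves care is the boundary when $|V(G)|$ is close to $\alpha+\beta+3$, where Definition~\ref{def:SP} only guarantees $\alpha$-series/$\beta$-parallel decompositions above this threshold: below it the graph is already $(\alpha,\beta)$-degenerate, which is exactly the termination condition of the induction, so the argument remains intact.
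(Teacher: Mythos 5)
Your proposal is correct and follows essentially the same route as the paper: a recursive construction that attaches the inductively obtained cotrees of the blocks $G(V_i)$ under a root labelled with the type of the decomposition, with Proposition~\ref{prop:basic}.\ref{prop:basic_4} lifting the modules of the induced subgraphs back to $G$. Your write-up is in fact slightly more complete than the paper's, which only sketches the construction direction and additionally remarks (via the size condition in Definition~\ref{def:alpha-beta-connected}) that the $\alpha$-series and $\beta$-parallel labels at a node cannot conflict.
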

}

\begin{proof}
    We construct the $(\alpha, \beta)$-cotree of an $(\alpha, \beta)$-cograph $G$ recursively as follows.
    Suppose $G$ has an $\alpha$-series and $\beta$-parallel decomposition with partition ${\cal P}=\{V_1, \dots V_k\}$.
    Notice first that these two operations are exclusive since every part $V_i$ has at least $\alpha + \beta + 1$ vertices, and thus two parts $V_i$ and $V_j$ cannot be both $\alpha$-connected and $\beta$-non-connected. 
    The  $(\alpha, \beta)$-cotree is then obtained by taking a root $r$, labelled $\alpha$-series, with $k$ children $T(G(V_i))$, $1\le i\le k$, and construct each subtree $T(G(V_i))$ by applying induction on the     subgraph $G(V_i)$, using Proposition~\ref{prop:basic}.4.
    \qed
\end{proof}

Consider the two examples illustrated in Figures~\ref{1MD-exp} and~\ref{two-diff-decomps}. 
Fig.~\ref{1MD-exp} shows a $(1,1)$-cograph $H$ that admits a unique $(1,1)$-cotree. 
Fig.~\ref{two-diff-decomps} shows a $(1,1)$-cograph $G$ that admits \emph{two} different $(1,1)$-cotrees. 
Moreover, if we replace each vertex of $G$ in Fig.~\ref{two-diff-decomps} by an isomorphic copy of $G$, and repeat this process, we can build a $(1,1)$-cograph which has \emph{exponentially many} different  $(1,1)$-cotrees.

\begin{figure}
\centering
    \begin{minipage}{.9\textwidth}
    \centering
        \begin{tikzpicture}[scale=0.7,auto=left]
        	\node[style={circle,inner sep=1pt, minimum width=4pt,draw,fill=black!100},label=above:$a$] (a) at  (1.5,0) {};
        	\node[style={circle,inner sep=1pt, minimum width=4pt,draw,fill=black!100},label=above:$b$] (b) at  (0.5, 1) {};
        	\node[style={circle,inner sep=1pt, minimum width=4pt,draw,fill=black!100},label=below:$c$] (c) at  (0.5, -2) {};
        	\node[style={circle,inner sep=1pt, minimum width=4pt,draw,fill=black!100},label=below:$d$] (d) at  (1.5,-1) {};
        	
        	\node[style={circle,inner sep=1pt, minimum width=4pt,draw,fill=black!100},label=above:$e$] (e) at  (-3,0) {};
        	\node[style={circle,inner sep=1pt, minimum width=4pt,draw,fill=black!100},label=above:$f$] (f) at  (-2,1) {};
        	\node[style={circle,inner sep=1pt, minimum width=4pt,draw,fill=black!100},label=below:$h$] (h) at  (-3,-1) {};
        	\node[style={circle,inner sep=1pt, minimum width=4pt,draw,fill=black!100},label=below:$g$] (g) at  (-2,-2) {};
        
        	\foreach \from/\to in {a/b, b/c, a/e,c/d, e/f, f/g, g/h, b/e,b/f, a/g, b/h, c/e, a/f, c/h, c/g, d/e, d/f, d/h, d/g}
        	\draw (\from) -- (\to);
        \end{tikzpicture}
\hskip 5mm
        \begin{tikzpicture}[scale=0.7,auto=left]
        	\node[style={rectangle, rounded corners, draw},label=below:{},align=center] (v) at (-0.4,3) {{\color{black}(1,0)\text{-series}}};
        	\node[style={rectangle, rounded corners, draw},label=below:{},align=center] (u) at (2, 1.5) {{\color{black}(0,1)\text{-par.}}};
	\node[style={rectangle, rounded corners, draw},label=below:{},align=center] (t) at (-2.5,1.5) {{\color{black}(0,1)\text{-par.}}};

	\node[style={rectangle, rounded corners, draw},label=below:{\{a, b\}}] (w) at (-4,0) {\small{(0,1)-deg.}};
        	\node[style={rectangle, rounded corners, draw},label=below: {\{c, d\}}] (x) at (-1.5,0) {\small{(0,1)-deg.}};
	\node[style={rectangle, rounded corners, draw},label=below:{\{e, f \}}] (y) at (1,0) {\small{(0,1)-deg.}};
        	\node[style={rectangle, rounded corners, draw},label=below: {\{g, h\}}] (z) at (3.5,0) {\small{(0,1)-deg.}};
        	\foreach \from/\to in {v/u,v/t}
        	\draw (\from) -- (\to);
	\foreach \from/\to in {t/w, t/x}
	\draw (\from) -- (\to);
	\foreach \from/\to in {u/y,u/z}
	\draw (\from) -- (\to);
	\end{tikzpicture}
    \end{minipage}
    \caption{    
        A $(1,1)$-cotree of the $(1,1)$-cograph $H$ on the left. 
        Notice that $H$ is not a cograph since it contains two induced $P_4$'s:
        $\{a, b, c, d\}$ and $\{e, f, g, h\}$.
    }
    \label{1MD-exp}

\end{figure}
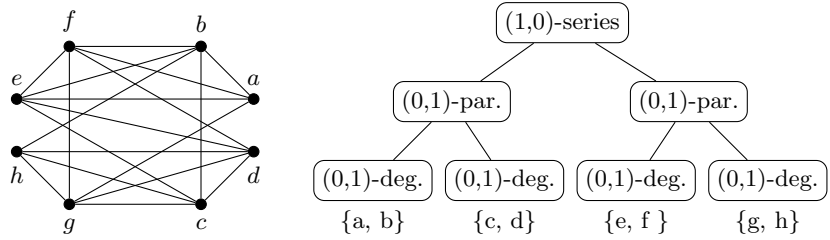

\begin{figure}
\centering
    \begin{minipage}{.3\textwidth}
    \centering
        \begin{tikzpicture}[scale=0.7,auto=left]
        	\node[style={circle,inner sep=1pt, minimum width=4pt,draw,fill=black!100},label=left:$a$] (a) at  (-4,0) {};
        	\node[style={circle,inner sep=1pt, minimum width=4pt,draw,fill=black!100},label=above:$b$] (b) at  (-3.5,1) {};
        	\node[style={circle,inner sep=1pt, minimum width=4pt,draw,fill=black!100},label=above:$c$] (c) at  (-0.5,1) {};
        	\node[style={circle,inner sep=1pt, minimum width=4pt,draw,fill=black!100},label=right:$d$] (d) at  (0,0) {};
        	\node[style={circle,inner sep=1pt, minimum width=4pt,draw,fill=black!100},label=below:$e$] (e) at  (-0.5,-1) {};
        	\node[style={circle,inner sep=1pt, minimum width=4pt,draw,fill=black!100},label=below:$f$] (f) at  (-3.5,-1) {};
        	\node[style={circle,inner sep=1pt, minimum width=4pt,draw,fill=black!100},label=left:$x$] (x) at  (-3,0) {};
        	\node[style={circle,inner sep=1pt, minimum width=4pt,draw,fill=black!100},label=right:$y$] (y) at  (-1,0) {};
        	
        	\foreach \from/\to in {a/b, f/x, y/e, c/d}
        	\draw (\from) -- (\to);
        	\foreach\from/\to in {b/c, x/y, f/e}
        	\draw[thick,color=blue] (\from) -- (\to);
        	\foreach\from/\to in {a/f,b/x,y/c,e/d}
        	\draw[thick,color=red] (\from) -- (\to);
        \end{tikzpicture}
    \end{minipage}

\vskip 5mm    
    
    \begin{minipage}{.9\textwidth}
    \centering
        \begin{tikzpicture}[scale=0.7,auto=left]

        	\node[style={rectangle, rounded corners, draw},label=below:{},align=center ] (z') at (-5.5,3) {{\color{red}(0,1)\text{-parallel}}};
        	\node[style={rectangle, rounded corners, draw},label=below:{},align=center ] (u') at (-7.7,1.5) {\tiny{(0,1)-para.}};
        	\node[style={rectangle,rounded corners, draw},label=below: {},align=center ] (v') at (-4,1.5) {\tiny{(0,1)-para.}};
	\node[style={rectangle, rounded corners, draw},label=below:{\{a,b\}},align=center ] (w') at (-8.6,-0.5) {\tiny{(1,1)-deg.}};
        	\node[style={rectangle,rounded corners, draw},label=below: {\{c,d\}},align=center ] (t') at (-6.8,-0.5) {\tiny{(1,1)-deg.}};
	\node[style={rectangle, rounded corners, draw},label=below:{\{e,f\}},align=center ] (x') at (-4.8,-0.5) {\tiny{(1,1)-deg.}};
        	\node[style={rectangle,rounded corners, draw},label=below: {\{x,y\}},align=center ] (y') at (-3,-0.5) {\tiny{(1,1)-deg.}};

        	\foreach \from/\to in {z'/u', z'/v', u'/w', u'/t', v'/x',v'/y'}
        	\draw (\from) -- (\to);
        \end{tikzpicture}
        \hskip 5mm
        \begin{tikzpicture}[scale=0.7,auto=left]
        	\node[style={rectangle, rounded corners, draw},label=below:{},align=center ] (z) at (1,3) {{\color{blue}(0,1)\text{-parallel}}};
        	\node[style={rectangle, rounded corners, draw},label=below:{},align=center ] (u) at (-0.4,1.5) {\tiny{(0,1)-para.}};
        	\node[style={rectangle,rounded corners, draw},label=below: {},align=center ] (v) at (3,1.5) {\tiny{(0,1)-para.}};
	\node[style={rectangle, rounded corners, draw},label=below:{\{a,b\}},align=center ] (w) at (0.6,-0.5) {\tiny{(1,1)-deg.}};
        	\node[style={rectangle,rounded corners, draw},label=below: {\{f,x\}},align=center ] (t) at (-1.2,-0.5) {\tiny{(1,1)-deg.}};
	\node[style={rectangle, rounded corners, draw},label=below:{\{c,y\}},align=center ] (x) at (2.5,-0.5) {\tiny{(1,1)-deg.}};
        	\node[style={rectangle,rounded corners, draw},label=below: {\{e,d\}},align=center ] (y) at (4.3,-0.5) {\tiny{(1,1)-deg.}};
        	\foreach \from/\to in {z/u, z/v,u/w,u/t,v/x,v/y}
        	\draw (\from) -- (\to);

        \end{tikzpicture}

    \end{minipage}
    \caption{
        The graph $G$ on the top is a $(1,1)$-cograph, with two different $(1,1)$-cotrees. 
        The internal nodes have the same labels but the partitions of $V$ induced by the leaves of the $(1,1)$-cotrees are not the same.
    }
    \label{two-diff-decomps}
\end{figure}
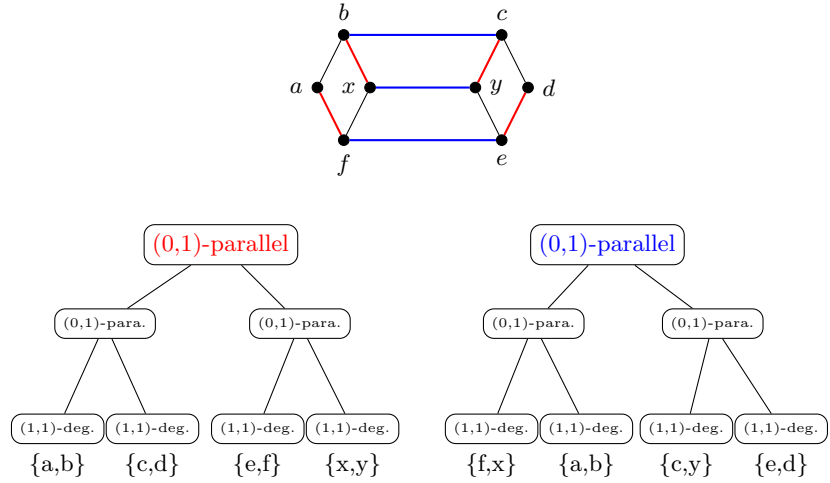

As we shall see next, for $\alpha=0$ and $\beta=1$, the problem of recognizing if a graph admits a $(0, 1)$-parallel decomposition is related to a nice combinatorial problem first studied in~\cite{Graham70}. 
The problem of finding such a decomposition is equivalent to finding a \emph{matching cut set} in a graph,
i.e., an edge cut which is a matching, a well-known problem studied in~\cite{Bonsma89,Chvatal84,Moshi89}. 

Unfortunately however, it turns out --as one might expect-- that finding such a matching cut set in an arbitrary graph is an NP-complete problem, as shown by Chv\'atal in~\cite{Chvatal84}.
\begin{theorem}[\cite{Chvatal84}]
\label{thm:matching_cut set}
    Deciding if a graph has a matching cut set is NP-complete.
\end{theorem}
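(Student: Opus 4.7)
The plan is to establish the result in the standard way. Membership in NP is immediate, since a candidate partition $(A, V\setminus A)$ can be verified in linear time by checking that both classes are non-empty and that each vertex has at most one neighbour on the opposite side. The substance of the theorem lies in NP-hardness, which I would prove by a polynomial-time reduction from a suitable variant of Boolean satisfiability. Chv\'atal's original argument reduces from 3-SAT directly, but a cleaner route is via NAE-3-SAT (Not-All-Equal 3-SAT): the inherent symmetry between a truth assignment and its negation matches the symmetry between the two sides of a matching cut, and the ``not all equal'' condition mirrors the fact that a matching cut cannot have too many crossing edges incident to the same vertex.

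Given an instance $\varphi$ of NAE-3-SAT with variables $x_1,\dots,x_n$ and clauses $C_1,\dots,C_m$, I would build a graph $G_\varphi$ from three kinds of building blocks. For each variable $x_i$, I would introduce a \emph{variable gadget} with two distinguished terminals $T_i$ and $F_i$, engineered so that in any matching cut of $G_\varphi$ the two terminals lie on opposite sides and the gadget interior is essentially rigid (admits only the two ``flipped'' configurations). For each clause $C_j=(\ell_{j,1}\vee\ell_{j,2}\vee\ell_{j,3})$, I would introduce a \emph{clause gadget} whose admissible matching cuts correspond exactly to the NAE-satisfying assignments of $\ell_{j,1},\ell_{j,2},\ell_{j,3}$. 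Finally, I would link literal occurrences to clauses by auxiliary edges (or short paths) that transport truth values while contributing at most one cross edge at any vertex they touch.

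The main technical obstacle, as I see it, is the design of these \emph{rigid} subgraphs: small graphs for which the matching-cut condition forces large sets of vertices onto a common side. The standard trick is to exploit high local edge density, for instance subgraphs such as $K_4$ or $K_{2,3}$, in which any non-trivial internal cut would already cross two edges incident to the same vertex. One must glue such rigid pieces together so that (i) truth values propagate faithfully along connecting edges, (ii) no vertex is ever saturated by two crossing edges, and (iii) the total size of $G_\varphi$ remains polynomial in $|\varphi|$. Once this gadgetry is in place, correctness is routine: a NAE-satisfying assignment of $\varphi$ yields a matching cut of $G_\varphi$ by placing $T_i$ and $F_i$ on the sides dictated by the truth value of $x_i$ and extending via the prescribed configuration inside each gadget; conversely, the rigidity of the variable gadgets extracts a consistent truth assignment from any matching cut, while the clause gadgets force each clause to be NAE-satisfied. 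This closes the reduction and hence the proof.
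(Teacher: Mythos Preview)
The paper does not prove this theorem at all: it is stated with the attribution \texttt{[\textbackslash cite\{Chvatal84\}]} and used as a black box in the subsequent corollaries, with no accompanying proof or proof sketch. There is therefore nothing in the paper to compare your proposal against.

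As for the proposal itself, it is only a high-level plan rather than a proof: you describe the \emph{roles} that variable gadgets, clause gadgets, and rigid pieces must play, but you do not actually exhibit any of them or verify the three properties (faithful propagation, no vertex saturated by two crossing edges, polynomial size). The difficulty in matching-cut reductions is precisely in getting these gadgets to work simultaneously, and without concrete constructions the argument is not yet a proof. Your instinct to use NAE-3-SAT for its symmetry is sound, and the observation that dense local pieces such as $K_4$ or $K_{2,3}$ are rigid under matching cuts is correct and is indeed the standard ingredient; but to complete the argument you would still need to specify the gadgets explicitly and check the correctness claims. Chv\'atal's original reduction (from 3-SAT, as you note) does exactly this, and later refinements even push the hardness down to maximum degree four.
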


\ERIC{
\begin{definition}
Let $G=(V,E)$ be a graph and $\mathcal{P} = \{V_1, \dots,V_k\}$ be the partition of $V$
associated with an $(\alpha, \beta)$-modular decomposition of $G$. If every union of parts from $\{V_1, \dots,V_k\}$ is an $(\alpha, \beta)$-module, 
we say that such a decomposition is an \textbf{\boldmath$(\alpha, \beta)$-modular brittle decomposition} of $G$.   
\end{definition}
}

From Theorem~\ref{thm:matching_cut set}, we get the following theorem.

\begin{theorem}
\label{thm:0_1_NP_hard}
For a graph $G=(V,E)$ with $|V| \geq 4$,  deciding if $G$ admits a $(0, 1)$-parallel  (resp.  a $(1,0)$-series) brittle decomposition is NP-complete.
\end{theorem}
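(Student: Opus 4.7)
The plan is to establish, for $|V|\geq 4$, the equivalence
\[ G \text{ admits a } (0,1)\text{-parallel brittle decomposition} \iff G \text{ has a matching cut,} \]
and then invoke Theorem~\ref{thm:matching_cut set}. Membership in NP will follow at once, because a matching cut, which by the equivalence is a valid positive certificate, can be checked in polynomial time.

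For the easy direction, given a matching cut $(A,B)$ I would take the $2$-partition $\{A,B\}$. Both sides are $(0,1)$-modules, because every external vertex has at most one neighbor in the opposite side, and they are mutually $1$-non-connected by the matching condition. Since $|V|\geq 4$, at least one side has size~$\geq 2$. Brittleness is automatic for a $2$-partition, as the only unions of parts are $\emptyset$, $A$, $B$, and $V$, all of which are $(0,1)$-modules.

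For the harder direction, let $\{V_1,\dots,V_k\}$ be a $(0,1)$-parallel brittle decomposition with $|V_1|\geq 2$, and consider the $2$-partition $(V_1, V\setminus V_1)$. Pairwise $1$-non-connectedness already yields that every vertex of $V\setminus V_1$ has at most one neighbor in $V_1$, and by brittleness $V\setminus V_1$ is a $(0,1)$-module, so every $v\in V_1$ either has at most one neighbor in $V\setminus V_1$ or is adjacent to all of $V\setminus V_1$. If the first case holds for every $v$, then $(V_1, V\setminus V_1)$ is a matching cut. Otherwise, pick a vertex $v^*\in V_1$ adjacent to all of $V\setminus V_1$; since $v^*$ has at most one neighbor in each $V_j$ with $j\neq 1$, each such $V_j$ must be a singleton $\{u_j\}$, and pairwise $1$-non-connectedness then forces $v^*$ to be the unique neighbor of every $u_j$ inside $V_1$. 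Applying brittleness to $V_1\cup\{u_j\}$ excludes any edge between distinct $u_j$ and $u_{j'}$: otherwise $u_{j'}$ would have exactly two neighbors ($v^*$ and $u_j$) in $V_1\cup\{u_j\}$, which is neither $\leq 1$ nor $\geq |V_1|+1\geq 3$. Hence every $u_j$ has degree exactly one in $G$, and $(\{u_j\}, V\setminus\{u_j\})$ is a matching cut.

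The $(1,0)$-series case follows from the $(0,1)$-parallel case by complementation, using Proposition~\ref{prop:basic}.\ref{prop:basic_2} together with the analogue of Property~\ref{Series-Parallel} for brittle decompositions. The main obstacle is the star-like case in the reverse direction, whose resolution relies crucially on invoking brittleness on a union containing a single extra singleton in order to rule out edges among the singletons and recover a (trivial but valid) matching cut from one of them.
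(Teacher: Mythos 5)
Your proof is correct and uses the same underlying idea as the paper — the equivalence with the matching cut problem and Chv\'atal's Theorem~\ref{thm:matching_cut set} — but it is considerably more complete than the paper's own argument, and in one place it repairs it. The paper only argues one direction and asserts that $\{V_1,\bigcup_{1<i\leq k}V_i\}$ is itself a matching cut; this assertion fails precisely in your ``star-like'' case. For instance, for $G=K_{1,3}$ with centre $v^*$ and leaves $w,u_2,u_3$, the partition $\{\{v^*,w\},\{u_2\},\{u_3\}\}$ is a valid $(0,1)$-parallel brittle decomposition, yet the cut between $\{v^*,w\}$ and $\{u_2,u_3\}$ consists of two edges sharing $v^*$ and is not a matching. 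Your analysis of the vertex $v^*$ adjacent to all of $V\setminus V_1$, forcing the remaining parts to be singletons with a common unique neighbour and then using brittleness on $V_1\cup\{u_j\}$ to rule out edges among the singletons, correctly recovers a (different) matching cut $(\{u_j\},V\setminus\{u_j\})$ and thereby salvages the equivalence. You also supply the converse direction (matching cut implies brittle $(0,1)$-parallel decomposition), which the reduction needs and which the paper leaves implicit. The only caveat, inherited from the paper rather than introduced by you, is that Definition~\ref{def:alpha-beta-connected} formally requires both sets to have at least $\alpha+\beta+1$ elements, so $\beta$-non-connectedness of singleton parts must be read in the obvious relaxed sense; under that reading your argument goes through.
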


\begin{proof}
    Suppose $G$ admits a $(0,1)$-parallel brittle decomposition with partition $\mathcal{P} = \{V_1, \dots,V_k\}$. 
  Since it is a brittle decomposition, $\bigcup\limits_{1<i \leq k}V_i$ is  a $(0,1)$-module,
    and therefore the partition $\{V_1, \bigcup\limits_{1<i \leq k}V_i\}$ is a matching cut set of $G$.
    Using Theorem~\ref{thm:matching_cut set}, deciding if a graph admits a $(0,1)$-parallel decomposition is thus NP-complete.
    For the $(1,0)$-series case, since $G$ has a $(1,0)$-series decomposition if and only if $\overline{G}$ has a $(0,1)$-parallel decomposition by Property \ref{SP}, we are done.
\qed
\end{proof}

Theorem \ref{thm:0_1_NP_hard} implies that the characterization of $(\alpha, \beta)$-brittle graphs could be not so easy.
Since  $(\alpha, \beta)$-modular decomposition trees are not unique, let us now introduce minimal ones.

\begin{definition} An $(\alpha, \beta)$-modular decomposition tree of a graph $G$ is {\bf minimal}
if the first partition level has a minimal number of parts, among all possible $(\alpha, \beta)$-modular decomposition trees of $G$.
\end{definition}

We can consider two decision problems, depending on whether $\alpha$ and $\beta$ are part of the input or not.

\medskip
\noindent
{\sc Minimal modular decomposition}\\
{\it Input:} Two positive integers $\alpha,\beta$ and a graph $G$.\\
{\it Question:} Does $G$ admit a minimal  $(\alpha,\beta)$-modular decomposition?

\medskip
\noindent
{\sc Minimal $(\alpha,\beta)$-modular decomposition}\\
{\it Input:} A graph $G$.\\
{\it Question:} Does $G$ admit a minimal $(\alpha,\beta)$-modular decomposition?

\medskip

Let us prove the NP-hardness of {\sc Minimal modular decomposition}.

\begin{proposition}\label{thm:0_2_NP_hard}
For every pair of integers $(\alpha, \beta)\neq (0,0)$,
and every graph $G$ with $|V(G)| >4(\alpha + \beta)$,  $G$ cannot admit an $(\alpha,\beta)$-series decomposition and a $(\alpha, \beta)$-parallel decomposition both in two parts, say $\{A_1, A_2\}$ and $\{B_1, B_2\}$, respectively.
\end{proposition}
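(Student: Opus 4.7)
The plan is to argue by contradiction: assume $G$ admits both an $\alpha$-series decomposition $\{A_1,A_2\}$ and a $\beta$-parallel decomposition $\{B_1,B_2\}$, and analyze the common refinement $C_{ij}=A_i\cap B_j$, which partitions $V(G)$. By Definition~\ref{def:SP} together with Definition~\ref{def:alpha-beta-connected}, all four sets $A_1,A_2,B_1,B_2$ must have size at least $\alpha+\beta+1$, because both $\alpha$-connectivity and $\beta$-non-connectivity are only defined under this cardinality condition. Since $|V(G)|>4(\alpha+\beta)$, by pigeonhole at least one cell $C_{ij}$ has size at least $\alpha+\beta+1$ — but more importantly I will control the sizes of all four cells.

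The main case is when every cell $C_{ij}$ is non-empty. Pick any $v\in C_{11}$. The $\alpha$-connectivity of $A_1,A_2$ forces $|N(v)\cap A_2|\ge|A_2|-\alpha=|C_{21}|+|C_{22}|-\alpha$, while the $\beta$-non-connectivity of $B_1,B_2$ forces $|N(v)\cap B_2|\le\beta$, and in particular $|N(v)\cap C_{22}|\le\beta$. Combining with the trivial bound $|N(v)\cap C_{21}|\le|C_{21}|$ yields
\[
|C_{21}|+\beta\;\ge\;|C_{21}|+|C_{22}|-\alpha,\qquad \text{i.e.,}\qquad |C_{22}|\le\alpha+\beta.
\]
Running the symmetric argument with $v\in C_{22}$, $v\in C_{12}$, $v\in C_{21}$ yields $|C_{11}|,|C_{21}|,|C_{12}|\le\alpha+\beta$, respectively. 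Summing gives $|V(G)|=\sum_{i,j}|C_{ij}|\le 4(\alpha+\beta)$, contradicting the hypothesis.

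It remains to rule out the degenerate case where some cell is empty. Suppose, without loss of generality, $C_{22}=\emptyset$; then $A_2\subseteq B_1$ and $B_2\subseteq A_1$. Picking any $v\in B_2\subseteq A_1$, the $\alpha$-connectivity of $A_1,A_2$ gives $|N(v)\cap A_2|\ge|A_2|-\alpha\ge\beta+1$; but the $\beta$-non-connectivity of $B_1,B_2$ gives $|N(v)\cap B_1|\le\beta$, and since $A_2\subseteq B_1$ this forces $|N(v)\cap A_2|\le\beta$, a contradiction. The case where two opposite cells are empty (so that one of $A_i$ or $B_j$ is empty) is excluded by the size lower bound $\alpha+\beta+1$ on every part.

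The step I expect to require the most care is not the counting itself, which is a direct double-counting, but the bookkeeping around the size hypothesis: one must confirm that $\alpha$-connectivity and $\beta$-non-connectivity can genuinely be invoked on each pair, so the lower bound $\alpha+\beta+1$ on all four parts $A_1,A_2,B_1,B_2$ (not merely on ``some $V_j$'' as stated in Definition~\ref{def:SP}) is essential. Once this is secured, the argument reduces to the two cases above, and the assumption $(\alpha,\beta)\neq(0,0)$ only serves to make the statement non-vacuous, since the whole scheme recovers a form of Gallai's theorem when $\alpha=\beta=0$.
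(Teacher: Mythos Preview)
Your proof is correct and follows essentially the same approach as the paper. The paper also first rules out the case of an empty cell (phrased as showing any two parts from different partitions overlap), and then uses the pigeonhole principle to find one cell $C=A_1\cap B_1$ of size at least $\alpha+\beta+1$, deriving an immediate contradiction from any $x\in A_2\cap B_2$ being simultaneously in $N_\alpha(C)$ and $\overline{N}_\beta(C)$; your version instead bounds each of the four cells by $\alpha+\beta$ and sums, which is the contrapositive of the same step applied four times.
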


\begin{proof}
First we notice that any two parts from different partitions must overlap. Indeed, suppose for example 
that $A_2 \subseteq B_2$. We then necessarily have $B_1 \subseteq A_1$. 
We then get that a vertex $x \in A_2$ must be $\alpha$-connected and $\beta$-non-connected to $B_1$, a contradiction.

Since $V(G)= \cup_{i,j\in\{1,2\}} A_i \cap B_j$ and $|V(G)| > 4(\alpha + \beta)$, one of these four sets, say $C=A_1 \cap B_1$, has at least $\alpha + \beta +1$ elements. 
But then, every $x \in A_2\cap B_2$ is $\alpha$-connected  and $\beta$-non-connected to $C$, a contradiction.\qed 
\end{proof}

Therefore if $G$ admits a minimal $(\alpha, \beta)$-modular decomposition with only two parts, then it is either an $(\alpha,\beta)$-series decomposition or an $(\alpha, \beta)$-parallel decomposition, but not both.

\begin{corollary}\label{thm:0_2_NP_hard}
Minimal modular decomposition is NP-complete.
\end{corollary}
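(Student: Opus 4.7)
\emph{Plan.} The plan is a polynomial-time reduction from the NP-complete problem of Theorem~\ref{thm:0_1_NP_hard}, namely deciding whether a graph admits a $(0,1)$-parallel brittle decomposition, to \textsc{Minimal modular decomposition}. Given an instance $G$ of that problem with $|V(G)|>4$ (which we may assume by padding), I would build the instance $(G,\alpha=0,\beta=1)$ of \textsc{Minimal modular decomposition} and show that the two problems have the same answer.

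For NP membership, a minimal $(\alpha,\beta)$-modular decomposition with the smallest possible non-trivial number of parts, namely two, is encoded by a bipartition $\{V_1,V_2\}$ of $V(G)$. Verifying that this bipartition is an $\alpha$-series or a $\beta$-parallel decomposition only requires counting, for each vertex, its neighbours in the opposite part, which is polynomial.

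For NP-hardness, I would use the preceding proposition: for $(\alpha,\beta)\neq(0,0)$ and $|V(G)|>4(\alpha+\beta)$, no graph admits simultaneously a two-part $\alpha$-series and a two-part $\beta$-parallel decomposition. Moreover, a two-part partition cannot induce an $(\alpha,\beta)$-prime decomposition in the sense of Definition~\ref{def:SP}, because that case requires two distinct pairs of parts witnessing $\alpha$-connectedness on one side and $\beta$-non-connectedness on the other. Hence any minimal two-part $(0,1)$-modular decomposition of $G$ is either a $(0,1)$-series or a $(0,1)$-parallel decomposition. Such a decomposition is automatically \emph{brittle}, since the only non-trivial union of its two parts is $V$ itself, which is always a $(0,1)$-module. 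Finally, by Property~\ref{SP}, the series case for $G$ is equivalent to the parallel case for $\overline{G}$, so calling the oracle on both $G$ and $\overline{G}$ decides the $(0,1)$-parallel brittle decomposition problem on $G$; two polynomial-time queries preserve NP-hardness.

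The main obstacle I anticipate is pinning down the precise meaning of ``minimal'' in the problem statement. Under the natural reading that the top level of the $(\alpha,\beta)$-modular decomposition tree must contain the smallest possible non-trivial number of parts, namely two, the argument above goes through cleanly. If a different reading is intended, for instance ``at most two parts'' or a bound on the total number of internal nodes, only cosmetic changes are needed: Theorem~\ref{thm:modulartree} guarantees that a tree always exists, and the hardness genuinely sits at the top-level two-part decision, which the preceding proposition isolates from the prime case.
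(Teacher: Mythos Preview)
Your overall shape is right, and the paper's own proof is in fact much shorter than yours: it simply observes that computing a minimal $(0,1)$-modular decomposition tree is enough to decide whether $G$ has a matching cut set, and then invokes Theorem~\ref{thm:matching_cut set} directly. So going through Theorem~\ref{thm:0_1_NP_hard} (the brittle version) is an unnecessary detour; the equivalence ``$G$ has a matching cut set $\Leftrightarrow$ $G$ has a two-part $(0,1)$-parallel decomposition'' is what is really being used, and the preceding Proposition is exactly what rules out a simultaneous two-part series decomposition.

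There is, however, a genuine gap in your argument at the step ``call the oracle on both $G$ and $\overline{G}$''. Property~\ref{SP} (via Proposition~\ref{prop:basic}.\ref{prop:basic_2}) exchanges $\alpha$ and $\beta$ when passing to the complement: a $0$-series $(0,1)$-decomposition of $G$ corresponds to a $0$-parallel $(1,0)$-decomposition of $\overline{G}$. Hence the natural second query is $(\overline{G},1,0)$, but the $(1,0)$-modules of $\overline{G}$ are precisely the $(0,1)$-modules of $G$, so that query is equivalent to the first one and gives you no new information. If instead you query $(\overline{G},0,1)$ you are asking about $(1,0)$-decompositions of $G$, which is a different question and does not isolate the series/parallel alternative for the $(0,1)$ case. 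Either way, two oracle calls do not separate the two outcomes.

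The fix is cheap and closer to the paper's spirit: a two-part $(0,1)$-series decomposition is, since $\alpha=0$, a bipartition with a complete bipartite graph between the parts, i.e.\ $\overline{G}$ is disconnected. This is testable in polynomial time without any oracle, and Chv\'atal's hard instances (maximum degree~4) have connected complement anyway. So you may either preprocess this case away or simply restrict to those instances; in both cases a single query $(G,0,1)$ to a minimal-decomposition oracle decides matching cut set, which is the paper's argument.
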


\begin{proof}
Since computing  a minimal $(0,1)$-modular decomposition tree is enough  to conclude if $G$ has matching cut set or not. But by Theorem \ref{thm:matching_cut set} such a computation is NP-complete. So for $\alpha=0$ and $\beta=1$ the problem is NP-complete.
\qed \end{proof}

We also think that Minimal $(\alpha, \beta)$-modular decomposition is NP-complete for every $(\alpha, \beta)\neq (0,0)$.
In fact our  result \ref{thm:0_2_NP_hard} is not enough yet to prove that simply computing an $(\alpha, \beta)$-modular decomposition tree is NP-hard, which we believe to be true, since, as shown in Figure~\ref{two-diff-decomps}, a given graph may have two different $(0, 1)$-parallel decompositions. 
Furthermore it could be the case that one is brittle and the other not. 
We leave this question as a conjecture.

\begin{mdframed}[style=MyFrame]
\begin{conj}
For every $(\alpha, \beta)$ with $\max\{\alpha,\beta\}>1$,  
finding an $(\alpha, \beta)$-modular decomposition of a graph $G$ is an NP-complete problem. 
\end{conj}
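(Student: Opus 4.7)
The plan is to reduce from a suitable edge-cut problem. The natural candidate is the $d$-cut problem: given a graph $G$ and an integer $d$, decide whether $G$ admits an edge cut in which every vertex is incident with at most $d$ cut-edges. For $d=1$ this is the matching cut problem (Theorem~\ref{thm:matching_cut set}), and one would invoke the extension of NP-completeness to every fixed $d \geq 1$. Since a two-part $(0,\beta)$-parallel decomposition of $G$ with both sides of size at least $\beta+1$ is exactly a $\beta$-cut, this is the right combinatorial target; the case $\alpha \geq 1$, $\beta = 0$ then follows by complementation via Property~\ref{Series-Parallel}.

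Concretely, I would first settle the case $\alpha=0$ and $\beta \geq 2$. Given an instance $(G,\beta)$ of $\beta$-cut, construct a graph $H$ by attaching to $G$ an auxiliary gadget $\Gamma$ with two properties: $\Gamma$ is itself $(0,\beta)$-prime and contains no internal non-trivial $(0,\beta)$-module of size exceeding $\beta+1$, and the join between $\Gamma$ and $V(G)$ is designed so that every vertex of $\Gamma$ creates a $(0,\beta)$-splitter for any proposed module straddling the boundary. Then any non-trivial $(0,\beta)$-modular decomposition of $H$ restricts to a $\beta$-cut in $G$, and conversely. For the mixed case $\alpha,\beta \geq 1$, the same scheme is iterated on $G$ and $\overline{G}$: one gadget is attached to block $\alpha$-series splits and another to block $\beta$-parallel splits, so that the only remaining non-trivial decomposition must correspond to a $d$-cut structure in the core with $d=\max\{\alpha,\beta\}$.

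The main obstacle is the proliferation of trivial modules. By Lemma~\ref{lem:splitter}, every set of size at most $\alpha+\beta+1$ is automatically an $(\alpha,\beta)$-module, and by Theorem~\ref{thm:modulartree} a decomposition tree always exists; hence the hard problem is not the existence of a tree, but the existence of a non-degenerate split, i.e., a maximal non-trivial $(\alpha,\beta)$-module of size above the degeneracy threshold. Designing gadgets that simultaneously anchor the instance, forbid internal non-trivial modules, and do not accidentally interact with the tolerance of $\alpha+\beta$ per splitter (Definition~\ref{def:alpha_beta_splitter}) is delicate precisely because $(\alpha,\beta)$-modules do not form a partitive family and so cannot be canonicalized via closure under union or symmetric difference. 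Proving that the gadgets are truly rigid, so that every non-trivial $(\alpha,\beta)$-modular decomposition of $H$ provably reads off a $d$-cut of $G$, is the step where I expect most of the technical work to concentrate.
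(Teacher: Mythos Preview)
The statement you are attempting to prove is explicitly a \emph{conjecture} in the paper, not a theorem: the authors do not give a proof, and in fact write just before it that their NP-completeness results are ``not enough yet to prove that simply computing an $(\alpha,\beta)$-modular decomposition tree is NP-hard, which we believe to be true \ldots\ We leave this question as a conjecture.'' So there is no proof in the paper to compare your attempt against.

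As for your proposal itself, it is a plausible plan of attack rather than a proof, and you acknowledge this honestly: the gadget construction is only sketched, the rigidity argument is declared to be ``the step where I expect most of the technical work to concentrate,'' and the NP-completeness of $d$-cut for every fixed $d\ge 1$ is invoked without citation (it is known, but you should point to a source). The high-level strategy --- reduce from a bounded-degree edge-cut problem, use the equivalence between two-part $(0,\beta)$-parallel decompositions and $\beta$-cuts, and handle $(\alpha,0)$ by complementation --- is the natural extension of the paper's own use of matching cut for $(0,1)$, so you are on the same track the authors presumably had in mind.

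There is, however, a genuine conceptual gap you should confront before investing in gadgets. By Theorem~\ref{thm:modulartree}, \emph{every} graph on at least $\alpha+\beta+3$ vertices admits an $(\alpha,\beta)$-modular decomposition tree, so the decision problem ``does $G$ admit an $(\alpha,\beta)$-modular decomposition'' is trivially always yes. The conjecture, read literally, is therefore about a search or optimisation problem whose precise formulation the paper leaves open (minimal tree? brittle decomposition? existence of a non-trivial maximal module?). You implicitly reinterpret it as ``decide whether $G$ has a non-degenerate first split,'' which is reasonable, but you should make that reformulation explicit and argue why it is the right target --- otherwise your reduction, even if technically correct, may not address the intended statement. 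The paper's own partial results (Theorem~\ref{thm:0_1_NP_hard} on brittle decompositions, Corollary~\ref{thm:0_2_NP_hard} on minimal decompositions) each pin down a specific variant; your proof would need to do the same.
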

\end{mdframed}

In the above-mentioned work~\cite{Chvatal84}, Chv\'atal showed that the matching cut set problem is NP-complete on graphs with maximum degree four, and polynomial on graphs with maximum degree three. 
In fact the problem of finding a perfect matching cut set is also NP-hard~\cite{HeggernesT98}.
 
On the other hand, computing a matching cut set in the following graph classes is polynomial: 
\begin{itemize}
    \item graphs with maximum degree three~\cite{Chvatal84},
    \item weakly chordal graphs and line-graphs~\cite{Moshi89},
    \item Series-Parallel graphs~\cite{PatrignaniP01},
    \item claw-free graphs and graphs with bounded clique-width, as well as graphs with bounded treewidth~\cite{Bonsma89},
    \item graphs with diameter 2~\cite{BorowieckiJ08}, and 
    \item ($K_{1,4}, K_{1,4}+e$)-free graphs~\cite{KratschL16}.
\end{itemize}

Therefore, to check whether a graph in any of the above classes is a $(1,1)$-cograph, it suffices to run the corresponding matching cut set algorithm, either on the graph itself or on its complement. 

\medskip

From the practical side, a particular subclass of $(0,1)$-parallel graphs has been introduced and studied in network theory~\cite{LTTH2004}, \ERIC{namely the class of networks obtained by starting from the one vertex graph and, at step $t$, taking two graphs obtained in $t-1$ steps and joining them by a perfect matching}. 
This class, called \textbf{Matching Composition Networks} in~\cite{WLCFF19}, contains all hypercubes, as well as all crossed, twisted and M\"{o}bius hypercubes.  In general, $PMG(k)$ is a family of graphs recursively defined, that starts with all connected graphs on $k$ vertices and, at every step, add any graph that can be obtained by selecting two graphs within the family having the same order and joining them with a perfect matching. 

More formally, the family $\PMG(4)$, for instance, is defined as follows.

\begin{definition}[$\PMG(4)$]
\label{def:PMG}
    We start with the following seven connected graphs on four vertices: 
    \begin{itemize}
        \item $P_4, C_4, K_4, K_{1,3}$, 
        \item a triangle with a pending edge,
        \item two triangles having an edge in common.
    \end{itemize}
    At every step, a new graph is obtained from two graphs in the family having the same order by joining them with a perfect matching.
\end{definition}

Knowing that $\PMG(1) = \PMG(2)$ and that they contain hypercubes, crossed, twisted and M\"{o}bius hypercubes, we end this section with the following recognition problem.

\begin{mdframed}[style=MyFrame]
\begin{openproblem}  
Given a graph $G=(V,E)$ with $|V| = 2^n$, what is the complexity of recognizing whether $G$ belongs to $\PMG(k)$ or to one of its non-trivial subclasses?
\end{openproblem} 
\end{mdframed}

\subsection{The Structure of $(\alpha, \beta)$-Prime Graphs}

\begin{proposition}
\label{prop:5_prime}
The only $(1,1)$-prime graph of order~$5$ is $C_5$.
\end{proposition}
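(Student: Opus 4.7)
The plan is to use the fact that, by Definition~\ref{def:trivial_alpha_beta_module}, for $\alpha=\beta=1$ a trivial $(1,1)$-module of a 5-vertex graph $G=(V,E)$ is either equal to $V$ or has size at most $\alpha+\beta+1=3$. Hence the only potentially non-trivial $(1,1)$-modules are the subsets of $V$ of size exactly $4$, i.e., sets of the form $M = V\setminus\{x\}$ for some vertex $x$. So $G$ is $(1,1)$-prime if and only if, for every vertex $x \in V$, the set $V\setminus\{x\}$ is \emph{not} a $(1,1)$-module.

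Next I would translate this last condition into a simple degree condition. For $M = V\setminus\{x\}$, the set $V\setminus M$ contains only $x$, and $|M \cap N(x)| = \deg(x)$. By Definition~\ref{def:alpha_beta_module}, $M$ is a $(1,1)$-module iff $\deg(x) \geq |M|-\alpha = 3$ or $\deg(x) \leq \beta = 1$. Therefore, $M = V\setminus\{x\}$ fails to be a $(1,1)$-module precisely when $\deg(x) = 2$. Applying this for each $x$, the graph $G$ is $(1,1)$-prime if and only if every vertex has degree exactly $2$, i.e., $G$ is $2$-regular.

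Finally, I would conclude by a trivial structural argument: a $2$-regular simple graph is a disjoint union of cycles, each of length at least $3$. On $5$ vertices, the only way to write $5$ as a sum of integers $\geq 3$ is $5$ itself, so the only such graph is $C_5$. Conversely, $C_5$ is $2$-regular, hence $(1,1)$-prime by the equivalence above. There is no real obstacle here; the whole argument reduces to the fact that for $|V|=5$ the only non-trivial module candidates are singletons complements, which collapses the problem to a one-line degree condition.
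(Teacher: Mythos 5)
Your proof is correct, and for the uniqueness direction it is genuinely cleaner than the paper's. Both arguments start from the same observation: on $5$ vertices the only candidates for a non-trivial $(1,1)$-module are the $4$-element sets $M=V\setminus\{x\}$, and for such a set the single outside vertex $x$ satisfies $|M\cap N(x)|=\deg(x)$, so $M$ is a $(1,1)$-module iff $\deg(x)\le 1$ or $\deg(x)\ge 3$, i.e., fails to be one iff $\deg(x)=2$. The paper uses this only in one direction (every vertex of $C_5$ has degree $2$, hence $C_5$ is prime) and then disposes of all other $5$-vertex graphs by an unexhibited ``systematic study.'' You instead turn the observation into an equivalence --- a $5$-vertex graph is $(1,1)$-prime iff it is $2$-regular --- and then note that the only $2$-regular graph on $5$ vertices is $C_5$, since a $2$-regular graph is a disjoint union of cycles of length at least $3$ and $5$ admits no such partition other than the trivial one. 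This replaces the paper's implicit case check over all $5$-vertex graphs with a two-line degree argument, and yields a slightly stronger statement (the $2$-regularity characterization) essentially for free. There is no gap in your reasoning.
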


\begin{proof}
Let $C_5 = [a,b,c,d,e]$. 
To prove that $C_5$ is a $(1,1)$-prime graph, we just have to prove that every subset of four vertices is not a $(1,1)$-module, which is obvious since for every such subset $A$, the remaining vertex not in $A$ is connected by exactly two edges to $A$. 
A systematic study of all the other graphs of order~$5$ (including the bull) shows that they all have a non-trivial $(1,1)$-module.
\qed
\end{proof}

Notice here that the Petersen graph can be obtained by a $(0,1)$-parallel operation made on two copies of a $C_5$.

Obviously, we have the following inclusion: 
For all $\alpha \leq \alpha'$ and all $\beta \leq \beta'$, the family of $(\alpha, \beta)$-prime graphs is included in the family of $(\alpha', \beta')$-prime graphs. 
But can we improve this result?
In the standard setting, the prime graphs are nested. 
In particular, $P_4$ is the smallest prime graph, and all primes on $n$ vertices contain a prime subgraph on either $n-1$ or $n-2$ vertices, as shown in~\cite{ST91}.

We pose the following problem.

\begin{mdframed}[style=MyFrame]
\begin{openproblem} 
Are the $(\alpha, \beta)$-prime graphs nested? 
\end{openproblem} 
\end{mdframed}

\section{Computing the minimal $(\alpha, \beta)$-modules} 
\label{sec:algorithms}

Despite the negative hardness results in the previous sections, we shall now examine how to compute all minimal $(\alpha, \beta)$-modules of a given graph in polynomial time.
As mentioned earlier, non-trivial $(\alpha, \beta)$-modules have strictly more than $\alpha + \beta + 2$ elements; and since they are closed under intersection, $(\alpha, \beta)$-modules have an underlying graph convexity, and thus (see Algorithm~\ref{alg:naive}), we can compute the minimal $(\alpha, \beta)$-module $M(A)$ that contains a given set $A$ with $|A| > \alpha + \beta + 2$, by computing a \textbf{modular closure} via $(\alpha, \beta)$-splitters. 
In fact, we build a series of subsets $M_i$ that starts with $M_0 = A$ and satisfies $M_i \subseteq M_{i+1}$ for every $i\geq 0$.

\medskip

\begin{algorithm}[t]
\label{alg:naive}
    \LinesNumbered
    \SetAlgoVlined
    \SetKwInput{Input}{Input} 
    \SetKwInput{Output}{Output}
    \caption{Computing minimal $(\alpha, \beta)$-modules.}
    \label{naiveminimal}
    \Input{A graph $G = (V, E)$ and a set $A \subseteq V$ with $|A| \geq \alpha +\beta +2$.}
    \Output{$M(A)$, the unique minimal $(\alpha, \beta)$-module that contains $A$}
        $M_0 \leftarrow A$, $i \leftarrow 0$ \;
        $S \leftarrow \{x \in V\setminus M_0$  :  $\beta <|N(x)\cap M_0|< |M_0|$$-\alpha \}$\;
        \While{$S \neq \emptyset$}{
        	$i \leftarrow i+1$\;
            $M_i \leftarrow M_{i-1} \cup S$\;
            $S\leftarrow \{x \in V\setminus M_i$  :   $\beta <|N(x)\cap M_i|< |M_i|$$-\alpha \}$\;
        }
        $M(A) \leftarrow M_i$\; 
\end{algorithm}

\begin{proposition} 
    Algorithm~\ref{alg:naive} computes the unique minimal $(\alpha, \beta)$-module that contains $A$
    \ERIC{in $O(m \cdot n)$} time.
\end{proposition}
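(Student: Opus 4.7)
The plan is to prove correctness by a monotone-inclusion argument, and then give an easy implementation achieving the claimed running time.

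First, I would show that $M(A)$ is well-defined. By Corollary~\ref{cor:moore_family}, the family of $(\alpha,\beta)$-modules of $G$ is closed under intersection, so the intersection of all $(\alpha,\beta)$-modules containing $A$ is itself an $(\alpha,\beta)$-module; this is $M(A)$. Note that $V$ is always an $(\alpha,\beta)$-module containing $A$, so this family is nonempty.

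Next, I would establish by induction on $i$ that $M_i \subseteq M(A)$. The base case $M_0 = A \subseteq M(A)$ is immediate. For the inductive step, assume $M_i \subseteq M(A)$ and let $s \in S$, i.e., $\beta < |N(s)\cap M_i| < |M_i|-\alpha$, so $s$ is an $(\alpha,\beta)$-splitter of $M_i$. If $s \notin M(A)$, then since $M_i \subseteq M(A)$ and $s \notin M(A)$, Lemma~\ref{lem:splitf} applied to $B = M(A)$ yields that $s$ is also an $(\alpha,\beta)$-splitter of $M(A)$, contradicting the fact that $M(A)$ is an $(\alpha,\beta)$-module. Hence every element of $S$ lies in $M(A)$, and therefore $M_{i+1} = M_i \cup S \subseteq M(A)$. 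Termination is guaranteed because each iteration strictly increases $|M_i|$ by at least one (the loop runs only while $S\neq\emptyset$), and $M_i \subseteq V$. When the algorithm stops, $S = \emptyset$ means that the returned $M_i$ has no $(\alpha,\beta)$-splitter, so it is itself an $(\alpha,\beta)$-module containing $A$; combined with $M_i \subseteq M(A)$ and the minimality of $M(A)$, we conclude $M_i = M(A)$, and uniqueness again follows from closure under intersection.

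For the complexity bound, I would maintain, for each vertex $x \in V\setminus M_i$, a counter $c(x) = |N(x) \cap M_i|$, together with the current value of $|M_i|$. Initialization takes $O(n+m)$. Then testing membership in $S$ reduces to checking whether $\beta < c(x) < |M_i|-\alpha$, which is $O(1)$ per vertex. Whenever a vertex $v$ is added to $M_i$, we update $c(x)$ for each neighbour $x$ of $v$ in $O(\deg(v))$ time; summed over the entire run this update work is $O(m)$, because each vertex is added at most once. Each of the at most $n$ iterations of the while loop also scans the $O(n)$ vertices outside $M_i$ to identify $S$ and to check whether it is empty, contributing $O(n^2)$ overall. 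The total running time is $O(n^2 + m) = O(m\cdot n)$, as claimed.

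The main obstacle is conceptual rather than computational: one must see that Lemma~\ref{lem:splitf} forces every splitter encountered along the way to belong to \emph{every} $(\alpha,\beta)$-module that already contains the current $M_i$. Once this monotonicity principle is in place, both correctness and the running time are straightforward.
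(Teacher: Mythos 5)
Your proof is correct and follows essentially the same route as the paper's: the key step in both is that, by Lemma~\ref{lem:splitf}, every $(\alpha,\beta)$-splitter of the current set $M_i$ must belong to any $(\alpha,\beta)$-module containing $M_i$, so the iterated closure lands exactly on the intersection of all $(\alpha,\beta)$-modules containing $A$. Your write-up is simply more explicit than the paper's (spelling out the invariant $M_i\subseteq M(A)$, termination, and a counter-based implementation), and is fine as is.
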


\begin{proof} 
    If $A$ is an $(\alpha, \beta)$-module, then in line 2, $S=\emptyset$; otherwise, all the elements of $S$ have to be added into $M(A)$. In other words, using Lemma~\ref{lem:splitf}, there is no $(\alpha, \beta)$-module $M$ such that : $A \subsetneq M \subsetneq A \cup S$.
    At the end of the while loop, either $M_i=V$ or we have found a non-trivial $(\alpha, \beta)$-module that contains $A$.
    
    \ERIC{This algorithm obviously runs in $O(m \cdot n)$ time.}
    \qed
\end{proof} 

Algorithm~\ref{alg:less_naive} proposes a different implementation that uses a graph search approach to compute the minimal $(\alpha, \beta)$-module containing $A$. This will allow us to achieve a linear running time.

\begin{algorithm}[t]
\label{alg:less_naive}
\LinesNumbered
\SetAlgoVlined
\SetKwInput{Input}{Input} \SetKwInput{Output}{Output}
\caption{Computing minimal $(\alpha, \beta)$-modules.}
    \Input{A graph $G$ and $A \subseteq V(G)$ with $|A| \geq \alpha +\beta +2$.}
    \Output{$M(A)$, the minimal $(\alpha, \beta)$-module that contains $A$}
    
    $OPEN \leftarrow A$\; $M(A) \leftarrow \emptyset$\; 
    \ForEach{$u \in V$}{
        $CLOSED(u) \leftarrow FALSE $; $edge(u) \leftarrow 0$; $non$-$edge(u) \leftarrow 0$\;
    }
    \While{$OPEN \neq \emptyset$}{
        Select a vertex $z$ from $OPEN$ and delete $z$ from $OPEN$\;
		Add $z$ to $M(A)$\;
		$CLOSED(z) \leftarrow TRUE$\;
        \ForEach{$u$ neighbour of $z$}{
            \If{$CLOSED(u) = FALSE$ and $u \notin M(A)$}{
                $edge(u) \leftarrow edge(u)+1$\;
                \If { $\beta < edge(u)$ and $\alpha < non$-$edge(u)$ }{ 
                    Add $u$ to OPEN
                }
            }
         }
        \ForEach{$v$ non-neighbour of $z$}{
            \If{$CLOSED(u) = FALSE$ and $u \notin M(A)$}{
                $non$-$edge(v) \leftarrow non$-$edge(v)+1$\;
                \If { $\beta < edge(u)$ and $\alpha < non$-$edge(u)$ }{ 
                Add $v$ to OPEN
                }
             }
        }
    } 
\end{algorithm}

\begin{theorem}
\label{thm:linear_time}
    Algorithm~\ref{alg:less_naive} can be implemented in $O(m+n)$ time.
\end{theorem}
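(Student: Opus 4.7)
The plan is to show that, although the pseudocode of Algorithm~\ref{alg:less_naive} conceptually visits every non-neighbour of each vertex drawn from $OPEN$ (which would give a total cost of $\Theta(n\cdot|M(A)|)$), an equivalent implementation can avoid the non-neighbour loop entirely. The crucial identity is that, for every $v \notin M(A)$,
$$non\text{-}edge(v) \;=\; |M(A)| - edge(v),$$
which holds automatically as long as $edge(v)$ is incremented each time a neighbour of $v$ is absorbed into $M(A)$. The splitter test $\beta < edge(v)$ and $\alpha < non\text{-}edge(v)$ can therefore be rewritten as $edge(v) > \beta$ and $|M(A)| > \alpha + edge(v)$, and the explicit non-neighbour pass becomes superfluous.

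First I would verify correctness through the invariant that, throughout the execution, $edge(v) = |N(v) \cap M(A)|$ for every $v \notin M(A)$, and that a vertex is pushed onto $OPEN$ exactly when it first becomes an $(\alpha,\beta)$-splitter of the current $M(A)$. By Lemma~\ref{lem:splitf}, any such vertex is also a splitter of every superset, so deferring its absorption is safe; the output therefore coincides with the modular closure computed by Algorithm~\ref{alg:naive}.

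For the running time I would maintain, alongside the $edge$ counters, a bucket array $D[\cdot]$ in which every waiting vertex $v$ with $edge(v) > \beta$ is stored in $D[\alpha + edge(v) + 1]$; this index is exactly the value of $|M(A)|$ at which $v$ passively crosses the splitter threshold, provided no further neighbour of $v$ enters $M(A)$ in the meantime. When $z$ is popped from $OPEN$ and added to $M(A)$, the implementation performs two actions: (i) scan $N(z)$ and, for each neighbour $v \notin M(A)$, increment $edge(v)$, test the splitter condition in place, and re-bucket $v$ if it is not immediately promoted to $OPEN$; and (ii) flush the current $D[|M(A)|]$ into $OPEN$ to catch all vertices whose passive threshold has just been reached. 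Because incrementing $edge(v)$ shifts the threshold up by exactly one, a vertex re-bucketed in step (i) lands in $D[|M(A)|+1]$ and cannot be erroneously flushed by step (ii), so the two substeps interact cleanly.

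The main obstacle, as I see it, is not the correctness of the closure computation but the bucket bookkeeping: one must be convinced that every promotion is detected in the correct iteration. Once this is settled, the complexity analysis is straightforward: each edge contributes $O(1)$ work at most once (when the first of its endpoints enters $M(A)$), each vertex enters or leaves a bucket at most $\deg(v)+1$ times over the whole execution, and the initialization and bucket flushes sum to $O(n)$, yielding the claimed $O(m+n)$ bound.
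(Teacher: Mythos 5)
Your proof is correct, but it reaches the $O(m+n)$ bound by a genuinely different implementation than the paper's. The paper keeps both counters explicit and relies on partition refinement in the style of \cite{HabibPV99}: vertices are grouped into parts $B(i,j)$ according to their $(edge, non\text{-}edge)$ values, and pivoting on $N(z)$ splits each part into $B(i,j)\cap N(z)$ (new label $(i+1,j)$) and $B(i,j)\setminus N(z)$ (new label $(i,j+1)$) in $O(|N(z)|)$ time, so the non-neighbour updates are performed implicitly by leaving vertices in the residual part. You instead eliminate the $non\text{-}edge$ counter altogether via the identity $non\text{-}edge(v)=|M(A)|-edge(v)$, rewrite the splitter test as $edge(v)>\beta$ and $|M(A)|>\alpha+edge(v)$, and detect the ``passive'' crossings (a vertex becoming a splitter merely because $|M(A)|$ grew) with a threshold-bucket array flushed once per iteration. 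Both arguments are sound; yours is more elementary and self-contained (it does not invoke the partition-refinement machinery, and the per-edge $O(1)$ accounting plus the $\deg(v)+1$ bound on re-bucketings is fully explicit), while the paper's choice of partition refinement is what it later reuses to share work across the $O(n^{\alpha+\beta+2})$ starting sets in Theorem~\ref{pasimal}, something the bucket scheme does not obviously provide. One small point worth stating explicitly in your write-up: at the start of each iteration all buckets $D[k]$ with $k\le|M(A)|$ are empty (a vertex is only ever bucketed at a strictly future index, since a vertex already satisfying both conditions is promoted to $OPEN$ rather than bucketed), which is the invariant that makes the single flush of $D[|M(A)|]$ per iteration sufficient.
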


\begin{proof} 
    We can implement Algorithm~\ref{alg:less_naive} as a kind of a graph search, using an algorithm less naive than Algorithm~\ref{alg:naive}. 
    Algorithm~\ref{alg:less_naive} also computes the minimal $(\alpha, \beta)$-module that contains $A$ in a graph search manner. 

    At the end of Algorithm~\ref{alg:less_naive}, the set $M(A)$ contains a minimal $(\alpha, \beta)$-module that contains $A$.
    At first glance, this algorithm requires $O(n^2)$ operations, since for each vertex we must consider all its neighbours and all its non-neighbours.
    
    However,  if we use a partition refinement technique as defined in~\cite{HabibPV99}, starting with a partition of the vertices as $\mathcal{P} = \{A, V\setminus A\}$. 
    We then keep in the same part, $B(i, j)$, vertices $x, y$ with $edge(x)=edge(y)=i$ and $non$-$edge(x)=non$-$edge(y)=j$.
    This way, when visiting a vertex $z$, it suffices to compute 
    \begin{align*}
        B'(i+1, j) &= B(i, j) \cap N(z) \text{, and }\\
        B''(i, j+1) &= B(i, j)-N(z),
    \end{align*} 
    for each part $B(i, j)$ in the current partition. 
    This can be done in $O(|N(z)|)$ time. 
    
    It should be noted that the parts need not to be sorted in the current partition, and we may have different parts with the same (edge, non-edge) values.
    
    Algorithm~\ref{alg:less_naive} can thus be implemented in $O(m+n)$ time.
    \qed 
\end{proof}

\begin{theorem}
\label{pasimal}
    Using Algorithm~\ref{alg:less_naive}, one can compute all the minimal non-trivial $(\alpha, \beta)$-modules of a given graph in $O(m \cdot n^{\alpha+\beta +1})$ time.
\end{theorem}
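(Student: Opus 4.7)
The plan is to reduce the task to repeated invocations of Algorithm~\ref{alg:less_naive}. The driving observation is the following: if $M$ is a minimal non-trivial $(\alpha, \beta)$-module of $G$, then $|M| \geq \alpha + \beta + 2$, and for every $A \subseteq M$ with $|A| = \alpha + \beta + 2$, the output $M(A)$ of Algorithm~\ref{alg:less_naive} is (i) an $(\alpha, \beta)$-module by correctness of the closure, (ii) non-trivial since $|M(A)| \geq |A| = \alpha + \beta + 2$, and (iii) contained in $M$ because $M$ itself is an $(\alpha, \beta)$-module containing $A$, and the closure is the intersection of all such modules by Theorem~\ref{thm:prepar}. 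Minimality of $M$ then forces $M(A) = M$. Consequently, every minimal non-trivial $(\alpha, \beta)$-module is produced as the output of Algorithm~\ref{alg:less_naive} on some seed $A$ of size $\alpha+\beta+2$.

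First I would enumerate seeds efficiently. To match the announced bound, I would anchor each seed on an edge of $G$: for every edge $uv \in E$ and every $(\alpha+\beta)$-subset $S \subseteq V \setminus \{u,v\}$, I run Algorithm~\ref{alg:less_naive} on $\{u,v\} \cup S$. This uses $O(m \cdot n^{\alpha+\beta})$ calls, each costing $O(n+m)$ time by Theorem~\ref{thm:linear_time}, for a total of $O(m \cdot n^{\alpha+\beta+1})$. Then I would post-process the collected family: canonicalize each returned module (e.g., as a sorted vertex list), deduplicate using hashing or a trie, discard the degenerate output $V$, and retain only the inclusion-minimal elements by sweeping in order of increasing size with cheap containment checks. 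With suitable data structures, this post-processing fits within the same asymptotic budget as the closure computations themselves.

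The main obstacle is ensuring both completeness of the enumeration and a tight bookkeeping of the complexity. For completeness, every minimal non-trivial $(\alpha, \beta)$-module $M$ must contain at least one edge of $G$ on which to anchor a seed; when $M$ induces an independent set, no such anchor exists, and one must treat that case separately, for instance via a symmetric pass on $\overline{G}$ using Proposition~\ref{prop:basic}.\ref{prop:basic_2}, which identifies the $(\alpha,\beta)$-modules of $G$ with the $(\beta,\alpha)$-modules of $\overline G$. One then has to argue that either the symmetric pass can be charged to the same $O(m \cdot n^{\alpha+\beta+1})$ budget, or that it can be avoided altogether by a slightly smarter seed selection. Finally, one must verify that deduplication and minimality filtering can be implemented so as not to blow up the running time, which requires a canonical encoding of each module and a structure supporting containment queries in time proportional to the set size.
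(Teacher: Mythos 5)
Your correctness argument --- that for any seed $A$ of size $\alpha+\beta+2$ inside a minimal non-trivial module $M$, the closure $M(A)$ is a non-trivial $(\alpha,\beta)$-module contained in $M$ and hence equal to $M$ by minimality --- is exactly the justification the paper relies on. But your complexity route diverges from the paper's, and it does not reach the stated bound. The paper runs Algorithm~\ref{alg:less_naive} from \emph{every} $(\alpha+\beta+2)$-subset and obtains $O(m\cdot n^{\alpha+\beta+1})$ not by reducing the number of seeds but by a global amortization: in the partition-refinement implementation, the scan of a vertex's adjacency list is charged to the parts that vertex belongs to, a vertex belongs to at most $n^{\alpha+\beta+1}$ of them, and summing $|N(v)|\cdot n^{\alpha+\beta+1}$ over $v$ gives the bound. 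Your scheme instead cuts the seed count to $O(m\cdot n^{\alpha+\beta})$ by anchoring each seed on an edge and pays $O(n+m)$ per independent run; that totals $O\bigl(m\,n^{\alpha+\beta}(n+m)\bigr)=O\bigl(m\,n^{\alpha+\beta+1}+m^2 n^{\alpha+\beta}\bigr)$, and the second term exceeds the target by a factor of $m/n$ on dense graphs. So even the easy half of your enumeration does not close the arithmetic.

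The second gap is the one you flag yourself, and it is fatal rather than cosmetic: a minimal non-trivial $(\alpha,\beta)$-module need not contain an edge. On a graph with an independent part, every $(\alpha+\beta+2)$-subset of that part can be a minimal non-trivial module, so the edge-anchored pass misses all of them. The symmetric pass on $\overline G$ does not rescue the budget, since $\overline G$ may have $\Theta(n^2)$ edges; equivalently, anchoring on non-edges brings you back to $\Theta(n^{\alpha+\beta+2})$ seeds at $O(n+m)$ each, i.e.\ $O(m\cdot n^{\alpha+\beta+2})$. The ``slightly smarter seed selection'' you defer to is therefore precisely the missing ingredient: without it, or without the paper's amortization over the full family of $\binom{n}{\alpha+\beta+2}$ seeds, the claimed running time is not established. (The post-processing --- deduplication and minimality filtering --- is a reasonable addition that the paper leaves implicit, and is not where the difficulty lies.)
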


\begin{proof}
To do so, it suffices to use Algorithm~\ref{alg:less_naive} starting from every subset of $\alpha + \beta + 2$ vertices.
    There exists $O(n^{\alpha + \beta +2})$ such subsets. 
    This will therefore give us a straight $O(n^{\alpha + \beta +2})$ running time. 
    However, we can use partition refinement to out advantage by using the neighbourhood of one vertex exactly once. 
    But a vertex can belong to at most $n^{\alpha + \beta +1}$ parts in the partition refinement, which yields an algorithm with $O(m \cdot n^{\alpha + \beta +1})$ running time. 
    \qed 
\end{proof}

\noindent
\textbf{Remark}: If we consider the $(0,0)$ case, i.e., the standard case, in Algorithm~\ref{alg:less_naive}, we find the implementation of the algorithm in~\cite{JSC72}, which also computes all the minimal modules in $O(m\cdot n)$ time -- to be compared to the original one in $O(n^4)$ time.

\begin{corollary}
\label{cor:overlap}
    Using theorem~\ref{pasimal}, one can compute a covering of $V$ with an overlapping family of minimal $(\alpha, \beta)$-modules in $O(m \cdot n^{\alpha+\beta +2})$ time. Moreover, the overlapping of any two members of the obtained covering is bounded by ${\alpha+\beta} +1$.
\end{corollary}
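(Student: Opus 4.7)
My approach is to apply Algorithm~\ref{alg:less_naive} as a black box, seeded at every small subset of $V$, and then extract the desired covering from the resulting family. The heart of the argument reuses Theorem~\ref{thm:linear_time} for efficiency and Proposition~\ref{prop:minimal} for the overlap bound.

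First, I would enumerate every subset $A\subseteq V$ of size exactly $\alpha+\beta+2$; there are $O(n^{\alpha+\beta+2})$ such subsets. On each seed $A$ I would run Algorithm~\ref{alg:less_naive} \emph{from scratch} (without the partition-refinement amortization used inside the proof of Theorem~\ref{pasimal}), which by Theorem~\ref{thm:linear_time} returns the unique minimal $(\alpha,\beta)$-module $M(A)\supseteq A$ in $O(m+n)$ time. Summing over all seeds gives the claimed running time of $O(m\cdot n^{\alpha+\beta+2})$, assuming $m\geq n$.

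Second, to verify that the family $\mathcal{F}=\{M(A):|A|=\alpha+\beta+2\}$ covers $V$: every vertex $v\in V$ lies in at least one seed $A$ (provided $|V|\geq\alpha+\beta+2$; otherwise $G$ is $(\alpha,\beta)$-degenerate by Corollary~\ref{cor:degenerate} and a trivial covering suffices), and hence belongs to $M(A)\in\mathcal{F}$. I would then filter $\mathcal{F}$ down to its inclusion-minimal elements, which are precisely minimal non-trivial $(\alpha,\beta)$-modules by the same argument underlying Theorem~\ref{pasimal}. This filtered family still covers $V$, because every $v$ is covered by some $M(A)$, hence by some minimal element beneath it.

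Third, the pairwise overlap bound falls out immediately from Proposition~\ref{prop:minimal}: for any two distinct members $M,M'$ of the extracted covering that overlap, $M\cap M'$ is a trivial $(\alpha,\beta)$-module; since $M\cap M'\subsetneq M\neq V$, we must have $|M\cap M'|\leq\alpha+\beta+1$, as required.

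The main (and essentially only) delicate point I foresee is the bookkeeping around degenerate cases: when $|V|\leq\alpha+\beta+2$, or when $G$ is $(\alpha,\beta)$-prime so that $V$ itself is the only available $(\alpha,\beta)$-module beyond trivial ones, the construction above should be interpreted as returning the single covering set $V$, for which the stated overlap bound holds vacuously. Once those corner cases are dispatched, the proof reduces cleanly to the three steps above.
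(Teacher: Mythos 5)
Your overall architecture (seed every $(\alpha+\beta+2)$-subset, close each seed with Algorithm~\ref{alg:less_naive}, keep only the inclusion-minimal results, and derive the overlap bound from Proposition~\ref{prop:minimal}) matches the paper's, and your time analysis and your overlap argument are both fine. The gap is in your covering step. You assert that ``every $v$ is covered by some $M(A)$, hence by some minimal element beneath it,'' but that inference does not hold: if $M(A)$ is not inclusion-minimal in $\mathcal{F}$ it is discarded by your filter, and the inclusion-minimal member $M(A')\subsetneq M(A)$ that survives in its place has no reason to contain $v$. A vertex can lie only in non-trivial $(\alpha,\beta)$-modules each of which properly contains a smaller non-trivial module avoiding that vertex, in which case the vertex belongs to \emph{no} minimal non-trivial $(\alpha,\beta)$-module and your filtered family fails to cover $V$.

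The paper is explicit about exactly this point: after computing the minimal non-trivial $(\alpha,\beta)$-modules via Theorem~\ref{pasimal}, it observes that ``this family can possibly not be a full covering of $V$'' and completes the covering by adding each uncovered vertex as a singleton. Singletons are (trivial) $(\alpha,\beta)$-modules and are disjoint from every retained member, so they cost nothing against the $\alpha+\beta+1$ overlap bound. Adding that one step repairs your argument; everything else --- the $O(n^{\alpha+\beta+2})$ enumeration, the $O(m+n)$ per-seed cost from Theorem~\ref{thm:linear_time}, the identification of the inclusion-minimal elements of $\mathcal{F}$ with the minimal non-trivial modules, the handling of the degenerate and prime corner cases, and the use of Proposition~\ref{prop:minimal} for the overlap bound --- is sound.
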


\begin{proof}
    Using theorem~\ref{pasimal}, we can compute an overlapping family of minimal $(\alpha, \beta)$-modules in $O(m \cdot n^{\alpha+\beta +1})$ time. But this family can possibly not be a full covering of $V$ since some vertices may not belong to any minimal non-trivial $(\alpha, \beta)$-module. 
    To obtain a full covering, we then simply add the remaining vertices as singletons.
    \qed 
\end{proof}

Corollary~\ref{cor:overlap} can be very interesting if we are looking for overlapping communities in social networks, where the overlapping is bounded by $\alpha + \beta + 1$.

Going a step further, we can use Theorem~\ref{thm:presquepar} and merge every pair $A, B$ of $(\alpha, \beta)$-modules with $|A \cap B | \geq \alpha + \beta +1$, either by keeping $A \cup B$ as a $(2\alpha, 2 \beta)$-module, or by computing $M(A \cup B)$, the minimal $(\alpha, \beta)$-module that contains $A \cup B$.
This depends however on the structure of the maximal $(\alpha, \beta)$-modules, and unfortunately we do not know yet under which conditions there exists a unique partition into maximal $(\alpha, \beta)$-modules.

\begin{corollary}
\label{cor:prime_check}
    Checking if a graph is $(\alpha, \beta)$-prime can be done in $O(m \cdot n^{\alpha+\beta +1})$ time.
\end{corollary}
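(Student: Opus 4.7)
The plan is to derive this corollary almost directly from Theorem~\ref{pasimal}, since computing all minimal non-trivial $(\alpha,\beta)$-modules already dominates the running time of any reasonable primality test. The key observation is that a graph $G$ is $(\alpha, \beta)$-prime if and only if no minimal non-trivial $(\alpha,\beta)$-module exists at all: by Definition~\ref{def:trivial_alpha_beta_module}, every non-trivial $(\alpha,\beta)$-module $M$ has $\alpha+\beta+2 \leq |M| \leq |V|-1$, and by the closure under intersection (Theorem~\ref{thm:prepar}) every such $M$ contains a minimal one.

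The algorithm I would use is exactly the procedure behind Theorem~\ref{pasimal}. First, enumerate all subsets $A\subseteq V$ of size exactly $\alpha+\beta+2$. For each such $A$, run Algorithm~\ref{alg:less_naive} to compute $M(A)$, the unique minimal $(\alpha,\beta)$-module containing $A$. If for every choice of $A$ the returned set satisfies $M(A)=V$, then no non-trivial $(\alpha,\beta)$-module exists and $G$ is declared $(\alpha,\beta)$-prime; otherwise, as soon as some $M(A)\neq V$ is found, we have exhibited a non-trivial $(\alpha,\beta)$-module and can return that $G$ is not $(\alpha,\beta)$-prime.

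Correctness follows from the two directions I sketched. If $G$ is not $(\alpha,\beta)$-prime, then it admits some non-trivial $(\alpha,\beta)$-module $M$ with $|M|\geq \alpha+\beta+2$; pick any subset $A\subseteq M$ with $|A|=\alpha+\beta+2$. By minimality, $M(A)\subseteq M$, so $M(A)\neq V$, and our test detects it. Conversely, if some iteration returns $M(A)\neq V$, then $M(A)$ is a genuine $(\alpha,\beta)$-module of size at least $\alpha+\beta+2$ and strictly less than $|V|$, hence non-trivial.

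For the complexity, the running time bound is inherited verbatim from Theorem~\ref{pasimal}: the partition refinement argument there already amortizes across all $\binom{n}{\alpha+\beta+2}=O(n^{\alpha+\beta+2})$ starting sets and yields the $O(m\cdot n^{\alpha+\beta+1})$ bound. No new technical ingredient is required; the only minor point to verify is that one can exit as soon as a non-trivial minimal $(\alpha,\beta)$-module is encountered, so the worst case cost remains bounded by that of exhausting the enumeration. I do not expect any real obstacle here, since the heavy lifting has been done in Theorem~\ref{thm:linear_time} and Theorem~\ref{pasimal}; the corollary is essentially a rephrasing of them in terms of the decision problem ``$(\alpha,\beta)$-prime?''.
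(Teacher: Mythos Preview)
Your proposal is correct and follows exactly the approach the paper intends: the paper's own proof is the single line ``Easy using Theorem~\ref{pasimal}'', and you have simply spelled out what that means, namely that $G$ is $(\alpha,\beta)$-prime if and only if $M(A)=V$ for every starting set $A$ of size $\alpha+\beta+2$, with the running time inherited from Theorem~\ref{pasimal}.
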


\begin{proof}
    Easy using Theorem~\ref{pasimal}.
    \qed 
\end{proof}

\section{An Application on Bipartite Graphs}
\label{sec:applications}

In this section, let $G = (X, Y, E)$ be a bipartite graph with parts $X$ and $Y$. 
By allowing $\alpha+\beta$ errors in the decomposition, $(\alpha, \beta)$-modules can be made up with vertices from both $X$ and $Y$.
However, in some applications, we are forced to consider $X$ and  $Y$ separately.
Consider for instance the setting where $X$ and $Y$ represent the sets of customers and products, or the sets of DNA sequences and organisms, in which case one would want to find regularities on each side of the bipartition. 

\begin{definition}
\label{def:F_alpha_beta}
For a given bipartite graph $G = (X, Y, E)$, we let 
  \begin{align*}
      {\cal F}_{\alpha, \beta}(X) = \{ M : M \text{ is an } (\alpha, \beta)\text{-module of $G$ and } M \subseteq X \}.
  \end{align*}
  Note that $X$ is not always an $(\alpha, \beta)$-module of $G$.
\end{definition}

\begin{proposition}
\label{prop:F_almost_partitive}
 For every two sets  $A, B \in {\cal F}_{\alpha, \beta}(X)$,
  $A \cap B$, $A \setminus B$ and $B\setminus A$ are all in ${\cal F}_{\alpha, \beta}(X)$.
\end{proposition}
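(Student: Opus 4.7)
The plan is to reduce everything to Theorem~\ref{thm:prepar} and then exploit bipartiteness to kill the residual splitters. First, $A \cap B \in {\cal F}_{\alpha, \beta}(X)$ is immediate: Theorem~\ref{thm:prepar}.2 already tells us that $A \cap B$ is an $(\alpha,\beta)$-module of $G$, and since $A \cap B \subseteq A \subseteq X$, membership in ${\cal F}_{\alpha, \beta}(X)$ follows.

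Next I would handle $A \setminus B$ (the case of $B \setminus A$ being symmetric). Again by Theorem~\ref{thm:prepar}.2, any $(\alpha,\beta)$-splitter of $A \setminus B$ must already lie in $A \cap B$. So it suffices to rule out splitters inside $A \cap B$. Here is where bipartiteness enters: since $A \cap B \subseteq X$ and $A \setminus B \subseteq X$, and $G = (X, Y, E)$ is bipartite, any vertex $z \in A \cap B$ has all its neighbours in $Y$, hence
\begin{align*}
    |N(z) \cap (A \setminus B)| = 0.
\end{align*}
For $z$ to be an $(\alpha,\beta)$-splitter of $A \setminus B$, Definition~\ref{def:alpha_beta_splitter} would require $\beta < 0 < |A \setminus B| - \alpha$, which is impossible because $\beta \geq 0$. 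Therefore $S_{\alpha,\beta}(A \setminus B) = \emptyset$, so $A \setminus B$ is an $(\alpha,\beta)$-module of $G$; since it is contained in $X$, it lies in ${\cal F}_{\alpha, \beta}(X)$. The same argument, with the roles of $A$ and $B$ exchanged, gives $B \setminus A \in {\cal F}_{\alpha,\beta}(X)$.

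There is no real obstacle here; the only subtlety is the edge case where one of the three sets is empty, which is covered since the empty set vacuously satisfies Definition~\ref{def:alpha_beta_module}. The whole content of the proposition is that, when modules are restricted to one side of a bipartition, the ``leftover'' set differences, which in the general Theorem~\ref{thm:prepar} could still be split by vertices of $A \cap B$, are automatically safe because bipartiteness forces those candidate splitters to have zero neighbourhood inside the set.
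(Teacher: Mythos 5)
Your proof is correct and follows essentially the same route as the paper: invoke Theorem~\ref{thm:prepar} to confine any splitters of $A \setminus B$ and $B \setminus A$ to $A \cap B$, then observe that a vertex of $A \cap B \subseteq X$ has no neighbours in a subset of the independent side $X$, so it cannot satisfy the splitter condition. The only difference is that you spell out the inequality $\beta < 0$ explicitly, which the paper leaves implicit.
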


\begin{proof}
    Using Theorem~\ref{thm:prepar}, the only $(\alpha, \beta)$-splitters of the sets $A \setminus B$ and $B\setminus A$ must belong to $A \cap B$; but since $A, B \subseteq X$, and $X$ is an independent set, this is not possible.
    \qed
\end{proof}

It should be noticed here that for $A \subseteq X$, the minimal $(\alpha, \beta)$-module that contains $A$ does not always belong to ${\cal F}_{\alpha, \beta}(X)$, since we may have to add $(\alpha, \beta)$-splitters from $Y$. Therefore we have to use an algorithmic approach different from those developed in the previous section in order to compute ${\cal F}_{\alpha, \beta}(X)$.

\begin{definition}
\label{def:etwins}
Two sets  $A,B \subseteq V$, $A\neq B$, with $|A|=|B|=\alpha + \beta+1$, are said to be  \textbf{false \boldmath$(\alpha, \beta)$-twin}
(resp. \textbf{true \boldmath$(\alpha, \beta)$-twin}) in $G$ if they satisfy the following three conditions:
    \begin{enumerate}
        \item $A \cup B$ is an $(\alpha, \beta)$-module,
        \item $\forall x \in A$, $ x \in N_{\alpha}(B)$ 
        (resp. $ x \in \overline{N}_{\beta}(B)$),
                \item $\forall y \in B$, $ y \in N_{\alpha}(A)$ 
        (resp. $ y \in \overline{N}_{\beta}(A)$).
    \end{enumerate}
\end{definition}

Observe that $A$ and $B$ are false $(\alpha, \beta)$-twin sets in $G$ if and only if $A$ and $B$ are true $(\alpha, \beta)$-twin sets in $\overline{G}$.

\begin{proposition}
\label{prop:equivalence}
    Being (true or false) $(\alpha, \beta)$-twin is an equivalence relation on subsets of vertices.
\end{proposition}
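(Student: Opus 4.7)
The plan is to verify reflexivity, symmetry, and transitivity for each of the two relations (false and true $(\alpha,\beta)$-twin) on the set of $(\alpha+\beta+1)$-element subsets of $V$. Reflexivity is obtained by convention, or by noting that conditions~(1)--(3) of Definition~\ref{def:etwins} are trivially satisfied when $A=B$: the union $A\cup A = A$ is an $(\alpha,\beta)$-module by Lemma~\ref{lem:splitter}, and conditions~(2) and~(3) are vacuous since they only constrain pairs of distinct vertices lying in opposite sets. Symmetry is immediate: condition~(1) is symmetric in $A$ and $B$ since $A\cup B = B\cup A$, and conditions~(2) and~(3) are exchanged under the swap $A\leftrightarrow B$.

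For transitivity, fix three pairwise distinct sets $A,B,C$ of size $\alpha+\beta+1$ and assume $A$ is false $(\alpha,\beta)$-twin with $B$, and $B$ is false $(\alpha,\beta)$-twin with $C$. The goal is to verify conditions~(1)--(3) of Definition~\ref{def:etwins} for the pair $(A,C)$; the true-twin case then follows by passing to the complement graph and applying Proposition~\ref{prop:basic}.\ref{prop:basic_2} together with the observation recorded just after Definition~\ref{def:etwins} that false twins in $G$ correspond to true twins in $\overline{G}$.

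For conditions~(2) and~(3), pick any $x\in A\setminus (B\cup C)$. Since $x\in N_\alpha(B)$, we have $|N(x)\cap B|\geq \beta+1 > \beta$. Because $B\cup C$ is an $(\alpha,\beta)$-module of size at least $\alpha+\beta+2$, Lemma~\ref{lem:comptages}.\ref{lem:comptages.5} partitions $V\setminus(B\cup C)$ into $N_\alpha(B\cup C)$ and $\overline{N}_\beta(B\cup C)$, and the lower bound on $|N(x)\cap B|$ forces $x\in N_\alpha(B\cup C)$. A direct cardinality count then yields $|N(x)\cap C|\geq \beta+1$, i.e.\ $x\in N_\alpha(C)$; the symmetric argument starting from $z\in C\setminus(A\cup B)$ gives condition~(3), and any vertex of $A\cap C$ is handled without work.

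Condition~(1), namely that $A\cup C$ is an $(\alpha,\beta)$-module, is the crux and the main obstacle. For each $v\in V\setminus(A\cup C)$ one has to show that $|N(v)\cap(A\cup C)|\geq |A\cup C|-\alpha$ or $|N(v)\cap(A\cup C)|\leq \beta$. The argument splits according to whether $v\in B\setminus(A\cup C)$ or $v\notin A\cup B\cup C$. In the second case, the $(\alpha,\beta)$-modular structure of both $A\cup B$ and $B\cup C$ forces $v$ to behave uniformly with respect to each of these unions (via Lemma~\ref{lem:comptages}.\ref{lem:comptages.5}), and this uniform behavior propagates to $A\cup C$. In the first case, $v\in B$ is $\alpha$-adjacent to both $A$ and $C$, so the non-neighbors of $v$ in $A\cup C$ are tightly bounded. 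The delicate point is that Theorem~\ref{thm:presquepar} only guarantees a $(2\alpha,2\beta)$-module structure on an analogous union, so to obtain the sharp $(\alpha,\beta)$-bound here one must exploit simultaneously the minimality of the sizes $|A|=|B|=|C|=\alpha+\beta+1$ (which by Lemma~\ref{lem:comptages}.\ref{lem:comptages.3} forbids $(\alpha,\beta)$-splitters internal to each of the three sets) together with the specific $\alpha$-connection patterns between them, and very likely one also has to treat separately the possible non-trivial pairwise overlaps $|A\cap B|$, $|B\cap C|$, $|A\cap C|$.
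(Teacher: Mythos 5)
You are right to flag condition~(1) as ``the crux and the main obstacle,'' but what you offer for it is a plan rather than a proof, and the step cannot in fact be completed: transitivity of Definition~\ref{def:etwins} fails in general, precisely because the $(\alpha,\beta)$-modularity of $A\cup B$ and of $B\cup C$ only yields a $(2\alpha,2\beta)$-type bound on $A\cup C$ (exactly the loss you observe in Theorem~\ref{thm:presquepar}), and the extra structure you invoke does not recover the factor of two. Concretely, take $\alpha=1$, $\beta=0$, and the graph on $\{a_1,a_2,b_1,b_2,c_1,c_2,y\}$ in which $\{a_1,a_2,b_1,b_2,c_1,c_2\}$ is an independent set and $N(y)=\{a_1,b_1,b_2,c_1\}$. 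With $A=\{a_1,a_2\}$, $B=\{b_1,b_2\}$, $C=\{c_1,c_2\}$ (each of size $\alpha+\beta+1=2$), conditions~(2) and~(3) for true twins hold vacuously, and $y$ has exactly $3=|A\cup B|-\alpha$ neighbours in $A\cup B$ and in $B\cup C$, so both are $(1,0)$-modules and $A\sim B$, $B\sim C$. But $y$ has only $2$ neighbours in $A\cup C$, with $\beta=0<2<3=|A\cup C|-\alpha$, so $y$ is a $(1,0)$-splitter of $A\cup C$ and $A\not\sim C$. A non-bipartite variant with false twins is just as easy to build, so neither relation is transitive.

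Two further points. First, your claim that ``any vertex of $A\cap C$ is handled without work'' is backwards: since $N_\alpha(C)$ and $\overline N_\beta(C)$ by definition contain only vertices outside $C$, two \emph{distinct overlapping} sets can never satisfy conditions~(2)--(3), yet one can have $A\sim B$ and $B\sim C$ with $A\neq C$ and $A\cap C\neq\emptyset$ (e.g.\ in a complete graph), which is a second, independent failure of transitivity. Second, the paper states this proposition without proof, and the equivalence relation actually used in the proof of Theorem~\ref{thm:maximal_bipartite} is a different one --- two tuples are identified when every $y\in Y$ classifies them the same way into $N_\alpha$ versus $\overline N_\beta$ --- which is transitive because it is a ``same signature'' relation. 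A correct write-up should either prove the statement for that signature relation or drop condition~(1) from Definition~\ref{def:etwins}; as stated, the proposition is not provable and your argument stalls exactly where it must.
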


When applying Definition~\ref{def:etwins} to bipartite graphs, we obviously only have false $(\alpha, \beta)$-twin sets.

\begin{proposition}
\label{prop:union_false_twins}
    A set $M \subseteq X$ is an $(\alpha, \beta)$-module if and only if $M$ is a union of false $(\alpha, \beta)$-twin sets.
\end{proposition}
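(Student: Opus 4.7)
The plan is to prove the two implications separately, with the reverse direction being the more delicate one.

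For the forward direction, I would assume $M \subseteq X$ is a non-trivial $(\alpha,\beta)$-module, i.e., $|M| \geq \alpha+\beta+1$, and take the family of all $(\alpha+\beta+1)$-subsets of $M$ as the decomposition. The key is to verify that any two such subsets $A, B \subseteq M$ are false $(\alpha,\beta)$-twins, which reduces to showing that $A \cup B$ is an $(\alpha,\beta)$-module of $G$. For $z \notin A \cup B$: if $z \in X$, bipartiteness gives $|N(z) \cap (A \cup B)| = 0 \leq \beta$; if $z \in Y$, the module property of $M$ forces $z$ into $N_\alpha(M)$ or $\overline{N}_\beta(M)$, and each condition descends to the subset $A \cup B \subseteq M$. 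The cross-adjacency conditions of Definition~\ref{def:etwins} are vacuous since $A, B \subseteq X$ is an independent set.

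For the reverse direction, I would assume $M = \bigcup_i A_i$ is a union of pairwise false $(\alpha,\beta)$-twin subsets of $X$ and verify that no $z \notin M$ is an $(\alpha,\beta)$-splitter of $M$. Vertices $z \in X \setminus M$ are automatic by bipartiteness. For $z \in Y$, the intermediate step is a \emph{uniform-type} lemma: each $A_i$ has size exactly $\alpha+\beta+1$ and is thus a trivial $(\alpha,\beta)$-module by Lemma~\ref{lem:splitter}, so Lemma~\ref{lem:comptages}.\ref{lem:comptages.4} places $z$ in exactly one of $N_\alpha(A_i)$ or $\overline{N}_\beta(A_i)$; I would argue this classification is the same for every $A_i$. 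Indeed, if $z$ were an $\alpha$-neighbor of $A_i$ but a $\beta$-non-neighbor of $A_j$, then $|N(z) \cap A_i| \geq \beta+1$ would force $z$ into $N_\alpha(A_i \cup A_j)$ (using that $A_i \cup A_j$ is a module by the twin hypothesis, so $z$ cannot be a $\beta$-non-neighbor there), and this in turn would force $z$ to be an $\alpha$-neighbor of $A_j$, a contradiction.

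The main obstacle, and what I expect to be the hard part, is promoting this uniform-type property into the global bound $|N(z) \cap M| \leq \beta$ (respectively, $|M \setminus N(z)| \leq \alpha$) on the full union. Naively summing the pairwise bounds fails, since the pairwise constraints on $|N(z) \cap (A_i \cup A_j)|$ do not in general dominate $|N(z) \cap \bigcup_i A_i|$. I expect to proceed by induction on the size of the cover, at each step combining two overlapping pairwise-twin $(\alpha,\beta)$-modules whose intersection contains some $A_k$ and hence has size at least $\alpha+\beta+1$, and invoking a refinement of Theorem~\ref{thm:presquepar} that exploits the uniform size $\alpha+\beta+1$ of the twin sets to avoid the $(2\alpha, 2\beta)$-parameter blowup that would arise from a naive application of the almost-partitive property.
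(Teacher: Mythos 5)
Your forward direction is correct and is essentially the paper's own argument: for two $(\alpha+\beta+1)$-subsets $A,B$ of an $(\alpha,\beta)$-module $M\subseteq X$, conditions 2 and 3 of Definition~\ref{def:etwins} are vacuous by bipartiteness, and condition 1 descends from $M$ to $A\cup B$. The genuine problem is the converse, and you have located the fatal spot exactly: passing from the uniform classification of $z$ on each $A_i$ to a bound on $|N(z)\cap\bigcup_i A_i|$. But the induction you sketch cannot be completed, because under the reading ``$M$ is the union of some pairwise false-twin family'' the implication is simply false, so the $(2\alpha,2\beta)$ blow-up of Theorem~\ref{thm:presquepar} is not an artefact to be engineered away --- it actually occurs. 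Take $\alpha=0$, $\beta=2$, $X=\{a,b,c,p,q\}$, $Y=\{z\}$, $E=\{za,zb,zc\}$. The triples $A_1=\{a,p,q\}$, $A_2=\{b,p,q\}$, $A_3=\{c,p,q\}$ are pairwise false $(0,2)$-twins (each pairwise union meets $N(z)$ in exactly $2\le\beta$ vertices, and the side conditions hold by bipartiteness), and $z$ classifies all three uniformly as a $\beta$-non-neighbour; yet $M=A_1\cup A_2\cup A_3=X$ has $|N(z)\cap M|=3$, so $z$ is a $(0,2)$-splitter and $M$ is not a $(0,2)$-module. (The same family shows the twin relation is not transitive: $\{a,b,p\}\sim\{a,p,q\}\sim\{c,p,q\}$, but $\{a,b,p\}\cup\{c,p,q\}=X$ is split by $z$.)

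The equivalence does hold under the stronger reading that \emph{every} pair of $(\alpha+\beta+1)$-subsets of $M$ is a false-twin pair --- which is exactly what your (and the paper's) forward direction establishes. Under that reading the converse is short and needs none of the machinery you invoke: if $z\in Y$ is a splitter of $M$, then $|N(z)\cap M|\ge\beta+1$ and $|M\setminus N(z)|\ge\alpha+1$, so one can choose $A\subseteq M$ consisting of $\beta+1$ neighbours and $\alpha$ non-neighbours of $z$, and $B\subseteq M$ consisting of $\beta$ neighbours and $\alpha+1$ non-neighbours; then $z$ splits $A\cup B$, so $A$ and $B$ are not false twins (vertices of $X\setminus M$ never split $M$, by bipartiteness). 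For comparison, the paper dismisses the converse with ``directly follows from the definition'' and never engages with the difficulty you identified; note, however, that the proof of Theorem~\ref{thm:maximal_bipartite} must impose the extra label condition $|\bigcup_{A\in F}\lambda(y,A)|\le\alpha$ (resp.\ $\le\beta$) on unions of twin tuples, which is in effect the paper's own acknowledgement that pairwise twinness of a generating family is not sufficient. You should restate the proposition in the ``all pairs of subsets'' form and prove the converse by the contrapositive above, rather than pursue the induction you propose.
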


\begin{proof}
    Let $A, B \subseteq M$, with $|A|=|B|=\alpha +\beta +1$. 
    Pick any vertex $z \in Y$.
    If $z \in N_{\alpha}(A)$, then $z \in N_{\alpha}(M)$ and therefore $z \in N_{\alpha}(B)$. 
    Therefore, $A$ and $B$ are false $(\alpha, \beta)$-twin sets since they are both included in $X$.
    
    The converse directly follows from Definition~\ref{def:etwins}.
    \qed 
\end{proof}

Consequently, in terms of $(\alpha + \beta +1)$-tuples, the sets of false $(\alpha, \beta)$-twin sets partition the $(\alpha + \beta +1)$-tuples.
Furthermore, using the notion of false $(\alpha, \beta)$-twin sets, we obtain the following theorem (recall that for a graph $G$, ${\cal F}_{\alpha, \beta}$ is the set of its $(\alpha, \beta)$-modules whose elements are in $X$). 

\begin{theorem}
\label{thm:maximal_bipartite}
    For a given bipartite graph $G=(X, Y, E)$, the maximal elements of ${\cal F}_{\alpha, \beta}(X)$ can be computed in $O(n^{\alpha + \beta}(n + m))$ time.
\end{theorem}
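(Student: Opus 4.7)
The plan is to compute, for each false $(\alpha,\beta)$-twin equivalence class $C$ on $(\alpha+\beta+1)$-subsets of $X$ (which, by Proposition~\ref{prop:equivalence}, partition these subsets), the vertex-union $M(C) := \bigcup_{A \in C} A$. By Proposition~\ref{prop:union_false_twins} these unions lie in $\mathcal{F}_{\alpha,\beta}(X)$, and they are pairwise incomparable: if $M(C_1) \subseteq M(C_2)$ with $C_1 \ne C_2$, then every $A \in C_1$ is an $(\alpha+\beta+1)$-subset of the module $M(C_2)$ and hence belongs to $C_2$, forcing $C_1 = C_2$. So the $M(C)$'s are exactly the maximal elements of $\mathcal{F}_{\alpha,\beta}(X)$ of size at least $\alpha+\beta+1$, with isolated vertices added as singleton modules when needed.

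The central efficiency observation is that the signature $\sigma(A) := \{y \in Y :\, y \in N_\alpha(A)\}$ of an $(\alpha+\beta+1)$-subset $A = T \cup \{x\}$ is determined by the profile $f_T(y) := |N(y) \cap T|$ and the single bit $[x \in N(y)]$; by Lemma~\ref{lem:comptages}.\ref{lem:comptages.4} the only alternative for $y$ is $\overline{N}_\beta(A)$. Explicitly, $y \in N_\alpha(T \cup \{x\})$ if and only if $f_T(y) \geq \beta+1$, or $f_T(y) = \beta$ and $x \in N(y)$. Partitioning $Y$ into $Y^+(T) \sqcup Y^=(T) \sqcup Y^-(T)$ according to whether $f_T(y) > \beta$, $= \beta$, or $< \beta$, we obtain the compact formula $\sigma(T \cup \{x\}) = Y^+(T) \cup (N(x) \cap Y^=(T))$.

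The algorithm then enumerates the $O(n^{\alpha+\beta})$ subsets $T \subseteq X$ of size $\alpha + \beta$. For each $T$, it computes $f_T$ and the tri-partition $(Y^+(T), Y^=(T), Y^-(T))$ in $O(n+m)$ time by scanning the adjacency lists of the vertices of $T$. Then, for each of the $O(n)$ extensions $x \in X \setminus T$, it refines a global partition of $(\alpha+\beta+1)$-tuples by the cut $N(x) \cap Y^=(T)$ anchored at $Y^+(T)$, using a partition-refinement scheme in the spirit of~\cite{HabibPV99}. Tuples landing in the same bucket form an equivalence class $C$, from which $M(C)$ is accumulated. Across all $T$ the total work is $O(n^{\alpha+\beta}(n+m))$, the refinement amortizing to $O(1)$ per tuple; outputting the $M(C)$'s adds an $O(n)$ overhead.

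The main obstacle is the design of this global partition-refinement structure: tuples $T_1 \cup \{x_1\}$ and $T_2 \cup \{x_2\}$ with unrelated bases may share a signature and therefore must end up in the same class, so the refinement has to coordinate across all outer-loop iterations without ever materializing signatures explicitly. A secondary subtlety is that signature equality is in general only a necessary condition for being twins; certifying that a candidate $M(C)$ is itself an $(\alpha,\beta)$-module reduces, via Proposition~\ref{prop:union_false_twins} together with Lemma~\ref{lem:comptages}, to checking that every $(\alpha+\beta+1)$-subset of $M(C)$ carries the same signature, an invariant that the refinement scheme maintains by construction.
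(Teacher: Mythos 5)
Your first phase---enumerating the $(\alpha+\beta+1)$-tuples of $X$, computing each tuple's signature (which $y\in Y$ fall in $N_\alpha$ versus $\overline{N}_\beta$, using Lemma~\ref{lem:comptages}.\ref{lem:comptages.4}), and partitioning the tuples accordingly by refinement---matches the first phase of the paper's proof, and your trick of factoring the signature through a base $T$ of size $\alpha+\beta$ is a reasonable way to meet the time bound for that phase. The gap is in what you do next: you take $M(C)=\bigcup_{A\in C}A$ over each class and declare it a maximal element of ${\cal F}_{\alpha,\beta}(X)$, citing Proposition~\ref{prop:union_false_twins}. This fails, because errors accumulate across tuples: if $y\in N_\alpha(A)$ and $y\in N_\alpha(B)$, then $y$ may have up to $\alpha$ non-neighbours in each of $A$ and $B$, hence up to $2\alpha$ in $A\cup B$, so the union of a class need not be an $(\alpha,\beta)$-module at all. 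Concretely, take $\alpha=1$, $\beta=0$, $X=\{x_1,\dots,x_5\}$, $Y=\{y\}$ with $y$ adjacent to $x_1,x_2,x_3$ only. Every pair containing at least one of $x_1,x_2,x_3$ has signature ``$y\in N_1$'', so your refinement puts them all in one bucket whose union is $X$; but $y$ has $3$ neighbours in $X$ with $0<3<|X|-1=4$, so $X$ is not a $(1,0)$-module. The actual maximal elements here are $\{x_1,x_2,x_3,x_4\}$, $\{x_1,x_2,x_3,x_5\}$ and $\{x_4,x_5\}$: your algorithm outputs a non-module and misses two overlapping maximal modules. (The same example shows that the pairwise false-twin relation is not transitive---$\{x_1,x_2\}$ is a twin of both $\{x_1,x_4\}$ and $\{x_1,x_5\}$, which are not twins of each other---so ``the'' twin equivalence classes you start from are not well defined either; your incomparability argument also presupposes that $M(C_2)$ is a module, which is exactly the unproven claim.)

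The correct statement, which your last paragraph gestures at, is that $M$ is an $(\alpha,\beta)$-module if and only if \emph{all} of its $(\alpha+\beta+1)$-subsets carry the same signature; but this invariant is not ``maintained by construction'' by a refinement that only buckets the enumerated tuples by their own signatures---it says nothing about the other $(\alpha+\beta+1)$-subsets of the union (in the example, $\{x_4,x_5\}\subseteq X$ has the opposite signature). The paper's proof uses the signature partition only to \emph{localize} the search (its Claim: no module contains tuples from two different signature classes), and then does the essential extra work you are missing: it records, for each pair $(y,A)$, the set $\lambda(y,A)$ of error vertices of $A$ relative to $y$, and characterizes a maximal element of ${\cal F}_{\alpha,\beta}(X)$ as a \emph{maximal union of tuples within one class subject to the per-vertex budgets} $|\bigcup_{A\in F}\lambda(y,A)|\le\alpha$ (resp.\ $\le\beta$) for every $y$, generated incrementally over increasing $(\alpha,\beta)$. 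This is also why there can be several, mutually overlapping, maximal elements per signature class, which no partition-into-buckets scheme can produce one-per-bucket.
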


\begin{proof}
To do so, we first build an auxiliary bipartite graph, $G'=( {\cal A}, Y, E(G'))$,
which represents the labelled incidence graph of the $(\alpha + \beta +1)$-tuples of vertices of $X$. 
The set of vertices of $G'$ is  thus the set 
$\cal A$  of these $(\alpha + \beta +1)$-tuples,

By Lemma \ref{lem:comptages}.4, we know that every such tuple $T$ yields a partition of $Y$ into $N_{\alpha}(T)$ and $\overline{N}_{\beta}(T)$. The set of edges of $G'$ is then defined by setting,
for every $T\in {\cal A}$ and $y \in Y$,
$$Ty \in E(G') \text { if and only if } y \in N_{\alpha}(T),$$ 
which implies
$$Ty \notin E(G') \text{ if and only if } y \in \overline{N}_{\beta}(T).$$
Since every vertex in $Y$ belongs to at most $O(n^{\alpha + \beta})$ tuples from $\cal A$, the number of edges in $E(G')$ is in $O(m\cdot n^{\alpha + \beta})$. 

Given the auxiliary graph $G'$, we now partition $\cal A$ into false twins. 
To this aim, we use every vertex in $Y$ to refine $\cal A$ with respect to $(\alpha, \beta)$-neighbourhood. 
This can be done in $O(n^{\alpha + \beta +1}+n^{\alpha + \beta}\cdot m)$ time, using standard partition refinement techniques~\cite{HabibPV99}. 
  
Let $Q = \{{\cal A}_1, \dots {\cal A}_k\}$ be such a partition. 
We prove the following claim.

\medskip
\noindent
\textbf{Claim:} {\it No element of ${\cal F}_{\alpha, \beta}$ can contain two $(\alpha + \beta +1)$-tuples from different parts of $Q$.}

\begin{proof}
Let $A_i \in {\cal A}_i$, $A_j \in {\cal A}_j$, with $i \neq j$,
and $S$ be a subset of $X$ such that $A_i \cup A_j \subseteq S$.
Since $i \neq j$, there is a vertex $y \in Y$ such that (w.l.o.g.)  $A_i \in N_{\alpha}(y)$ and $A_j \in \overline{N}_{\beta}(y)$. 
Hence we have
$$ |A_i| -\alpha \leq |S \cap N(y)|  \leq |S| -|A_j| +\beta,$$
which gives
$$\beta +1 \leq |S \cap N(y)| \leq |S| - \alpha -1,$$
and thus $y$ is an $(\alpha, \beta)$-splitter for $S$.
\qed
\end{proof}

Therefore, to find the maximal elements of ${\cal F}_{\alpha, \beta}$, 
we can restrict the search to the ${\cal A}_i$'s. 
Let us now examine how to generate them. To this aim, we define a labelling $\lambda$ that assigns to each ordered pair $(y,A)$, with $y\in Y$ and $A\in {\cal A}$, a subset of $A$ as follows.

\begin{itemize}
\item If $yA \in E(G')$ and $a_1, \dots, a_k$,  $k\leq \alpha$, are the vertices from $A$ non adjacent to $y$, then we set $\lambda(y,A)= \{a_1, \dots, a_k\}$.
\vskip 2ex

\item Symmetrically, if $yA \notin E(G')$ and $a_1, \dots, a_h$, $h\leq \beta$, are the vertices from $A$ adjacent to $y$, then we set $\lambda(y,A)=\{a_1, \dots, a_h\}$.
\end{itemize}
This labelling can be done while constructing the graph $G'$.

Then, a maximal element $F$ of ${\cal F}_{\alpha, \beta}$ is just a maximal  union of elements of some ${\cal A}_i$, $1\le i\le k$, satisfying the following:
 
\begin{itemize}
\item For every vertex $y\in Y$,
\begin{itemize}
\item if every element of ${\cal A}_i$ is adjacent to $y$,
   then  $|\cup_{A \in F} \lambda(y,A) | \leq \alpha$, 
\item otherwise, $|\cup_{A \in F} \lambda(y,A) | \leq \beta$. 
\end{itemize}
\end{itemize} 

Note that all vertices in ${\cal A}_i$ are false twins, since the graph $G'$ is bipartite, and therefore connected the same way to $Y$.

To produce these maximal sets, we start with $\alpha=\beta=0$, in which case the only maximal module has an empty label. Let $M_0$ denote this module and ${\cal M}_{0,0}=\{M_0\}$ denote the set of maximal elements at this step.
We then increase either $\alpha$ or $\beta$ by one, and recursively compute the new set, ${\cal M}_{\alpha+1,\beta}$ or ${\cal M}_{\alpha,\beta+1}$, of maximal elements from the previously computed set ${\cal M}_{\alpha,\beta}$
(note that every maximal $(\alpha, \beta)$-module is contained in a maximal $(\alpha+1, \beta)$-module and in a maximal $(\alpha, \beta+1)$-module as well). 

For $\alpha=\beta=0$, $M_0$ is unique. 
For $\alpha=\beta=1$, there are at most $|Y|^{2}$ maximal  $(1, 1)$-modules in ${\cal F}_{1,1}$.
Hence, there are at  most  $|Y|^{\alpha +\beta}$  maximal $(\alpha, \beta)$-modules in   ${\cal F}_{\alpha, \beta}$. 
 This computation is therefore bounded in the whole by
 $$(\alpha+ \beta) (\Sigma_{i=0}^{i=\alpha+\beta} |Y|^{i}) \cdot ( |X|^{\alpha + \beta +1}),$$
which is in the order of  
$O((|Y|^{\alpha +\beta+1}) \cdot ( |X|^{\alpha + \beta +1}))$.
\qed 
\end{proof}

Note that these maximal elements of ${\cal F}_{\alpha, \beta}(X)$ may overlap.
It remains to test the quality of the covering obtained on some real data graphs. 
We leave this as something to explore for data analysts.

\section{Conclusion}
\label{sec:conclusion}
Before we conclude, we want first to expose the reader to a different way to approach the approximation of modules.

\subsection{$k$-splitter Modules: An Alternate Approximation}

Another natural way to approach the problem of approximating modules is by restricting the number of splitters a module can have. 
Recall that in the standard modular decomposition setting, a splitter of a module $M$ in a graph $G=(V,E)$  is a vertex $v \in V \setminus M$ such that there exists at least two vertices $a,b \in M$ with $av \in E$ and $bv \notin E$. 
By restricting the number of splitters outside a module, we get the following definition -- which intuitively just allows at most $k$ ``errors'' in terms of connectivity. 

\begin{definition}
    For a given graph $G=(V,E)$, a subset $M$ of $V$ is a \textbf{\boldmath$k$-splitter module} if 
    $M$ has at most $k$ splitters.
\end{definition}

Notice then that by setting $k=0$ in the above definition, we recover the standard modular decomposition setting~\cite{HabibP10}, i.e., for every $x \in V \setminus M$,
either $M \cap N(x)=\emptyset$ or $M \cap N(x)=M$. So, for this approximate setting, we will necessarily only consider the case $k < |V(G)|-1$.

We begin with some obvious remarks.

\begin{proposition} If $M$ is a $k$-splitter for $G$, then the following holds.
    \begin{enumerate}
    \item $M$ is a $k'$-splitter module for $G$, for every $k'\geq k$.
    \item $M$ is a $k$-splitter module for $\overline{G}$.
    \item If $s$ is a splitter for $M$, then $s$ is also a splitter for every set $M' \supseteq M$ with $s \notin M'$.
    \end{enumerate}
\end{proposition}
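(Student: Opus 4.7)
The plan is to verify each of the three items directly from the definition of a $k$-splitter module, since all three are essentially immediate consequences of the splitter condition. None of the items requires any deep structural argument, so the main subtlety is just to be clean about which set of splitters we are comparing against the bound $k$.

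For item~1, I would simply observe that the property ``$M$ has at most $k$ splitters'' monotonically implies ``$M$ has at most $k'$ splitters'' whenever $k' \geq k$; nothing about $M$ or $G$ needs to be changed. For item~2, the key point is that the set of splitters of $M$ in $G$ coincides with the set of splitters of $M$ in $\overline{G}$: a vertex $v \in V \setminus M$ is a splitter for $M$ in $G$ iff there exist $a, b \in M$ with $av \in E(G)$ and $bv \notin E(G)$, and by switching the roles of $a$ and $b$, this is equivalent to the existence of $b, a \in M$ with $bv \in E(\overline{G})$ and $av \notin E(\overline{G})$, i.e., $v$ is a splitter for $M$ in $\overline{G}$. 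So the two splitter sets are literally equal, and the bound $k$ carries over.

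For item~3, I would take a splitter $s$ of $M$ with witnesses $a, b \in M$ satisfying $as \in E$ and $bs \notin E$. For any $M' \supseteq M$ with $s \notin M'$, the two vertices $a, b$ still lie in $M'$ and still witness the splitter condition with respect to $s$, so $s$ is a splitter for $M'$ as well.

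The hardest part, if any, is simply making sure the quantifiers line up correctly, in particular that in item~2 we really are comparing the same numerical bound to two sets that are set-theoretically identical rather than merely of the same cardinality, and that item~3 does not need $s \notin M$ as a hypothesis (which is already built into ``$s$ is a splitter for $M$''). I would present the three items in order as a short three-part proof, each consisting of one or two sentences.
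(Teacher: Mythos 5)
Your proof is correct; the paper states this proposition as a collection of ``obvious remarks'' and gives no proof at all, and your direct verification from the definition of a splitter (monotonicity of the bound, equality of the splitter sets in $G$ and $\overline{G}$ by swapping the roles of the two witnesses, and persistence of witnesses under enlarging $M$) is exactly the intended argument.
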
 

\begin{proposition}
    The family of $k$-splitter modules of a graph $G=(V,E)$ satisfies the following.
    \begin{enumerate}
        \item Every set $A \subseteq V$ with $|A| \leq 1$ or $|A| \geq |V|-k$ is a $k$-splitter module of $G$. 
        (We call such a set $A$ a \textbf{trivial \boldmath$k$-splitter module}.)
        \item For every two $k$-splitter modules $A,B \subseteq V$  of $G$ with $A\cap B\neq \emptyset$, 
        $A \cup B$ is a $2k$-splitter module of $G$.
        \item For every two $k$-splitter modules $A,B \subseteq V$ of $G$  with $A\cap B\neq \emptyset$, 
        $A \cap B$ is a $2k$-splitter module of $G$.
    \end{enumerate}
\end{proposition}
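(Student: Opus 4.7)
The plan is to verify the three items using direct counting arguments based on the definition of a splitter, keeping in mind that every splitter of a set $M$ must lie in $V\setminus M$ and must distinguish at least two elements of $M$. For item~1, I would split into the two cases. If $|A|\leq 1$, the very definition of a splitter requires two distinct elements $a,b\in A$ with different adjacency to the candidate splitter, so no such witness pair exists and $A$ has zero splitters. If $|A|\geq |V|-k$, then $|V\setminus A|\leq k$, and since every splitter of $A$ must lie in $V\setminus A$, there are at most $k$ splitters, as required.

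For item~2, given two $k$-splitter modules $A$ and $B$ with $A\cap B\neq\emptyset$, I would fix any vertex $c\in A\cap B$ and show that every splitter $s$ of $A\cup B$ is already a splitter of $A$ or of $B$. By definition, $s\notin A\cup B$ and there exist $a,b\in A\cup B$ with $as\in E$ and $bs\notin E$. Exactly one of $cs\in E$ or $cs\notin E$ holds; pairing $c$ with whichever of $a,b$ lies on the opposite adjacency side produces two vertices witnessing that $s$ splits either $A$ or $B$ (depending on whether that partner lies in $A$ or $B$). Hence the splitter set of $A\cup B$ is contained in the union of the splitter sets of $A$ and $B$, and is therefore of size at most $2k$.

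For item~3, I would argue symmetrically. Any splitter $s$ of $A\cap B$ must lie outside $A\cap B$, which forces $s\notin A$ or $s\notin B$. The two witnesses $a,b\in A\cap B$ then lie in both $A$ and $B$, so whichever of $A,B$ excludes $s$ is split by $s$. Thus again the splitter set of $A\cap B$ is contained in the union of those of $A$ and $B$, and has at most $2k$ elements.

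The only subtle step is the use of the common vertex $c$ in item~2: the non-empty intersection hypothesis is essential there, since without it a vertex could split $A\cup B$ by having an edge to $A$ and a non-edge to $B$ (or vice versa) without splitting either part on its own. Once this reduction to splitters of $A$ or $B$ is established, all three statements follow by the elementary union bound $|S_A\cup S_B|\leq |S_A|+|S_B|\leq 2k$, where $S_A$ and $S_B$ denote the splitter sets of $A$ and $B$ respectively.
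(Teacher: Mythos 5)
Your proof is correct and follows essentially the same route as the paper's: show that every splitter of $A\cup B$ (resp.\ $A\cap B$) is already a splitter of $A$ or of $B$, then apply the union bound $|S_A\cup S_B|\le 2k$. Your explicit common-vertex argument for item~2 is in fact more detailed than the paper's one-line assertion of that containment, and the rest matches case for case.
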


\begin{proof}\mbox{}
    \begin{enumerate}
        \item This follows from the definition.
        \item There cannot be a splitter of $A\cup B$ that is not a splitter of either $A$ or $B$ since $A\cap B\neq \emptyset$. 
        We get the $2k$ in the worst case, when both $A$ and $B$ have two disjoint sets of $k$ splitters outside of $A\cup B$.
        \item A splitter of $A\cap B$ in $V \setminus (A\cup B)$ is a splitter of both $A$ and $B$.
        A splitter of $A\cap B$ in $A$ is a splitter of $B$ and a splitter of $A\cap B$ in $B$ is a splitter of $A$. 
Therefore, the number of splitters of $A\cap B$ is at most the sum of the numbers of splitters of $A$ and $B$, i.e., $2k$.
    \end{enumerate}
    \qed 
\end{proof}

\begin{proposition}
    If $A$ and $B$ are two non-trivial $k$-splitter modules of a graph $G=(V,E)$, then 
    $A \setminus B$ is a $(k+|A\cap B|)$-module of $G$.
\end{proposition}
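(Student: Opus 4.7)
I note that the proposition as stated reads ``$A\setminus B$ is a $(k+|A\cap B|)$-module'' but from the surrounding context it must mean $(k+|A\cap B|)$-splitter module. My plan is to prove this bound on the number of splitters of $A\setminus B$ by a direct counting argument.

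First I would write $V \setminus (A\setminus B) = (V\setminus A) \cup (A\cap B)$, which partitions the potential splitters of $A\setminus B$ into two groups: those lying in $V\setminus A$ and those lying in $A\cap B$. I would handle these two groups separately.

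For splitters of $A\setminus B$ lying in $V\setminus A$, the key observation is that any such vertex $s$ is automatically a splitter of $A$: if $s\in V\setminus A$ witnesses distinct neighbours/non-neighbours $a,b\in A\setminus B$, then since $A\setminus B\subseteq A$ we have $a,b\in A$ with $sa\in E$, $sb\notin E$, so $s$ splits $A$ too. In particular this accounts for splitters in $B\setminus A$ as well (they lie in $V\setminus A$). Since $A$ is a $k$-splitter module, at most $k$ such vertices exist.

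For splitters in $A\cap B$, I would use the trivial bound: there are at most $|A\cap B|$ vertices in this set, so at most $|A\cap B|$ of them can be splitters of $A\setminus B$. Summing the two contributions yields at most $k+|A\cap B|$ splitters in total, proving that $A\setminus B$ is a $(k+|A\cap B|)$-splitter module of $G$. I do not foresee any real obstacle here; the only subtle point is verifying that splitters in $B\setminus A$ of $A\setminus B$ are necessarily splitters of $A$ (not of $B$), which follows because $A\setminus B$ is contained in $A$ and disjoint from $B$. The non-triviality hypothesis on $A$ and $B$ is not actually needed for the inequality itself, but it is what makes the statement nontrivial (otherwise $A\setminus B$ could be empty or a singleton, in which case it has no splitters at all and the bound holds vacuously).
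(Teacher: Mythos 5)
Your argument is correct and is essentially the paper's own proof: both split the potential splitters of $A\setminus B$ into those in $V\setminus A$ (at most $k$, since any splitter of a set is a splitter of every superset not containing it, here $A$) and those in $A\cap B$ (trivially at most $|A\cap B|$). Your added remark that the statement should read ``$(k+|A\cap B|)$-splitter module'' is a fair reading of what is clearly a typo in the paper.
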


\begin{proof}
    There are at most $k$ splitters of $A \setminus B$ in $V \setminus A$, and at most $|A\cap B|$ splitters of $A \setminus B$  in $A\cap B$.
    \qed 
\end{proof}

\begin{proposition} 
For a graph $G=(V,E)$ and a subset $A \subseteq V$,
      there may exist different  minimal (under inclusion) $k$-splitter modules containing $A$.
\end{proposition}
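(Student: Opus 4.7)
The plan is to prove this proposition by exhibiting an explicit counterexample, since the statement is purely existential: a single graph $G$ with a set $A$ admitting two distinct minimal $k$-splitter modules containing it is enough. The conceptual reason such an example must exist is precisely the earlier observation that the family of $k$-splitter modules is not closed under intersection (the intersection is only guaranteed to be a $2k$-splitter module), so uniqueness of a minimal superset has no structural reason to hold.

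I would take $k=1$, $G$ to be the graph on vertex set $V=\{a,b,c,d\}$ with edge set $E=\{ac,bd\}$, and $A=\{a,b\}$. The first step is to observe that $A$ is not a $1$-splitter module: the vertex $c$ is adjacent to $a$ but not to $b$, and symmetrically $d$ is adjacent to $b$ but not to $a$, so $A$ has two splitters.

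Next, I would check that the two candidate supersets $M_1=\{a,b,c\}$ and $M_2=\{a,b,d\}$ are both $1$-splitter modules. For $M_1$ the only outside vertex is $d$, which sees $b$ but neither $a$ nor $c$, hence is a splitter, so $M_1$ has exactly one splitter. The verification for $M_2$ is symmetric. Finally, I would verify minimality: the only proper subset of $M_i$ that contains $A$ is $A$ itself, which has two splitters and is therefore not a $1$-splitter module. Since $M_1\neq M_2$, we obtain two distinct minimal $1$-splitter modules containing $A$, which proves the proposition.

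There is no real technical obstacle here; the argument is a small bookkeeping check on a four-vertex graph. If one wanted the phenomenon for larger $k$, the construction extends easily by adding further independent ``splitter pairs'' of the same $(ac,bd)$-type, each of which contributes an independent binary choice to the set of minimal $k$-splitter modules containing $A$; this is optional, however, since the four-vertex example already witnesses the stated existence claim.
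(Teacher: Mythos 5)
Your proof is correct: in the four-vertex graph with edges $ac$ and $bd$, the set $A=\{a,b\}$ has the two splitters $c$ and $d$, each of $\{a,b,c\}$ and $\{a,b,d\}$ has exactly one splitter (namely the remaining vertex), and both are minimal over $A$ since $A$ itself is not a $1$-splitter module. The paper proves the statement by a generic argument rather than an example: it supposes $A$ has $k+1$ splitters and asserts that adjoining any single one of them yields a $k$-splitter module, so the $k+1$ resulting sets are distinct minimal $k$-splitter modules containing $A$. Your construction is essentially a concrete instantiation of that idea for $k=1$, but it is actually tighter than the paper's version: the paper's claim that $A\cup\{s\}$ is always a $k$-splitter module is not automatic, because a vertex that is not a splitter of $A$ (say, one adjacent to all of $A$ but not to $s$) can become a splitter of $A\cup\{s\}$, so in general the splitter count need not drop to $k$. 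Your explicit example avoids this issue entirely by direct verification, at the cost of only witnessing the case $k=1$; your remark that independent copies of the $(ac,bd)$ gadget handle larger $k$ restores the generality. One cosmetic point: in your example $|M_i|=|V|-k$, so $M_1$ and $M_2$ are trivial $k$-splitter modules in the paper's terminology; the proposition does not exclude this, but padding the graph with a couple of isolated vertices would make the example non-degenerate at no extra cost.
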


\begin{proof}
    Suppose $A$ admits $k+1$ splitters. 
    Every one of these $k+1$ splitters can be added  to $A$ in order to obtain a $k$-splitter module.
    \qed 
\end{proof}

In conclusion, this approximation variation is not closed under intersection, unfortunately. There was no way to define some sort of convexity, and thus no easy way to define a closure operator with this notion, which is why we have focused our study on $(\alpha, \beta)$-modules instead.

\subsection{Conclusions and Perspectives}

In this work, we introduce a new notion of modular decomposition relaxation. 
This notion of $(\alpha, \beta)$-module yields many interesting questions, both 
from a theoretical or practical point of view. 
Standard modular decomposition is too restrictive for graphs that arise from real data; 
do $(\alpha, \beta)$-modules indeed often arise in this setting? 
We believe this relaxation of modular decomposition can definitely find applications in practice. 
On the theory side, this new combinatorial decomposition may help to better understand graph structuration that can be obtained when grouping vertices that have similar neighbourhood. Such an idea has been successfully used with the notion of twin-width~\cite{bonnet2020twinwidth,bonnet2021twin}.
 Furthermore it is related to fundamental combinatorial objects as for example matching cutsets and their generalization. Our  work leaves many interesting questions open (five open questions and one conjecture), as the study of $(\alpha, \beta)$-prime graphs for instance. 
We have also exhibited new classes of graphs, such as the $(1,1)$-cographs, that contain many interesting subclasses and on which it would be interesting to consider the Erd\"{o}s-Hajnal conjecture \cite{ErdosH89}, which holds for cographs and is closed under substitution~\cite{BKC19}.
Our work could also be related to the very interesting new generalization of cographs introduced in \cite{bonnet2020twinwidth}.

We have also presented polynomial time algorithms that we believe could all be improved. 
It is important however to keep in mind that since the number of unions of overlapping minimal modules can be exponential, it is thus hard to compute from the minimal $(\alpha, \beta)$-modules some hierarchy of modules.
However, perhaps a better way to decompose a graph is to first compute the families of minimal modules with small values of $\alpha$ and $\beta$, and then consider a hierarchy of overlapping families.

\bibliography{decomp}
\end{document}